\newenvironment{proof}[1][Proof]{\textbf{#1.} }{\ \rule{0.5em}{0.5em}
}
\newtheorem{thm}{Theorem}[section]
\newtheorem{cor}[thm]{Corollary}
\newtheorem{lem}[thm]{Lemma}
\newtheorem{lemma}[thm]{Lemma}
\newtheorem{prop}[thm]{Proposition}
\newtheorem{Assumption}[thm]{Assumption}
\newtheorem{rem}{Remark}[section]}
\def\eqd{\,{\buildrel \mathrm{(d)} \over =}\,}
\newcommand{\ba}{\begin{align}}
\newcommand{\ea}{\begin{align}}
\newcommand{\be}{\begin{equation}}
\newcommand{\ee}{\end{equation}}
\newcommand{\bq}{\begin{eqnarray}}
\newcommand{\eq}{\end{eqnarray}}
\newcommand{\half}{\frac{1}{2}}
\newcommand{\bs}{\bigskip}
\newcommand{\nind}{\noindent}
\newcommand{\nn}{\nonumber}
\newcommand{\gm}{\gamma}
\newcommand{\lb}{\lbrace}
\newcommand{\rb}{\rbrace}
\newcommand{\sk}{\smallskip}
\newcommand{\la}{\langle}
\newcommand{\ra}{\rangle}
\newcommand{\ph}{\varphi  }
\newcommand{\e}{\varepsilon}
\newcommand{~}{\,\,\,\,=\,\,\,\,}
\newcommand{\lee}{\,\,\,\,\, \le \,\,\,\,\,}
\newcommand{\baa}{\begin{align}}
\newcommand{\eaa}{\end{align}}
\newcommand{\lm}{\lambda }
\newcommand{\Lm}{\Lambda  }
\newcommand{\Gm}{\Gamma }
\newcommand{\al}{\alpha  }
\newcommand{\ul}{\underline}
\newcommand{\mc}{\mathcal}
\newcommand{\Ex}{\mathbb{E}}
\newcommand{\Pb}{\mathbb{P}}
\newcommand{\Qb}{\mathbb{Q}}
\newcommand{\bc}{\begin{center}}
\newcommand{\ec}{\end{center}}
\begin{document}

\title{\textbf{Small-time, large-time and $H\to 0$ asymptotics \\
for the Rough Heston model}}

\author{Martin Forde
\and
Stefan Gerhold\thanks{TU Wien,
Financial and Actuarial Mathematics,
Wiedner Hauptstra{\ss}e 8/105-1,
A-1040 Vienna, Austria ({\tt
sgerhold@fam.tuwien.ac.at})}
\and
Benjamin Smith\thanks{Dept. Mathematics, King's College London, Strand, London, WC2R 2LS  ({\tt
Benjamin.Smith@kcl.ac.uk})}
}
\maketitle

\begin{abstract}
We characterize the behaviour of the Rough Heston model introduced by Jaisson\&Rosenbaum \cite{JR16} in the small-time, large-time and $\al \to\half$ (i.e. $H\to 0$) limits.  We show that the short-maturity smile scales in qualitatively the same way as a general rough stochastic volatility model (cf.\ \cite{FZ17}, \cite{FGP18a} et al.), and the rate function is equal to the Fenchel-Legendre transform of a simple transformation of the solution to the same Volterra integral equation (VIE) that appears in \cite{ER19}, but with the drift and mean reversion terms removed.  The solution to this VIE satisfies a space-time scaling property which means we only need to solve this equation for the moment values of $p=1$ and $p=-1$ so the rate function can be efficiently computed using an Adams scheme or a power series, and we compute a power series in the log-moneyness variable for the asymptotic implied volatility which yields tractable expressions for the implied vol skew and convexity which is useful for calibration purposes.  We later derive a formal saddlepoint approximation for call options in the \cite{FZ17} large deviations regime which goes to higher order than previous works for rough models.  Our higher order expansion captures the effect of both drift terms, and at leading order is of qualitatively the same form as the higher order expansion for a general model which appears in \cite{FGP18a}.  The limiting asymptotic smile in the large-maturity regime is obtained via a stability analysis of the fixed points of the VIE, and is the same as for the standard Heston model in \cite{FJ11}.  Finally, using L\'{e}vy's convergence theorem, we show that the log stock price $X_t$ tends weakly to a non-symmetric random variable $X^{(\half)}_t$ as $\al \to \half$ (i.e. $H\to 0$) whose mgf is also the solution to the Rough Heston VIE with $\al=\half$, and we show that $X^{(\half)}_t/\sqrt{t}$ tends weakly to a non-symmetric random variable as $t\to 0$, which leads to a non-flat non-symmetric asymptotic smile in the Edgeworth regime, where the log-moneyness  $z=k \sqrt{t}$ as $t\to 0$, and we compute this asymptotic smile numerically.  We also show that the third moment of the log stock price tends to a finite constant as $H\to 0$ (in contrast to the Rough Bergomi model discussed in \cite{FFGS20} where the skew flattens or blows up) and the $V$ process converges on pathspace to a random tempered distribution, which has the same law as the $H=0$ hyper-rough Heston model discussed in Jusselin\&Rosenbaum\cite{JR18} and Jaber\cite{Jab19}.\footnote{We thank Peter Friz, Eduardo Abi Jaber, Martin Larsson and Martin Keller-Ressel for helpful discussions.  S.\ Gerhold acknowledges financial support from the
Austrian Science Fund (FWF) under grant P 30750.}
\end{abstract}

\section{Introduction}


 \quad \cite{JR16} introduced the Rough Heston stochastic volatility model and show that the model arises naturally as the large-time limit of a high frequency market microstructure model driven by two nearly unstable self-exciting Poisson processes (otherwise known as Hawkes process) with a Mittag-Leffler kernel which drives buy and sell orders (a Hawkes process is a generalized Poisson process where the intensity is itself stochastic and depends on the jump history via the kernel).  The microstructure model captures the effects of endogeneity of the market, no-arbitrage, buying/selling asymmetry and the presence of metaorders. \cite{ER19} show that the characteristic function of the log stock price for the Rough Heston model is the solution to a fractional Riccati equation which is non-linear (see also \cite{EFR18} and \cite{ER18}), and the variance curve for the model evolves as $d\xi_u(t)=\kappa(u-t)\sqrt{V_t} dW_t$, where $\kappa(t)$ is the kernel for the $V_t$ process itself multiplied by a \textit{Mittag-Leffler} function (see Proposition \ref{prop:Mittag} below for a proof of this).  Theorem 2.1 in \cite{ER18} shows that a Rough Heston model conditioned on its history up to some time is still a Rough Heston model, but with a time-dependent mean reversion level $\theta(t)$ which depends on the history of the $V$ process.  Using Fr\'{e}chet derivatives, \cite{ER18} also show that one can replicate a call option under the Rough Heston model if we assume the existence a tradeable variance swap, and the same type of analysis can be done for the Rough Bergomi model using the Clark-Ocone formula from Malliavin calculus.
See also \cite{DJR19} who introduce the super Rough Heston model to incorporate the strong Zumbach effect as the limit of a market microstructure model driven by quadratic Hawkes process (this model is no longer affine and thus not amenable to the VIE techniques in this paper).


\sk
\sk
\cite{GK19} consider the more general class of affine forward variance (AFV) models of the form $d\xi_u(t)=\kappa(u-t)\sqrt{V_t} dW_t$  (for which the Rough Heston model is a special case).  They show that AFV models arise naturally as the weak limit of a so-called affine forward intensity (AFI) model, where order flow is driven by two generalized Hawkes-type process with an arbitrary jump size distribution, and we exogenously specify the evolution of the conditional expectation of the intensity at different maturities in the future, akin to a variance curve model.  The weak limit here involves letting the jump size tends to zero as the jump intensity tends to infinity in a certain way, and one can argue that an AFI model is more realistic than the bivariate Hawkes model in \cite{ER19}, since the latter only allows for jumps of a single magnitude (which correspond to buy/sell orders).  Using martingale arguments (which do not require considering a Hawkes process as in the aforementioned El Euch\&Rosenbaum articles) they show that the mgf of the log stock price for the affine variance model satisfies a convolution Riccati equation, or equivalently is a non-linear function of the solution to a VIE.

\sk

\cite{GGP19} use comparison principle arguments for VIEs to show that the moment explosion time for the Rough Heston model is finite if and only if it is finite for the standard Heston model.  \cite{GGP19} also establish upper and lower bounds for the explosion time, and show that the critical moments are finite for all maturities, and formally derive refined tail asymptotics for the Rough Heston model using Laplace's method.  A recent talk by M.Keller-Ressel (joint work with Majid) states an alternate upper bound for the moment explosion time for the Rough Heston model, based on a comparison with a (deterministic) time-change of the standard Heston model, which they claim is usually sharper than the bound in \cite{GGP19}.

\sk
\cite{JP20} compute a small-time LDP on pathspace for a more general class of stochastic Volterra models in the same spirit as the classical Freidlin-Wentzell LDP for small-noise diffusion.  More specifically, for a simple Volterra system of the form
\bq
Y_t &=& Y_0\,+\, \int_0^t K_2(t-s) \zeta(Y_s) dW_s \label{eq:YY} \,
\eq
we have the corresponding deterministic system:
\bq
Y_t &=& Y_0\,+\, \int_0^t K_2(t-s) \zeta(Y_s) v_s ds \nn \,
\eq
where $v \in L^2([0,T])$.  When $K_2(t)=const.t^{H-\half}$ the right term is proporitional to the $\al$-th fractional integral of $\zeta v$ (where $\al=H+\half$), and in this case \cite{JP20} show that $Y_{\e (.)}$ satisfies an LDP as $\e \to 0$ with rate function
\bq
I_Y(\ph) &=&\half const. \times \int_0^T (\frac{D^{\al}(\ph(.)-\ph(0))(t)}{\zeta(\ph(t))})^2 dt \nn
\eq
(see Proposition 4.3 in \cite{JP20}) in terms of the rate function of the underlying Brownian motion which is well known from Schilder's theorem (one can also add drift terms into \eqref{eq:YY} which will not affect $I_Y$).  The corresponding LDP for the log stock price is then obtained using the usual contraction principle method, so the rate function has a variational representation, and does not involve Volterra integral equations. 
\sk

Corollary 7.1 in \cite{FGP18a} provides a sharp small-time expansion in the \cite{FZ17} large deviations regime (valid for $x$-values in some interval) for a general class of Rough Stochastic volatility models using regularity structures, which provides the next order correction to the leading order behaviour obtained in \cite{FZ17}, and some earlier intermediate results in Bayer et al. \cite{BFGHS18}.  
\cite{FSV19} derive formal small-time Edgeworth expansions for the Rough Heston model by solving a nested sequence of linear VIEs.  The Edgeworth-regime implied vol expansions in \cite{EFGR19} and  \cite{FSV19} both include an additional $O(T^{2H})$ term, which itself contains an at-the-money, convexity and higher order correction term, which are important effects to capture for these approximations to be useful in practice.

\sk

 In this article, we establish small-time and large-time large deviation principles for the Rough Heston model, via the solution to a VIE, and we translate these results into asymptotic estimates for call options and implied volatility.  The solution to the VIE satisfies a certain scaling property which means we only have to solve the VIE for the moment values of $p=+1$ and $-1$, rather than solving an entire family of VIEs.  Using the Lagrange inversion theorem, we also compute the first three terms in the power series for the asymptotic implied volatility $\hat{\sigma}(x)$.  We later derive formal asymptotics for the small-time moderate deviations regime and a formal saddlepoint approximation for European call options in the original \cite{FZ17} large deviations regime which goes to higher order than previous works for rough models, and captures the effect of the mean reversion term and the drift of the log stock price, and we discuss practical issues and limitations of this result. Our higher order expansion is of qualitatively the same form as the higher order expansion for a general model in Theorem 6 in \cite{FGP18a} (their expansion is not known to hold for large $x$-values since in their more general setup there are additional complications with focal points, proving non-degeneracy etc.).  For the large time, large log-moneyness regime, we show that the asymptotic smile is the same as for the standard Heston model as in \cite{FJ11}, and we briefly outline how one could go about computing the next order term using a saddlepoint approximation, in the same spirit as \cite{FJM11}.

\sk
\sk

 In the final section, using L\'{e}vy's convergence theorem and result from \cite{GLS90} on the continuous dependence of VIE solutions as a function of a parameter in the VIE, we show that the log stock price $X_t$ (for $t$ fixed) tends weakly as $\al \to \half$  to a random variable $X^{(\half)}_t$ whose mgf is also the solution to the Rough Heston VIE with $\al=\half$ and whose law is non-symmetric when $\rho \ne 0$.  From this we  show that $X^{(\half)}_t/\sqrt{t}$ tends weakly to a non-symmetric random variable as $t\to 0$, which leads to a non-trivial asymptotic smile in the Edgeworth (or central limit theorem) regime.
where the log-moneyness scale as $z=k \sqrt{t}$ as $t\to 0$.  We also show that the third moment of the log stock price for the driftless version of the model tends to a finite constant as $H\to 0$ (in constrast to the Rough Bergomi model discussed in \cite{FFGS20} where the skew flattens or blows up depending on the vol-of-vol parameter $\gm$) and using the expression in \cite{JLP19} for $\Ex(e^{\int_0^T f(T-t) V_t dt })$, we show that $V$ converges to a random tempered distribution whose characteristic functional also satisfies a non-linear VIE
and (from Theorem 2.5 in \cite{Jab19}) this tempered distribution has the same law as the $H=0$ hyper rough Heston model.


\section{Rough Heston and other variance curve models - basic properties}

\sk
In this section, we recall the definition and basic properties and origins of the Rough Heston model, and more general affine and non-affine forward variance models.  Most of the results in this section are given in various locations in \cite{ER18},\cite{ER19} and \cite{GK19}, but for pedagogical purposes we found it instructive to collate them together in one place.

\sk
\sk
Let $(\Omega,\mc{F},\Pb)$ denote a probability space with filtration $(\mc{F}_t)_{t \ge 0}$ which satisfies the usual conditions, and consider the Rough Heston model for a log stock price process $X_t$ introduced in \cite{JR16}:
\begin{align}
dX_t &= -\half V_t dt+\sqrt{V_t} dB_t
\nn \\
V_t &= V_0+\frac{1}{\Gamma(\al)} \int_0^t (t-s)^{\al-1}  \lm(\theta-V_s) ds  +\frac{1}{\Gamma(\al)} \int_0^t (t-s)^{\al-1}  \nu \sqrt{V_s}dW_s \label{eq:Vt}
\end{align}
for $\al \in (\half,1)$, $\theta>0$, $\lm \ge 0$ and $\nu>0$, where $W$, $B$ are two $\mc{F}_t$-Brownian motions with correlation $\rho \in (-1,1)$.  We assume $X_0=0$ and zero interest rate without loss of generality, since the law of $X_t-X_0$ is independent of $X_0$.


\subsection{Computing $\Ex(V_t$)}
\sk
\sk

\begin{prop}
\be
\Ex(V_t) = V_0-(V_0-\theta)\int_0^t f^{\al,\lm}(s) ds \label{eq:Ben}
\ee
where $f^{\al,\lm}(t):=\lambda t^{\alpha-1}E_{\alpha,\alpha}(-\lambda t^\alpha)$, and $E_{\al,\beta}(z):=\sum_{n=0}^{\infty}\frac{z^n}{\Gamma(\al n+\beta)}$ denotes the \textit{Mittag-Leffler function}
\end{prop}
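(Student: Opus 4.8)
The plan is to take expectations on both sides of \eqref{eq:Vt}, reduce the problem to a scalar \emph{linear} Volterra integral equation for $m(t):=\Ex(V_t)$, and then solve that equation explicitly in terms of the Mittag-Leffler function.

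First I would handle the stochastic term. For fixed $t$, the last term of \eqref{eq:Vt} is an ordinary It\^o integral $\int_0^t g(t,s)\,dW_s$ with $s\mapsto g(t,s)=\frac{\nu}{\Gamma(\al)}(t-s)^{\al-1}\sqrt{V_s}$ adapted; since $\al\in(\half,1)$ we have $2\al-2\in(-1,0)$, so the singularity $(t-s)^{2\al-2}$ at $s=t$ is integrable, and together with the a priori bound $\sup_{s\le t}\Ex(V_s)<\infty$ for the Rough Heston variance process (from the standard existence and moment estimates, see \cite{ER19}) and Tonelli's theorem this gives $\Ex\int_0^t g(t,s)^2\,ds<\infty$, so the term has zero mean. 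The same Tonelli argument interchanges $\Ex$ with the drift integral, so taking expectations in \eqref{eq:Vt} yields
\be
m(t) \;=\; V_0 \,+\, \frac{\lm}{\Gamma(\al)}\int_0^t (t-s)^{\al-1}\bigl(\theta-m(s)\bigr)\,ds ,
\ee
and, writing $u:=m-\theta$, this becomes $u(t)=(V_0-\theta)-\frac{\lm}{\Gamma(\al)}\int_0^t(t-s)^{\al-1}u(s)\,ds$, a linear Volterra equation of the second kind with the weakly singular, integrable kernel $-\lm t^{\al-1}/\Gamma(\al)$, which by standard theory (e.g.\ \cite{GLS90}) has a unique locally bounded solution.

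To identify that solution I would verify that $u(t)=(V_0-\theta)E_{\al,1}(-\lm t^\al)$ solves the equation, so that by uniqueness it \emph{is} the solution; this reduces to the classical fact that $y(t)=E_{\al,1}(-\lm t^\al)$ satisfies $y=1-\lm I^\al y$, where $I^\al$ is the Riemann-Liouville fractional integral, which is checked by term-by-term integration of the defining series using $\Gamma(\al n+1)=\al n\,\Gamma(\al n)$. It then remains to reconcile the two forms: differentiating the series gives $\frac{d}{dt}E_{\al,1}(-\lm t^\al)=-\lm t^{\al-1}E_{\al,\al}(-\lm t^\al)=-f^{\al,\lm}(t)$, and since $E_{\al,1}(0)=1$ this yields $\int_0^t f^{\al,\lm}(s)\,ds=1-E_{\al,1}(-\lm t^\al)$, whence $\Ex(V_t)=\theta+(V_0-\theta)E_{\al,1}(-\lm t^\al)=V_0-(V_0-\theta)\int_0^t f^{\al,\lm}(s)\,ds$, as claimed. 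Alternatively one can bypass the guess-and-check step by Laplace transforming the Volterra equation: the kernel has transform $\lm z^{-\al}$, so $\hat m(z)=(V_0 z^{\al-1}+\lm\theta z^{-1})/(z^\al+\lm)$, which inverts to the same expression once $z^{\al-1}/(z^\al+\lm)$ is recognised as the transform of $E_{\al,1}(-\lm t^\al)$.

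The only genuinely delicate point is the a priori integrability invoked in the first step to justify discarding the martingale term and applying Fubini; once that is granted, everything else is routine manipulation of the Mittag-Leffler series together with the uniqueness theorem for linear Volterra equations.
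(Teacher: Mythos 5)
Your argument is correct, and it reaches the same linear Volterra equation for $m(t)=\Ex(V_t)$ as the paper does, but it solves that equation by a different mechanism. The paper works with the resolvent of the kernel $k(t)=\lm t^{\al-1}/\Gamma(\al)$: it defines $r$ by $r=k-k*r$, checks by substitution that $f=(V_0-\theta)-r*(V_0-\theta)$ solves $f=(V_0-\theta)-k*f$, and then computes $\hat r(z)=\lm/(z^\al+\lm)$ so that $r=f^{\al,\lm}$ appears directly under the integral sign in \eqref{eq:Ben}. You instead verify the ansatz $u(t)=(V_0-\theta)E_{\al,1}(-\lm t^\al)$ via the identity $y=1-\lm I^\al y$ for $y=E_{\al,1}(-\lm t^\al)$, invoke uniqueness for the linear VIE, and then pass to the stated form through $\frac{d}{dt}E_{\al,1}(-\lm t^\al)=-f^{\al,\lm}(t)$; your alternative of Laplace-transforming the equation for $m$ itself is essentially the paper's computation applied one step earlier. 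What your route buys is (i) an explicit justification of the step the paper glosses over, namely that the stochastic integral is a true zero-mean martingale term (square-integrability from $2\al-2>-1$ and $\sup_{s\le t}\Ex(V_s)<\infty$) and that $\Ex$ commutes with the drift integral, and (ii) an answer expressed first through $E_{\al,1}$, which makes the limit $\Ex(V_t)\to\theta$ transparent; what the paper's resolvent route buys is that $E_{\al,\al}$ and the integrated form \eqref{eq:Ben} come out in one stroke, with no need for the separate derivative identity, and the same resolvent calculation is reused in Proposition \ref{prop:Mittag}. One cosmetic quibble: in your verification of $y=1-\lm I^\al y$ the identity you actually need is $I^\al t^{\al n}=\frac{\Gamma(\al n+1)}{\Gamma(\al n+\al+1)}t^{\al n+\al}$, not the Gamma recursion $\Gamma(\al n+1)=\al n\,\Gamma(\al n)$ you cite (that recursion is what you need for the derivative identity instead); the computation itself is fine.
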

\begin{proof} (see also page 7 in \cite{GK19}), and Proposition 3.1 in \cite{ER18} for an alternate proof).  Let $r(t)=f^{\al,\lm}(t)$.  Taking expectations of \eqref{eq:Vt} and using that the expectation of the stochastic integral term is zero, we see that
\be
\Ex(V_t) = V_0+ \frac{1}{\Gamma(\al)}\int_0^t (t-s)^{\al-1}\lm (\theta-\Ex(V_s))dt \,.\label{eq:EVt}
\ee

\nind Let
$
k(t):=\frac{\lambda t^{\alpha-1}}{\Gamma(\alpha)}$
and
$f(t):=\Ex(V_t)-\theta$.  Then we can re-write \eqref{eq:EVt} as
\be
f(t) =(V_0-\theta)-k*f(t) \label{eq:f(t)} \,.
\ee
where $*$ denotes convolution.  Now define the resolvent $r(t)$ as the unique function which satisfies
$
r=k-k*r \,.
$
Then we claim that
\be
f(t) =(V_0-\theta)-r*(V_0-\theta)\,.\nn
\ee
To verify the claim, we substitute this expression into \eqref{eq:f(t)} to get:
\begin{align*}
(V_0-\theta)-k*[( V_0-\theta)-r*(V_0-\theta)]
&=(V_0-\theta)-(V_0-\theta)*(k-k*r)(t)\nn \\
&=(V_0-\theta)-(V_0-\theta)*r(t)\nn \,
\end{align*}
so $(V_0-\theta)-k*f(t)=(V_0-\theta)-(V_0-\theta)*r(t)=f(t)$,
which is precisely the integral equation we are trying to solve.  Taking Laplace transform of both sides of $k-k*r=r$ we obtain
$
\hat{r} =\hat{k} -\hat{k}\hat{r}
$, which we can re-arrange as
\be
\hat{r}= \frac{\hat{k}}{1+\hat{k}}= \frac{\lambda z^{-\al} }{1 +\lambda z^{-\al}} =
\frac{\lambda  }{z^{\al} +\lambda }\nn
\ee
and the inverse Laplace transform of $\hat{r}$ is $r(t)=\lambda t^{\alpha-1}E_{\alpha,\alpha}(-\lambda t^{\alpha})$.
\end{proof}

\sk

\subsection{Computing $\Ex(V_u|\mc{F}_t$)}

\sk
\sk

Now let $\xi_t(u):=\Ex(V_u|\mc{F}_t)$.  Then $\xi_t(u)$ is an $\mc{F}_t$-martingale, and
\be
 \xi_t(u)
= V_0+ \frac{1}{\Gamma(\al)} \int_0^u (u-s)^{\al-1}  \lm(\theta-\Ex(V_s|\mc{F}_t) ds
+  \frac{1}{\Gamma(\al)}\int_0^t (u-s)^{\al-1}\nu \sqrt{V_s}  dW_s\nn\,.
\ee
If $\lm=0$, we can re-write this expression as
\be
d\xi_t(u) =
\frac{1}{\Gamma(\al)}(u-t)^{\al-1}\sqrt{V_t} dW_t  \nn \,.
\ee
\begin{prop}\label{prop:Mittag} (see \cite{ER19}).
For $\lm > 0$
\be
d\xi_t(u) = \kappa(u-t) \sqrt{V_t} dW_t = \kappa(u-t) \sqrt{\xi_t(t)} dW_t \label{eq:AFV}
\ee
where $\kappa$ is the inverse Laplace transform of $\hat{\kappa}(z)=\frac{\nu z^{-\al}}{1+z^{-\al}}$, which is given explicitly by
\be
\kappa(x) = \nu x^{\al-1} E_{\al,\al}(-\lm x^{\al}) \sim \frac{1}{\Gm(\al)}\nu x^{\al-1}\label{eq:Mitt} \,
\ee
as $x\to 0$ (see also page 6 in \cite{GK19} and page 29 in \cite{ER18}).
\end{prop}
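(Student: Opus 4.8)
The plan is to first derive the martingale representation
\[
V_t \;=\; \Ex(V_t) \,+\, \int_0^t \kappa(t-s)\,\sqrt{V_s}\,dW_s, \qquad \kappa(x) \;=\; \nu\, x^{\al-1}E_{\al,\al}(-\lm x^{\al}),
\]
of the variance process itself, and then deduce \eqref{eq:AFV} by conditioning on $\mc F_t$.

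For the first step I would recast \eqref{eq:Vt} as a linear convolution equation. With $k(t)=\lm t^{\al-1}/\Gamma(\al)$ and resolvent $r=f^{\al,\lm}$ as in the proof of the previous proposition (so $r=k-k*r$, equivalently $k-r*k=r$), equation \eqref{eq:Vt} reads $V=p-k*V$, where $p(t):=V_0+\theta\,(k*1)(t)+\tfrac{\nu}{\lm}Z_t$ is an $\mc F_t$-adapted forcing term, $Z_t:=\int_0^t k(t-s)\sqrt{V_s}\,dW_s$, and $*$ is convolution; solving, $V=p-r*p$. The deterministic part of $p-r*p$ equals $\Ex(V_t)$: taking expectations in \eqref{eq:Vt} shows $\Ex(V_\cdot)$ satisfies the same convolution equation with $p$ replaced by its deterministic part $q(t):=V_0+\theta(k*1)(t)$, so $\Ex(V_\cdot)=q-r*q$. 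For the stochastic part, the stochastic Fubini theorem gives $(r*Z)(t)=\int_0^t (r*k)(t-s)\sqrt{V_s}\,dW_s$, whence
\[
\tfrac{\nu}{\lm}\bigl(Z_t-(r*Z)(t)\bigr)\;=\;\tfrac{\nu}{\lm}\int_0^t (k-r*k)(t-s)\,\sqrt{V_s}\,dW_s\;=\;\tfrac{\nu}{\lm}\int_0^t r(t-s)\,\sqrt{V_s}\,dW_s,
\]
using $k-r*k=r$. Combining, $V_t=\Ex(V_t)+\int_0^t \kappa(t-s)\sqrt{V_s}\,dW_s$ with $\kappa:=\tfrac{\nu}{\lm}r=\tfrac{\nu}{\lm}f^{\al,\lm}$, i.e.\ $\kappa(x)=\nu x^{\al-1}E_{\al,\al}(-\lm x^{\al})$, and since the resolvent of the previous proposition has Laplace transform $\hat r(z)=\lm z^{-\al}/(1+\lm z^{-\al})$ we get $\hat\kappa(z)=\tfrac{\nu}{\lm}\hat r(z)=\nu z^{-\al}/(1+\lm z^{-\al})$.

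To conclude, fix $u\ge t$. Since $s\mapsto\kappa(u-s)\sqrt{V_s}$ is adapted and $\int_0^u\kappa(u-s)^2\,\Ex(V_s)\,ds<\infty$ — the square of the kernel is of order $(u-s)^{2\al-2}$ with $2\al-2>-1$, and $\Ex(V_s)$ is bounded on compacts — the process $r\mapsto\int_0^r\kappa(u-s)\sqrt{V_s}\,dW_s$ is a true $L^2$-martingale on $[0,u]$; conditioning the representation of $V_u$ on $\mc F_t$ therefore gives $\xi_t(u)=\Ex(V_u)+\int_0^t\kappa(u-s)\sqrt{V_s}\,dW_s$, and differentiating in $t$ with $u$ fixed yields \eqref{eq:AFV}. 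The second equality in \eqref{eq:AFV} is immediate from $\xi_t(t)=\Ex(V_t\mid\mc F_t)=V_t$, and $\kappa(x)\sim\nu x^{\al-1}/\Gamma(\al)$ as $x\to0$ since $E_{\al,\al}(0)=1/\Gamma(\al)$.

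The step I expect to be the main obstacle is the rigorous justification of the resolvent and stochastic-Fubini manipulations: one needs $k,r\in L^1_{\mathrm{loc}}$ (both are $t^{\al-1}$ times a locally bounded function), an integrability bound such as $\int_0^t\!\int_0^s |r(t-s)|\,|k(s-v)|\,\Ex(V_v)^{1/2}\,dv\,ds<\infty$ to apply stochastic Fubini (finite, since a convolution of two weakly singular kernels with exponents $>-1$ is again locally bounded), and the a priori facts that $V\ge0$ with $\sup_{s\le T}\Ex(V_s)<\infty$ so that $\sqrt V$ is an admissible integrand — all standard for the rough Heston variance process; see \cite{ER19}. An alternative that avoids the martingale representation of $V$ is to take $\Ex(\cdot\mid\mc F_t)$ directly in \eqref{eq:Vt}, obtaining a linear Volterra equation in the variable $u$ for $\xi_t(u)$ on $[t,\infty)$ with $\mc F_t$-measurable forcing built from $V$ and $W$ on $[0,t]$, solving it again with $r=f^{\al,\lm}$, and reading off the coefficient of $dW_t$; this is essentially the route taken in \cite{ER19}.
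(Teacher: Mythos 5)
Your proof is correct and follows essentially the same route as the paper's Appendix A: both solve the Volterra equation for $V$ by a resolvent/convolution argument (you recycle the resolvent $r=f^{\al,\lm}$ of the previous proposition and set $\kappa=\tfrac{\nu}{\lm}r$, while the paper defines $\kappa$ directly as the resolvent of $\ph(t)=\nu t^{\al-1}/\Gamma(\al)$ via $\kappa=\ph-\tfrac{\lm}{\nu}\,\ph*\kappa$), then obtain $\xi_t(u)$ by conditioning the resulting stochastic-convolution representation of $V_u$, with your added explicit stochastic-Fubini and $L^2$-martingale justifications being a welcome extra. Note that your $\hat{\kappa}(z)=\nu z^{-\al}/(1+\lm z^{-\al})$ agrees with the appendix's computation and the explicit Mittag-Leffler formula; the $1+z^{-\al}$ appearing in the proposition statement is evidently a typo.
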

\begin{proof}
See Appendix \ref{section:AppA}.
\end{proof}

\sk
\begin{rem}
Integrating \eqref{eq:AFV} and setting $u=t$ we see that
\bq
V_t =\xi_0(t)+\int_0^t \kappa(t-s) \sqrt{V_s} dW_s \,.
\label{eq:Vtt}
\eq
\end{rem}
\begin{rem}
From \eqref{eq:AFV}, we see that $\xi_t(.)$ is Markov in $\xi_t(.)$.  However $V$ is not Markov in itself.
\end{rem}

\subsection{Evolving the variance curve }

\sk
\sk

We simulate the variance curve at time $t>0$ using
\be
\xi_t(u) =\xi_0(u)+ \int_0^t \kappa(u-s) \sqrt{V_s} dW_s \nn \,
\ee
and substituting the expression for $\xi_0(t)=\Ex(V_t)$ in \eqref{eq:Ben} and the expression for $\kappa(t)$ in Proposition \ref{prop:Mittag} (which are both expressed in terms of the Mittag-Leffler function).


\subsection{The characteristic function of the log stock price}

\sk
\sk

 From Corollary 3.1 in \cite{ER19} (see also Theorem 6 in \cite{GGP19}), we know that
 for all $t \ge 0$
\be
\Ex(e^{p X_t})=e^{V_0 I^{1-\al}f(p,t)+\lm \theta I^1 f(p,t)}\label{eq:SP}
\ee for $p$ in some open interval $I \supset [0,1]$, where $f(p,t)$ satisfies
\be
D^{\al}f(p,t) = \half (p^2-p) +   ( p \,\rho \nu-\lm) f(p,t) +\half \nu^2  f(p,t)^2 \label{eq:FDE0} \,
\ee
with initial condition $f(p,0)=0$, where $I^{\al}f$ denotes the fractional integral operator of order $\al$ (see e.g. page 16 in \cite{ER19} for definition) and $D^{\al}$ denotes the fractional derivative operator of order $\al$ (see page 17 in \cite{ER19} for definition).


\subsection{The generalized time-dependent Rough Heston model and fitting the initial variance curve}

\sk

    If we now replace the constant $\theta$ with a time-dependent function $\theta(t)$, then
\be
\Ex(V_t)
 = V_0+ \frac{1}{\Gamma(\al)}\int_0^t (t-s)^{\al-1}\lm (\theta(s)-\Ex(V_s))dt \nn,
\ee
which we can re-arrange as
\be
\Ex(V_t)-V_0+ \lm I^{\al}\Ex(V_t) = \lm I^{\al}\theta(t)\,\nn
\ee
so to make this generalized model consistent with a given initial variance curve $\Ex(V_t)$, we set
\be
\theta(t) = \frac{1}{\lm}D^{\al}(\Ex(V_t)-V_0+ \lm I^{\al}\Ex(V_t))
=\frac{1}{\lm}D^{\al}(\Ex(V_t)-V_0)+ \Ex(V_t) \nn \,
\ee
(see also Remark 3.2, Theorem 3.2 and Corollary 3.2 in \cite{ER18}).

     \subsection{Other affine and non-affine variance curve models}

     \sk

Another well known (and non-affine) variance curve model is the \textit{Rough Bergomi} model, for which
$
d\xi_t(u) =\eta (u-t)^{H-\half}\xi_t(u) dW_t \nn
$
or the standard Bergomi model (with mean reversion) for which
$
d\xi_t(u) = \eta  e^{-\lambda(u-t)}\xi_t(u)  dW_t \nn \,.
$

\section{Small-time asymptotics}

\sk
\subsection{Scaling relations}
Let
\be
d\tilde{X}^{\e}_t = \sqrt{\e} \sqrt{V^{\e}_t} dB_t\label{eq:Xtilde}
\ee
which satisfies:
\be
\tilde{X}^{\e}_t \eqd \tilde{X}_{\e t}\nn
\ee
Then the characteristic function of $\tilde{X}_t$ for $\e=1$ is:
\be \Ex(e^{p \tilde{X}_t}) = e^{V_0I^{1-\al}\psi(p,t)}
\label{eq:CFF}
\ee
where $\psi(p,t)$ satisfies:
\be
D^{\al}\psi(p,t) = \half  p^2+   p \rho \nu   \psi(p,t) +\half \nu^2  \psi(p,t)^2\,
\label{eq:RiccEq}
\ee
with $\psi(p,0)=0$.
We first recall that
$
D^{\al} \psi(p,t) = \frac{d}{dt} \frac{1}{\Gamma(1-\al)} \int_0^t \psi(p,s) (t-s)^{-\al} ds\nn \,.
$
Then
\begin{align}
D^{\al} \psi(p,\e t):= (D^{\al} \psi)(p,\e t) &= \frac{1}{\e}\frac{d}{dt} \frac{1}{\Gamma(1-\al)} \int_0^{\e t} \psi(p,s) (\e t-s)^{-\al} ds\nn \\
&= \frac{1}{\e}\frac{d}{dt} \frac{1}{\Gamma(1-\al)} \int_0^t \psi(p,\e u) (\e t-\e u)^{-\al} \e du\nn \\
&= \e^{-\al}\frac{d}{dt} \frac{1}{\Gamma(1-\al)} \int_0^t \psi(p,\e u) (t-u)^{-\al} du\nn \\
&= \e^{-\al} D^{\al} \psi(p,\e(.))(t)\nn\,.
\end{align}
Combining this with \eqref{eq:RiccEq} we see that
\be
\e^{-\al}D^{\al}(\psi(p,\e .))(t)= \half  p^2+   p \rho \nu   \psi(p,\e t) +\half \nu^2  \psi(p,\e t)^2\,.
\ee
Setting $p\rightarrow \e^{\gm}q$ and multiplying by $\e^{-2 \gm}$ we have
\be
\e^{-\al-2\gm}D^{\al}(\psi( \e^{\gm}q,\e (.)))(t)=
\half  q^2+   q \rho \nu   \e^{-\gm}\psi(\e^{\gm}q,\e t) +\half \nu^2  \e^{-2\lambda}\psi(\e^{\gm}q,\e t)^2
\ee
Now setting $\gm=-\al$ we see that 
\be
D^{\al}(\e^{\al} \psi(\e^{-\al}q,\e (.)))(t) ~
\half  q^2+   q \rho \nu   \e^{\al}\psi(\e^{-\al}q,\e t) +\half \nu^2  \e^{2\al}\psi(\e^{-\al}q,\e t)^2
\ee
with $\psi(\e^{-\al}q,0)=0$.  Thus, we see that $\e^{\al}\psi(\e^{-\al}p,\e t)$ and $\psi(p,t)$ satisfy the same VIE with the same boundary condition, so
\be
\psi(p,t)=\e^{\al}\psi(\e^{-\al}p,\e t) \label{eq:sss}
\ee
From the form of the characteristic function in \eqref{eq:CFF}, the function $\Lambda(p,t):=I^{1-\al}\psi(p,t)$ is clearly of interest too. Using the scaling relation on $\psi(p,t)$:
\bq
I^{1-\al}\psi(p,\e t)&=&\frac{1}{\Gamma(1-\al)}\int_0^{\e t}(\e t-s)^{-\al}\psi(p,s)ds\\
&=&\frac{\e}{\Gamma(1-\al)}\int_0^{ t}(\e t- \e u)^{-\al}\psi(p,\e u) du\\
&=&\frac{\e^{1-\al}}{\Gamma(1-\al)}\int_0^{ t}(t- u)^{-\al}\e^{-\al}\psi(\e^{\al}p,u)du ~ \e^{-2H}I^{1-\al}\psi(\e^{\al}p,t)
\eq
Thus we have established the following lemma:
\begin{lem}
\bq
\Lambda(p,\e t)=\e^{-2H}\Lambda(\e^{\al}p,t)
\eq
in particular
\bq
\Lambda(p,t)=t^{-2H}\Lambda(pt^{\al},1)\,.
\label{eq:SR}
\eq
\end{lem}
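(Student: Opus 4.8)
The plan is to reduce the statement entirely to the scaling relation \eqref{eq:sss} already established for $\psi$, combined with the defining integral representation
$\Lambda(p,t)=I^{1-\al}\psi(p,t)=\frac{1}{\Gamma(1-\al)}\int_0^t (t-s)^{-\al}\psi(p,s)\,ds$.

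First I would put \eqref{eq:sss} in the convenient form $\psi(p,\e u)=\e^{-\al}\psi(\e^{\al}p,u)$: this is just \eqref{eq:sss} with $p$ replaced by $\e^{\al}p$ and $t$ by $u$, which reads $\psi(\e^{\al}p,u)=\e^{\al}\psi(p,\e u)$. Then, starting from $\Lambda(p,\e t)=\frac{1}{\Gamma(1-\al)}\int_0^{\e t}(\e t-s)^{-\al}\psi(p,s)\,ds$, substitute $s=\e u$: the kernel $(\e t-\e u)^{-\al}$ contributes a factor $\e^{-\al}$, the differential $ds=\e\,du$ contributes $\e$, and replacing $\psi(p,\e u)$ by $\e^{-\al}\psi(\e^{\al}p,u)$ contributes a further $\e^{-\al}$. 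Collecting the powers of $\e$ gives $\Lambda(p,\e t)=\e^{1-2\al}\Lambda(\e^{\al}p,t)$, and since $\al=H+\half$ we have $1-2\al=-2H$, which is the first identity. The displayed special case \eqref{eq:SR} then follows immediately by taking $t=1$ in this relation and relabelling $\e$ as $t$, giving $\Lambda(p,t)=t^{-2H}\Lambda(t^{\al}p,1)$.

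There is no substantive obstacle here — the real content was in deriving \eqref{eq:sss} — and what remains is a change of variables together with careful bookkeeping of the $\e$-exponents. The only points deserving a word of justification are that the change of variables is legitimate, which follows from the local integrability of $(t-s)^{-\al}$ for $\al\in(\half,1)$ near $s=t$ and the continuity of $\psi(p,\cdot)$ with $\psi(p,0)=0$ (so the integrand has no issue near $s=0$ either), and that the arguments $(p,\e t)$ and $(\e^{\al}p,t)$ both lie in the region where $\psi$, and hence $\Lambda$, is finite, i.e.\ below the moment explosion time. In the small-time regime that is the object of this section $t$ is small, so $t^{\al}p$ is small and this last point is automatic.
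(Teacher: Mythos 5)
Your proposal is correct and follows essentially the same route as the paper: both apply the substitution $s=\e u$ in the integral defining $\Lambda(p,\e t)=I^{1-\al}\psi(p,\e t)$, use the $\psi$-scaling relation \eqref{eq:sss} in the form $\psi(p,\e u)=\e^{-\al}\psi(\e^{\al}p,u)$, and collect the factor $\e^{1-2\al}=\e^{-2H}$, with \eqref{eq:SR} obtained by setting $t=1$ and relabelling. Your added remarks on integrability of the kernel and on staying below the explosion time are sensible bookkeeping that the paper leaves implicit.
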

\subsection{The small-time LDP}

\sk
\sk


To simplify calculations, we make the following assumption throughout this section:

\begin{Assumption}
$\lm=0$.
\end{Assumption}
\begin{rem}
The formal higher order Laplace asymptotics in subsection \ref{subsection:HigherOrder} indicate that $\lm$ will not affect the leading order small-time asymptotics, i.e. $\lm$ will not affect the rate function, as we would expect from previous works on small-time asymptotics for rough stochastic volatility models.  The assumption that  $\lm=0$ is relaxed in the next section where we consider large-time asymptotics.
\end{rem}

\sk

We now state the main small-time result in the article (recall that $\al=H+\half$):
\begin{thm}\label{thm:small time}
For the Rough Heston model defined in \eqref{eq:Vt}, we have
\be
\lim_{t \to 0}t^{2H}\log \Ex(e^{\frac{p}{t^{\al}} X_t})
= \lim_{t \to 0} t^{2H}\log \Ex(e^{\frac{p}{t^{2H}} \frac{X_t}{t^{\half-H}}}) =\left\{
            \begin{array}{ll}
      \bar{\Lm}(p)   \quad \text{if}\quad  T^*(p)>1\\
+\infty \quad \text{if}\quad\,\,   T^*(p) \le 1
\end{array}\
\right.
\label{eq:Maine}
\ee
where $\bar{\Lm}(p):=V_0\Lm(p)$, $\Lm(p):=\Lm(p,1)$, $\Lambda(p,t):=I^{1-\al}\psi(p,t)$ and $\psi(p,t)$ satisfies
the Volterra differential equation
\be
D^{\al}\psi(p,t) = \half  p^2+   p \rho \nu   \psi(p,t) +\half \nu^2  \psi(p,t)^2\label{eq:VIEE}  \,
\ee
with initial condition $\psi(p,0)=0$, where $T^*(p)>0$ is the explosion time for $\psi(p,t)$ which is finite for all $p \ne 0$ (assuming $\nu>0$).  Moreover, the scaling relation in the previous section show that $\Lm(p)=|p|^{\frac{2H}{\al}}\Lm(\mathrm{sgn}(p),|p|^{\frac{1}{\al}})$, so in fact we only need to solve \eqref{eq:VIEE} for
$p=\pm 1$, and we can re-write \eqref{eq:Maine} in more familiar form as
\be
\lim_{t \to 0}t^{2H}\log \Ex(e^{\frac{p}{t^{\al}} X_t})
= \lim_{t \to 0} t^{2H}\log \Ex(e^{\frac{p}{t^{2H}} \frac{X_t}{t^{\half-H}}}) =\left\{
            \begin{array}{ll}
      \bar{\Lm}(p)   \quad \quad  p\in (p_-,p_+)\\
+\infty \quad \quad\,\,  p\notin (p_-,p_+) \nn
\end{array}\
\right.
\ee
where $p_{\pm}=\pm (T^*(\pm 1))^{\al}$, so $p_+>0$ and $p_-<0$.  Then $X_t/t^{\half-H}$ satisfies the LDP as $t \to 0$ with speed $t^{-2H}$ and good rate function $I(x)$ equal to the Fenchel-Legendre transform of $\bar{\Lm}$.
\end{thm}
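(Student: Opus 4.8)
The plan is to compute the limiting rescaled cumulant generating function and then invoke the G\"artner--Ellis theorem. Since $\al-2H=\half-H$, we have $\tfrac{p}{t^\al}X_t=\tfrac{p}{t^{2H}}\cdot\tfrac{X_t}{t^{\half-H}}$, so the two expressions in \eqref{eq:Maine} agree and it suffices to evaluate $\Lm_X(p):=\lim_{t\to0}t^{2H}\log\Ex(e^{(p/t^\al)X_t})$. By \eqref{eq:SP} with $\lm=0$ (extended past the fixed interval $I$ up to the moment explosion, see \cite{GGP19}) one has $\Ex(e^{qX_t})=e^{V_0\,I^{1-\al}f(q,\cdot)(t)}$, where $f$ solves the fractional Riccati equation \eqref{eq:FDE0} with $\lm=0$, namely $D^\al f(q,t)=\half(q^2-q)+q\rho\nu f+\half\nu^2 f^2$, $f(q,0)=0$, this being valid as long as $f(q,\cdot)$ has not blown up on $[0,t]$, and $\Ex(e^{qX_t})=+\infty$ otherwise.

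First I would carry out the scaling step. Put $q=p/t^\al$ and define the rescaled solution $\tl f^{\,t}(p,s):=t^\al f(pt^{-\al},ts)$ for $s\in[0,1]$. Using the identity $D^\al[g(t\,\cdot)](s)=t^\al(D^\al g)(ts)$ already derived in the excerpt, a direct substitution shows that $\tl f^{\,t}(p,\cdot)$ solves
\[
D^\al\tl f^{\,t}(p,s)=\half p^2-\half t^\al p+p\rho\nu\,\tl f^{\,t}(p,s)+\half\nu^2\tl f^{\,t}(p,s)^2,\qquad \tl f^{\,t}(p,0)=0,
\]
i.e.\ a perturbation of order $t^\al$ of the driftless equation \eqref{eq:VIEE} solved by $\psi(p,\cdot)$. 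The same substitution $\sigma\mapsto\sigma/t$ in the fractional integral gives $I^{1-\al}f(pt^{-\al},\cdot)(t)=t^{1-2\al}\,I^{1-\al}\tl f^{\,t}(p,\cdot)(1)=t^{-2H}\,I^{1-\al}\tl f^{\,t}(p,\cdot)(1)$, so that
\[
t^{2H}\log\Ex(e^{(p/t^\al)X_t})=V_0\,I^{1-\al}\tl f^{\,t}(p,\cdot)(1)
\]
whenever $\tl f^{\,t}(p,\cdot)$ does not blow up on $[0,1]$, and the left-hand side equals $+\infty$ otherwise.

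Next I would let $t\to0$. When $T^*(p)>1$ the driftless solution $\psi(p,\cdot)$ is bounded on $[0,1]$; combining the continuous dependence of Volterra-equation solutions on parameters (\cite{GLS90}) with a bootstrap argument that first produces an a priori bound on $\tl f^{\,t}(p,\cdot)$ over $[0,1]$ (needed because of the quadratic term), one obtains $\tl f^{\,t}(p,\cdot)\to\psi(p,\cdot)$ uniformly on $[0,1]$ at rate $O(t^\al)$; since $I^{1-\al}$ is continuous for the sup norm on $[0,1]$, this yields $\Lm_X(p)=V_0\,I^{1-\al}\psi(p,\cdot)(1)=V_0\Lm(p,1)=\bar{\Lm}(p)$. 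When $T^*(p)<1$, continuity of the blow-up time forces $\tl f^{\,t}(p,\cdot)$, hence $f(pt^{-\al},\cdot)$, to explode before time $t$ once $t$ is small, so $\Ex(e^{(p/t^\al)X_t})=+\infty$ and $\Lm_X(p)=+\infty$; the borderline case $T^*(p)=1$ follows by lower-bounding $I^{1-\al}\tl f^{\,t}(p,\cdot)(1)$ by $\frac{1}{\Gamma(1-\al)}\int_0^{1-\delta}(1-s)^{-\al}\tl f^{\,t}(p,s)\,ds$ and letting $t\to0$ and then $\delta\to0$, using that the fractional Riccati blow-up is fast enough that $I^{1-\al}\psi(p,\cdot)(1)=+\infty$ (since $2\al>1$). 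The scaling relation $\Lm(p)=|p|^{2H/\al}\Lm(\mathrm{sgn}(p),|p|^{1/\al})$ is the identity $\Lm(q,\e t)=\e^{-2H}\Lm(\e^\al q,t)$ of the scaling Lemma with $q=p$, $\e=|p|^{-1/\al}$, $t=|p|^{1/\al}$ (so $\e t=1$); applying \eqref{eq:sss} at the blow-up time gives $T^*(p)=T^*(\mathrm{sgn}(p))\,|p|^{-1/\al}$, hence $T^*(p)>1\iff p\in(p_-,p_+)$ with $p_\pm=\pm(T^*(\pm1))^\al$, and the finiteness of $T^*(\pm1)$ comes from the bound $D^\al\psi\ge\half p^2(1-\rho^2)>0$ together with comparison with the fractional ODE $D^\al y=\tfrac14\nu^2y^2$ (cf.\ \cite{GGP19}).

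Finally, $\bar{\Lm}$ is convex (a pointwise limit of convex log-moment generating functions), finite and smooth on the open interval $(p_-,p_+)\ni0$ by smooth dependence of $\psi(\cdot,s)$ on its parameter away from explosion, and tends to $+\infty$ at $p_\pm$ by the blow-up estimate above, hence it is lower semicontinuous and essentially steep; the G\"artner--Ellis theorem then shows that $X_t/t^{\half-H}$ satisfies the LDP as $t\to0$ with speed $t^{-2H}$ and good rate function $I=\bar{\Lm}^*$, the Fenchel--Legendre transform of $\bar{\Lm}$. I expect the main obstacle to be the uniform convergence $\tl f^{\,t}(p,\cdot)\to\psi(p,\cdot)$ on $[0,1]$ together with the continuity of the explosion time in $t$: the quadratic term rules out a naive Gr\"onwall estimate, so one must first set up an a priori bound via a continuity/bootstrap argument; a close second is verifying essential steepness of $\bar{\Lm}$ at $p_\pm$, which is precisely what lets the G\"artner--Ellis theorem deliver the full large deviation lower bound with rate function $\bar{\Lm}^*$.
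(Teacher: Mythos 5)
Your outline is correct in substance and ends with the same Gärtner--Ellis step, but it takes a genuinely different route from the paper in the one place where the work lies: the drift of $X$. You keep the drift inside the fractional Riccati equation, rescale, and then must pass to the limit in a family of VIEs whose right-hand side carries the vanishing perturbation $-\half t^{\al}p$, which forces you to prove uniform convergence of $\tl f^{\,t}(p,\cdot)$ to $\psi(p,\cdot)$ on $[0,1]$ when $T^*(p)>1$, stability of the blow-up time when $T^*(p)<1$, and a separate squeeze at the borderline $T^*(p)=1$. The paper avoids all of this: it first strips the drift and works with the driftless process $\tilde{X}^{\e}$, for which the space-time scaling gives the \emph{exact} identity $\Ex(e^{p\e^{-\al}\tilde{X}^{\e}_1})=e^{\e^{-2H}\bar{\Lm}(p)}$ whenever $T^*(p)>1$ (so no perturbation or limiting argument is needed at the mgf level), applies Gärtner--Ellis to $\tilde{X}^{\e}_1/\e^{\half-H}$, and then transfers the LDP to $X^{\e}_1/\e^{\half-H}$ by exponential equivalence, using an upper-bound LDP (speed $\e^{-2H}$) for $A_{\e}=\int_0^1 V^{\e}_s\,ds$ obtained from the same function $\Lm$ to show that $\Pb\bigl(\half\e^{\half+H}A_{\e}>\delta\bigr)$ decays superexponentially. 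What your route buys is a direct proof of the displayed limit \eqref{eq:Maine} for $X_t$ itself, drift included (the paper proves that limit for $\tilde{X}$ and then only the LDP for $X$); what it costs is precisely the step you flag as the main obstacle, and that step is not routine: the quadratic nonlinearity rules out a naive Grönwall bound, and the paper itself, in the remark of subsection \ref{subsection:HigherOrder}, treats exactly this kind of $O(\e^{\al})$ perturbation of the Riccati VIE only formally, noting that a rigorous version would require Theorem 13.1.1 of \cite{GLS90} together with uniformity on compacts --- which is presumably why the authors chose the exponential-equivalence decomposition instead. Your remaining ingredients (the scaling identity $T^*(p)=T^*(\mathrm{sgn}(p))|p|^{-1/\al}$, hence $p_{\pm}=\pm(T^*(\pm1))^{\al}$; convexity, differentiability and steepness of $\bar{\Lm}$ at $p_{\pm}$ via the blow-up of $\Lambda$ at $T^*(p)$) coincide with the paper's, so if you complete the VIE-stability step (a priori bound by comparison, continuity in the parameter on compact subsets of $[0,T^*(p))$, persistence of blow-up under the vanishing perturbation) your argument is a valid, and in one respect more literal, proof of the theorem; otherwise the paper's driftless-plus-exponential-equivalence decomposition is the cleaner fallback.
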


\nind \begin{proof}
 We first consider the following family of re-scaled Rough Heston models:
\begin{align}
dX^{\e}_t = -\half \e V^{\e}_t dt+  \sqrt{\e} \sqrt{V^{\e}_t} dB_t \quad ,\quad V^{\e}_t =
V_0\,+\,\frac{\e^{\al}}{\Gamma(\al)} \int_0^t (t-s)^{H-\half}  \lm(\theta-V^{\e}_s) ds  +\frac{\e^H}{\Gamma(\al)}\int_0^t (t-s)^{H-\half}  \nu \sqrt{V^{\e}_s}  dW_s \label{eq:Xepst}
\end{align}
with $X^{\e}_t=0$, where $H=\al-\half \in (0,\half]$.  Then from Appendix \ref{section:AppB} we know that
\begin{equation}\label{eq:scaling X}
  (X^{\e}_{(.)},V^{\e}_{(.)})\eqd (X_{\e (.)},V_{\e(.)})\,
\end{equation}
(note this actually holds for all $\lm>0$, but we are only considering $\lm=0$ in this proof).  Proceeding along similar lines to Theorem 4.1 in \cite{FZ17}, we let $\tilde{X}^{\e}_t$ denote the solution to
\be
d\tilde{X}^{\e}_t = \sqrt{\e} \sqrt{V^{\e}_t} dB_t\label{eq:Xtilde}
\ee
with $\tilde{X}^{\e}_0=0$.  From Eq 8 in \cite{ER18} we know that
\be \Ex(e^{p \tilde{X}_t}) = \Ex^{\Qb_p}(e^{\half p^2 \int_0^t V_s ds})\nn\ee
where $\tilde{X}_t:=\tilde{X}^1_t$ and $\Qb_p$ is defined as in \cite{ER18}, but under $\Qb_p$ the value of the mean reversion speed changes from zero to $\bar{\lm}=\rho p \nu$, so
\be
\Ex(e^{p \tilde{X}_t})=e^{V_0 I^{1-\al}\psi(p,t)}\nn
\ee
on some non-empty interval $[0,T^*(p))$, where
\be
D^{\al}\psi(p,t) = \half p^2 +   p \rho \nu \psi(p,t) +\half \nu^2  \psi(p,t)^2 \nn \,
\ee
with $\psi(p,0)=0$.  Existence and uniqueness of solutions to these kind of fractional differential equations (FDE) is standard, as is their equivalence to VIEs, see e.g.\ \cite{GGP19} and chapter 12 of \cite{GLS90} for details.    \\
From Propositions 2 and 3 in \cite{GGP19}, we know that $\psi(p,t)$ blows up at some finite time $T^*(p)>0$ (i.e. case A or B in the \cite{GGP19} classification).Thus we see that
\begin{align}
  \Ex(e^{\frac{p}{\e^\al}\tilde{X}_t^\e})=\Ex(e^{\frac{p}{\e^\al}\tilde{X}_{\e t}})
  =e^{ V_0 I^{1-\al} \psi(\frac{p}{\e^{\al}},\e t)}
  =e^{\frac{1}{\e^{2H}} V_0 I^{1-\al} \psi(p,t)}\label{eq:exxact}
\end{align}
for all $t\in [0,T^*(p))$, which we can re-write as $
  \Ex(e^{\frac{p}{t^\al}\tilde{X}_t})
  =e^{\frac{\bar{\Lm}(p)}{t^{2H}} }.\nn
$
Thus we see that
\begin{align*}
  \lim_{t\to 0}t^{2H} \log \Ex(e^{\frac{p}{t^\al}\tilde{X}_t})=
  \bar{\Lm}(p)\nn
\end{align*}
and $\Lm(p):=\Lm(p,1)<\infty$
 if and only if $T^*(p)>1$.  \\ 

We now have the following obvious but important corollary of the $\Lambda$ scaling relation in \eqref{eq:SR}:
\begin{cor}\label{cor:SR}
\be
\Lm(q) = t^{2H}\Lm(\frac{q}{t^{\al}},t)
= |q|^{\frac{2H}{\al}}\Lm(\mathrm{sgn}(q),|q|^{\frac{1}{\al}})
\label{eq:SR2}
\ee
where we have set $p=1=\frac{|q|}{t^{\al}}$ in \eqref{eq:SR}, and $t_q^*=|q|^{\frac{1}{\al}}$.
\end{cor}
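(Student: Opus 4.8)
The plan is to obtain Corollary \ref{cor:SR} by two elementary substitutions into the scaling identity \eqref{eq:SR}, $\Lambda(p,t)=t^{-2H}\Lambda(pt^{\al},1)$. Writing $\Lm(\cdot):=\Lambda(\cdot,1)$ as in the statement of Theorem \ref{thm:small time}, this reads $\Lambda(p,t)=t^{-2H}\Lm(pt^{\al})$, or equivalently $\Lm(pt^{\al})=t^{2H}\Lambda(p,t)$, which is valid for every $t>0$ for which both sides are finite (and, via analytic continuation of $\psi$ in $p$, for $p<0$ as well as $p>0$).

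First I would prove the left-hand equality $\Lm(q)=t^{2H}\Lm(q/t^{\al},t)$: given $q$ and an arbitrary $t>0$, set $p:=q/t^{\al}$ in \eqref{eq:SR}, so that $pt^{\al}=q$ and hence $t^{2H}\Lambda(q/t^{\al},t)=\Lambda(q,1)=\Lm(q)$. Crucially $t$ remains a free parameter here, which is precisely what is exploited in the second step. For the right-hand equality I would choose $t$ so as to normalise the first argument to $\pm1$: taking $t=t^*_q:=|q|^{1/\al}$ gives $q/t^{\al}=q/|q|=\mathrm{sgn}(q)$ and $t^{2H}=|q|^{2H/\al}$, whence $\Lm(q)=|q|^{2H/\al}\Lm(\mathrm{sgn}(q),|q|^{1/\al})$ in the notation $\Lm(p,t)=I^{1-\al}\psi(p,t)$ (equivalently, one may fix $p=1=|q|/t^\al$ in \eqref{eq:SR} and read off $t=t^*_q$, as indicated in the statement). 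This is the claimed representation, and it shows that $\Lm$ is determined on all of $(p_-,p_+)$ by the two "profile'' functions $\Lm(\pm1,\cdot)$, so one only needs to solve \eqref{eq:VIEE} for $p=\pm1$.

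The remaining point is the consistency of the domains, since the identity only makes sense where both sides are finite; this is where a little care is needed. Here I would invoke the companion scaling \eqref{eq:sss} for $\psi$ itself, which forces the explosion time to scale as $T^*(q)=|q|^{-1/\al}\,T^*(\mathrm{sgn}(q))$: taking $\e=|q|^{1/\al}$ in \eqref{eq:sss} gives $\psi(q,t)=\e^{\al}\psi(\mathrm{sgn}(q),\e t)$, so if the left side blows up at $T^*(q)$ the right side blows up at the same time, forcing $\e\,T^*(q)=T^*(\mathrm{sgn}(q))$. Consequently $\Lm(q)=\Lm(q,1)$ is finite iff $T^*(q)>1$ iff $T^*(\mathrm{sgn}(q))>|q|^{1/\al}$ iff $\Lambda(\mathrm{sgn}(q),|q|^{1/\al})$ is finite, so the two sides of the second equality are simultaneously finite and the substitution $t=t^*_q$ is legitimate on the whole range $q\in(p_-,p_+)$ with $p_{\pm}=\pm(T^*(\pm1))^{\al}$. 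No genuine obstacle arises beyond the bookkeeping of signs and absolute values when $q<0$; once \eqref{eq:SR} and \eqref{eq:sss} are in hand the substitutions themselves are immediate.
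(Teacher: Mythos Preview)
Your proposal is correct and follows exactly the route the paper indicates: the paper treats the corollary as an ``obvious but important'' consequence of the scaling identity \eqref{eq:SR}, giving only the hint (embedded in the statement itself) that one sets $p=1=|q|/t^{\al}$, and you have simply spelled out the two substitutions and added the domain-consistency check via the explosion-time scaling \eqref{eq:sss}. One minor remark: your parenthetical about ``analytic continuation of $\psi$ in $p$'' for $p<0$ is unnecessary, since the VIE \eqref{eq:VIEE} and hence the scaling relations \eqref{eq:sss}, \eqref{eq:SR} hold directly for all real $p$; no continuation argument is needed to cover $q<0$.
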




\begin{rem}
This implies that $\Lm(p)\to \infty$ as $p \to p_{\pm}:=\pm (T^*(\pm 1))^{\al}$.
and more generally \begin{equation}\label{eq:T}
p T^*(p)^{\al} = 1_{p>0}\,p_{+}\,+\,1_{p<0}\,p_{-}\,.
\end{equation}
\end{rem}
 To prove the LDP, we first prove the corresponding LDP for $\tilde{X}_t$.  From Lemma 2.3.9 in \cite{DZ98}, we know that
 \[
 \lim_{t\to0}t^{2H} \log \Ex(e^{\frac{p}{t^{\al}} \tilde{X}_t})=\Lm(p)=\Lm(p,1)=I^{1-\al}\psi(p,t)|_{t=1}
 \]
 is convex in $p$, and from \eqref{eq:SP} and \eqref{eq:RiccEq} we know that
\be
\frac{d}{dt}\Lm(p,t) =\half  p^2+ p \rho \nu \psi(p,t) +\half \nu^2  \psi(p,t)^2 \nn
\ee
(where we have also used that $D^{\al} D^{1-\al}=D$), which shows that $\Lm(p,t)$ is also differentiable in $t$, and thus from \eqref{eq:SR2}, we see that $\Lm(p)=\Lm(p,1)$ is differentiable in $p$ for $p>0$.  Moreover the scaling relation easily yields that $\Lm(p)$ is right differentiable at $p=0$, since $\Lm(p)=o(p)$.  We also know that $\psi(p,t)\to \infty$ as $t\to T^*(p)$ (see Propositions 2 and 3 in \cite{GGP19}), so $\Lambda(p,t)=I^{1-\al}\psi(p,t)$ also explodes at $T^*(p)$ by Lemma 3 in \cite{GGP19}.
Then from Corollary \ref{cor:SR}, we know that $\Lm(p)=p^{\frac{2H}{\al}}\Lm(\mathrm{sgn}(p),|p|^{\frac{1}{\al}})$,
so $\Lm(p)\to \infty$ as $p\to p_{\pm}=\pm (T^*(\pm 1))^{\al}$ and (by convexity and differentiability) $\Lm$ is also essentially smooth, so by the G\"{a}rtner-Ellis theorem from large deviations theory (see Theorem 2.3.6 in \cite{DZ98}), $\tilde{X}^{\e}_1/\e^{\half-H}$ satisfies the LDP as $\e \to 0$ with speed $\e^{-2H}$ and rate function $I(x)$.

\sk
\sk

We now show that $X^{\e}_1/\e^{\half-H}$ satisfies the same LDP, by showing that the non-zero drift of the log stock price can effectively be ignored at leading order in the limit as $\e\to 0$.  Using that 
\be
 \Ex(e^{ \frac{p}{\e^{2\al}} \e \int_0^1 V^{\e}_s ds})
= \Ex(e^{\frac{p}{\e^{2H}} \int_0^1 V^{\e}_s ds})
~ \Ex(e^{\frac{\sqrt{2 p}}{\e^{\al}} \tilde{X}^{\e}_1})=
e^{\frac{1}{\e^{2H}}V_0 \Lm(\sqrt{2p})}\nn
\ee
for $p \in (-\infty,\half p_+)$ (and $+\infty$ otherwise) so
\be
J(p) :=\lim_{\e\to0} \e^{2H} \log \Ex(e^{ \frac{p}{\e^{2\al}} \e \int_0^1 V^{\e}_s ds})
 ~V_0\Lm(\sqrt{2p})\nn
\ee
so (again using part a) of the G\"{a}rtner-Ellis theorem in Theorem 2.3.6 in \cite{DZ98}), $A_{\e}:=\int_0^1 V^{\e}_s ds$ satisfies the upper bound LDP as $\e \to 0$ with speed $\e^{-2H}$ and good rate function $J^*$ equal to the FL transform of $J$.
\sk
But we also know that
\be
X^{\e}_1-\tilde{X}^{\e}_1 = -\half \e A_{\e} \nn
\ee
and for any $a>0$ and $\delta_1>0$
\be
\Pb(|\frac{X^{\e}_1}{\e^{\half-H}}-\frac{\tilde{X}^{\e}_1}{\e^{\half-H}}| >\delta)
=\Pb(\half \e^{\half+H} A_{\e}>\delta) = \Pb( A_{\e}>\frac{2\delta}{\e^{\half+H}})
\le \Pb( A_{\e}>a) \lee e^{-\frac{\inf_{a'\ge a} J^*(a')-\delta_1}{\e^{2H}}}
\nn
\ee
for any $\e$ sufficiently small, where we have use the upper bound LDP for $A_{\e}$
to obtain the final inequality.  Thus
\be
\limsup_{\e \to 0}\e^{2H}\log\Pb(|\frac{X^{\e}_1}{\e^{\half-H}}-\frac{\tilde{X}^{\e}_1}{\e^{\half-H}}| >\delta)
\le - \inf_{a'>a}J^*(a') \nn \,
\ee
but $a$ is arbitrary and (from Lemma 2.3.9 in \cite{DZ98}), $J^*$ is a good rate function, so in fact
\be
\limsup_{\e \to 0}\e^{2H}\log\Pb(|\frac{X^{\e}_1}{\e^{\half-H}}-\frac{\tilde{X}^{\e}_1}{\e^{\half-H}}| >\delta)
=-\infty \nn \,.
\ee
Thus $\frac{X^{\e}_1}{\e^{\half-H}}$ and $\frac{\tilde{X}^{\e}_1}{\e^{\half-H}}$ are \textit{exponentially equivalent} in the sense of Definition 4.2.10 in \cite{DZ98}, so (by Theorem 4.2.13 in \cite{DZ98})
$\frac{X^{\e}_1}{\e^{\half-H}}$ satisfies the same LDP as $\frac{\tilde{X}^{\e}_1}{\e^{\half-H}}$.
\nind \end{proof}


\subsection{Asymptotics for call options and implied volatility}

\sk
\sk

\begin{cor}
We have the following limiting behaviour for out-of-the-money European put and call options with maturity $t$ and log-strike $t^{\half-H}x$, with $x\in \mathbb{R}$ fixed:
\begin{align}\label{eq:call}
\lim_{t\to 0} t^{2H}\log\Ex((e^{X_t}-e^{x t^{\half-H}})^+)
= -I(x) \quad \quad (x>0)\nn \\
\lim_{t\to 0} t^{2H}\log\Ex((e^{x t^{\half-H}}-e^{X_t})^+)
= -I(x)\quad \quad (x<0)\nn     \,.
\end{align}
\end{cor}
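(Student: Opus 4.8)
The strategy is the standard translation of a large deviation principle for log-returns into asymptotics for out-of-the-money option prices, following the Varadhan-type arguments in G\"{a}rtner-Ellis applications (cf.\ \cite{FZ17}, \cite{FJ11}). By Theorem \ref{thm:small time}, $Y_t := X_t/t^{\half-H}$ satisfies the LDP as $t\to 0$ with speed $t^{-2H}$ and good rate function $I$ equal to the Fenchel-Legendre transform of $\bar{\Lm}$. For $x>0$ we write the call price as
\[
\Ex((e^{X_t}-e^{xt^{\half-H}})^+) = \Ex\big((e^{t^{\half-H}Y_t}-e^{xt^{\half-H}})^+ \,;\, Y_t > x\big),
\]
and observe that since $t^{\half-H}\to 0$, the integrand is bounded above by $e^{X_t}1_{\{Y_t>x\}} = e^{t^{\half-H}Y_t}1_{\{Y_t>x\}}$. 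First I would prove the upper bound: $\Ex((e^{X_t}-e^{xt^{\half-H}})^+) \le \Ex(e^{X_t}1_{\{Y_t > x\}})$, and then apply a H\"{o}lder split $\Ex(e^{X_t}1_{\{Y_t>x\}}) \le \Ex(e^{qX_t})^{1/q}\Pb(Y_t>x)^{1/q'}$ with conjugate exponents. Since $\Ex(e^{qX_t}) = \Ex(e^{q t^{\half-H} Y_t})$ and for \emph{fixed} $q$ we have $qt^{\half-H}\to 0$, the moment term contributes $e^{o(t^{-2H})}$ (more carefully, one checks $t^{2H}\log\Ex(e^{qX_t})\to 0$ since the relevant exponent $q t^{\half-H}/t^{\half-H}\cdot t^{2H}$... in fact $\Ex(e^{qX_t})\to 1$ for fixed $q$), so only the LDP upper bound $\limsup t^{2H}\log\Pb(Y_t>x) \le -\inf_{y\ge x}I(y) = -I(x)$ survives, using that $I$ is increasing on $[0,\infty)$ with $I(0)=0$ and continuity/monotonicity of $I$ at $x$.

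For the lower bound, I would restrict the expectation to the event $\{Y_t \in (x+\delta, x+2\delta)\}$ for small $\delta>0$: on this event $e^{X_t}-e^{xt^{\half-H}} \ge e^{(x+\delta)t^{\half-H}} - e^{xt^{\half-H}} = e^{xt^{\half-H}}(e^{\delta t^{\half-H}}-1) \sim \delta t^{\half-H}$, which is only polynomially small in $t$, hence $e^{o(t^{-2H})}$. Therefore
\[
t^{2H}\log\Ex((e^{X_t}-e^{xt^{\half-H}})^+) \ge t^{2H}\log(\delta t^{\half-H}/2) + t^{2H}\log\Pb(Y_t\in(x+\delta,x+2\delta)),
\]
and the LDP lower bound for the open set $(x+\delta,x+2\delta)$ gives $\liminf \ge -\inf_{y\in(x+\delta,x+2\delta)}I(y) \ge -I(x+2\delta)$. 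Letting $\delta\downarrow 0$ and using continuity of $I$ at $x$ (which holds since $I$ is a good, convex rate function, finite and continuous on the interior of its domain — one should check $x$ lies in the interior, equivalently $x \in (\bar\Lm'(p_-), \bar\Lm'(p_+))$, which is the natural range for this regime) completes the lower bound. The put case $x<0$ is symmetric: bound $e^{xt^{\half-H}}-e^{X_t} \le e^{xt^{\half-H}} \le 1$ for the upper estimate (no moment term needed, since the payoff is bounded by $1$), giving $\limsup t^{2H}\log\Ex((e^{xt^{\half-H}}-e^{X_t})^+) \le -\inf_{y\le x}I(y) = -I(x)$ using that $I$ is decreasing on $(-\infty,0]$, and restrict to $\{Y_t\in(x-2\delta,x-\delta)\}$ for the matching lower bound.

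The main obstacle is the regularity of $I$ near the relevant $x$ and the verification that $x$ is in the effective domain: one needs $I$ continuous (not merely lower semicontinuous) at $x$ so that $\inf_{y\ge x}I(y)=I(x)$ and $\lim_{\delta\to0}I(x+2\delta)=I(x)$. Since $\bar\Lm$ is essentially smooth, convex, and steep at $p_\pm$ (established in the proof of Theorem \ref{thm:small time}), its Legendre transform $I$ is finite and $C^1$ on the open interval $(\bar\Lm'(p_-),\bar\Lm'(p_+))$, so continuity holds there; the statement of the corollary should implicitly be understood for $x$ in that interval (equivalently, where the call/put price decays at the genuine rate $t^{-2H}$), and I would state this restriction explicitly. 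A secondary, more routine point is the uniform control $t^{2H}\log\Ex(e^{qX_t})\to 0$ for the fixed H\"{o}lder exponent $q$ — this follows because for fixed $q$, $\Ex(e^{qX_t}) = e^{V_0 I^{1-\al}f(q,t)}$ with $f(q,t)\to 0$ as $t\to0$ (the VIE solution vanishes at $t=0$), hence the exponent is $o(1) = o(t^{-2H})$, so the H\"{o}lder correction is negligible on the large deviations scale.
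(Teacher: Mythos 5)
Your argument is correct and is essentially the same standard Varadhan-type proof that the paper invokes by citing Appendix C of \cite{FZ17} and Theorem 6.3 of \cite{FGP18b}: H\"{o}lder plus the LDP upper bound over the closed half-line for the call, boundedness of the put payoff for $x<0$, and a shell restriction $\{Y_t\in(x+\delta,x+2\delta)\}$ with a polynomially small payoff floor for the lower bound. Two small touch-ups: with fixed conjugate exponents the H\"{o}lder step only yields $-I(x)/q'$, so you must let $q\to\infty$ at the end to recover $-I(x)$; and since $\bar{\Lm}(p)\to\infty$ as $p\to p_{\pm}$, the Legendre transform $I$ is finite and continuous on all of $\mathbb{R}$, so the domain restriction you hedge about is vacuous and the corollary holds for every fixed $x\neq 0$ exactly as stated.
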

\begin{proof}
 The lower estimate follows from the exact same argument used in Appendix C in \cite{FZ17}
  (see also Theorem 6.3 in \cite{FGP18b}).
  The proof of the upper estimate is the same as in Theorem 6.3 in \cite{FGP18b}.
\end{proof}


\begin{cor}
Let $\hat{\sigma}_t(x)$ denote the implied volatility of a European put/call option with log-moneyness $x$ under the Rough Heston model in \eqref{eq:Vt} for $\lm=0$.  Then for $x \ne 0$ fixed, the implied volatility satisfies
  \begin{equation}\label{eq:imp vol}
    \hat{\sigma}(x):=\lim_{t\to0}\hat{\sigma}_t(t^{\half-H}x) = \frac{|x|}{\sqrt{2I(x)}}.
  \end{equation}
\end{cor}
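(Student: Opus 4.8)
The plan is to read the implied volatility off the out-of-the-money price asymptotics of the previous corollary by inverting the Black--Scholes formula. Write $C_{\mathrm{BS}}(k,\sigma,t)$ for the time-$t$, log-strike-$k$ Black--Scholes call price with unit spot and volatility $\sigma$, and $P_{\mathrm{BS}}$ for the corresponding put. The key ingredient is the classical small-time (moderate-deviation) expansion of the Black--Scholes price: since $X^{\mathrm{BS}}_t\sim N(-\half\sigma^2 t,\sigma^2 t)$ one computes $t^{2H}\log\Ex(e^{\frac{p}{t^{\al}}X^{\mathrm{BS}}_t})=-\half p\sigma^2 t^{\al}+\half p^2\sigma^2\to\half p^2\sigma^2$, whose Fenchel--Legendre transform is $x^2/(2\sigma^2)$; equivalently, feeding the Mills-ratio estimate $N(-y)=\frac{1}{y\sqrt{2\pi}}e^{-y^2/2}(1+o(1))$ into $C_{\mathrm{BS}}=N(d_+)-e^{k}N(d_-)$, one obtains
\be
\lim_{t\to0}t^{2H}\log C_{\mathrm{BS}}(xt^{\half-H},\sigma,t)=-\frac{x^2}{2\sigma^2}=\lim_{t\to0}t^{2H}\log P_{\mathrm{BS}}(xt^{\half-H},\sigma,t)\nn
\ee
for every $x\neq0$ and $\sigma>0$, the put limit following since call and put differ by $1-e^{xt^{\half-H}}$, which is polynomially small in $t$.

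Next I would check that $I(x)\in(0,\infty)$ for every $x\neq0$, so that $v(x):=x^2/(2I(x))$ is a well-defined positive finite number. For finiteness: in the proof of Theorem \ref{thm:small time}, $\bar{\Lm}$ is shown to be convex, essentially smooth, and to blow up at the finite endpoints $p_\pm$, so $\bar{\Lm}'\to-\infty$ at $p_-$ and $\bar{\Lm}'\to+\infty$ at $p_+$; hence for each $x\in\mathbb{R}$ the concave map $p\mapsto px-\bar{\Lm}(p)$ attains its supremum at an interior point and $I(x)=\bar{\Lm}^*(x)<\infty$. For positivity: $\bar{\Lm}(p)=o(p)$ near $0$ gives $\bar{\Lm}'(0)=0$, so taking $p$ small with $\mathrm{sgn}(p)=\mathrm{sgn}(x)$ yields $px-\bar{\Lm}(p)=px(1+o(1))>0$, whence $I(x)>0$.

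Finally I would run the standard monotone sandwich, using that $\sigma\mapsto C_{\mathrm{BS}}(k,\sigma,t)$ is strictly increasing. Fix $x>0$ and $\e\in(0,\sqrt{v(x)})$; then $x^2/(2(\sqrt{v(x)}-\e)^2)>I(x)>x^2/(2(\sqrt{v(x)}+\e)^2)$, so combining the displayed Black--Scholes limits at $\sigma=\sqrt{v(x)}\pm\e$ with the call asymptotics $t^{2H}\log\Ex((e^{X_t}-e^{xt^{\half-H}})^+)\to-I(x)$ of the previous corollary gives, for all sufficiently small $t$,
\be
C_{\mathrm{BS}}\big(xt^{\half-H},\sqrt{v(x)}-\e,t\big)<\Ex\big((e^{X_t}-e^{xt^{\half-H}})^+\big)<C_{\mathrm{BS}}\big(xt^{\half-H},\sqrt{v(x)}+\e,t\big).\nn
\ee
Since $\hat{\sigma}_t(xt^{\half-H})$ is by definition the unique $\sigma$ for which $C_{\mathrm{BS}}(xt^{\half-H},\sigma,t)$ equals the middle quantity, monotonicity forces $\sqrt{v(x)}-\e<\hat{\sigma}_t(xt^{\half-H})<\sqrt{v(x)}+\e$ for $t$ small, and letting $\e\downarrow0$ yields $\hat{\sigma}_t(xt^{\half-H})\to\sqrt{v(x)}=|x|/\sqrt{2I(x)}$; the case $x<0$ is identical with $P_{\mathrm{BS}}$ and the put asymptotics. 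I expect the only genuine subtlety to be the first display: the near-exact cancellation of $N(d_+)$ and $e^{k}N(d_-)$ means one must retain the next term in the Mills-ratio expansion to confirm that the leading exponential rate really is $-x^2/(2\sigma^2)$ and not something smaller (this is classical and underlies all moderate-deviations implied-volatility formulae); establishing $0<I(x)<\infty$ via the explosion and essential-smoothness properties of $\bar{\Lm}$ already proved for Theorem \ref{thm:small time} is then just bookkeeping, and the sandwich itself is routine.
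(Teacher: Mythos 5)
Your argument is correct in substance, but it takes a genuinely different route from the paper: the paper simply invokes the model-free transfer result of Gao--Lee (Corollary 7.2 in \cite{GL14}, see also Corollary 4.1 in \cite{FGP18b}), which converts the option-price asymptotics $t^{2H}\log\Ex((e^{X_t}-e^{xt^{\half-H}})^+)\to -I(x)$ of the preceding corollary directly into the implied-volatility limit, whereas you re-derive the needed piece of that machinery by hand: the Black--Scholes moderate-deviation rate $-x^2/(2\sigma^2)$ at log-strike $xt^{\half-H}$, the verification that $0<I(x)<\infty$ (which does follow from the steepness, convexity and $\bar{\Lm}(p)=o(p)$ properties established in the proof of Theorem \ref{thm:small time}), and the monotonicity sandwich in $\sigma$. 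Your approach buys a self-contained, elementary proof of exactly the leading-order statement; the paper's citation buys brevity and access to higher-order refinements of the same transfer principle (which is why \cite{GL14} is also the tool used later for the formal higher-order implied-vol expansion). One slip to fix: your claim that $t^{2H}\log P_{\mathrm{BS}}(xt^{\half-H},\sigma,t)\to -x^2/(2\sigma^2)$ for \emph{every} $x\neq0$, justified by put--call parity, is false for $x>0$ (and symmetrically the call statement fails for $x<0$): the in-the-money option tends to its intrinsic value $|e^{xt^{\half-H}}-1|\sim |x|t^{\half-H}$, which is only polynomially small, so its logarithm rescaled by $t^{2H}$ tends to $0$, not to $-x^2/(2\sigma^2)$; parity cannot transport exponential decay rates across moneyness. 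This does not damage the proof, because in the sandwich you only ever use the out-of-the-money side (calls for $x>0$, puts for $x<0$), and the out-of-the-money put rate for $x<0$ follows from the same Mills-ratio computation you perform for the call; just state the Black--Scholes limit for the out-of-the-money option only and drop the parity remark.
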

\begin{proof}
Follows from Corollary 7.2 in \cite{GL14}. See also the proof of Corollary 4.1 in \cite{FGP18b}
for details on this, but the present situation is simpler, as we only require the leading order term here.
\end{proof}


\subsection{Series expansion for the asymptotic smile and calibration}
\label{ss:ss}

\sk
\sk
Proceeding as in Lemma 12 in \cite{GGP19}, we can compute a fractional power series for $\psi(p,t)$
 (and hence $\Lm(p,t)$) and then using \eqref{eq:SR2}, we find that
\be
\bar{\Lm}(p) = \frac{2V_0}{\nu^2}\sum_{n=1}^{\infty} a_n(1) p^{1+n}  \frac{\Gm(\al n + 1)}{\Gamma(2+(n-1)\al )}\nn\label{eq:LmSeries}
\ee
where the $a_n=a_n(u)$ coefficients are defined (recursively) as in \cite{GGP19}
except for our application here (based on \eqref{eq:RiccEq}) we have to set $\lm=0$, and $c_1=\half u^2$ instead of $\half u(u-1)$ (note this series will have a finite radius of convergence).  Using the Lagrange inversion theorem, we can then derive a power
series for $I(x)$ which takes the form
\be
\hat{\sigma}(x) =\sqrt{V_0} + \frac{\rho \nu }{2 \Gamma(2 + \al) \sqrt{V_0}}x +
\nu^2 \frac{\Gamma(1 + 2 \al) +
   2 \rho^2 \Gamma(
     1 + \al)^2 (2 -
      3 \frac{\Gamma(2 + 2 \al)}{\Gamma(2 + \al)^2} ) }{8 V_0^{\frac{3}{2}}
  \Gamma(1 + \al)^2 \Gamma(2 + 2 \al)}x^2+O(x^3) \,.\label{eq:ISeries}
\ee
(compare this to Theorem 3.6 in \cite{BFGHS18} for a general class of rough models and Theorem 4.1 in \cite{FJ11b}
for a Markovian local-stochastic volatility model).  We can re-write this expansion more concisely in dimensionless form as
\be
\hat{\sigma}(x) =\sqrt{V_0}\,[1 + \frac{ \rho }{2 \Gamma(2 + \al) }z +
 \frac{\Gamma(1 + 2 \al) +
   2 \rho^2 \Gamma(
     1 + \al)^2 (2 -
      3 \frac{\Gamma(2 + 2 \al)}{\Gamma(2 + \al)^2} ) }{8
  \Gamma(1 + \al)^2 \Gamma(2 + 2 \al)}z^2+O(z^3)] \nn\,\label{eq:ISeries2}
\ee
where the dimensionless quantity $z=\frac{\nu x}{V_0}$.

\begin{rem}
In principle one can use \eqref{eq:ISeries} to calibrate $V_0$, $\rho$ and $\nu$ to observed/estimated values
of $\hat{\sigma}(0)$, $\hat{\sigma}'(0)$ and $\hat{\sigma}''(0)$ (i.e. the short-end implied vol level, skew and convexity respectively).
\end{rem}

\subsubsection{Wing behaviour of the rate function}

\sk

From Eq 3.2 in \cite{RO96}, we expect that
$
  \psi(p,t) \sim \frac{const.}{(T^*(p)-t)^\al}
$ as $t \to T^*(p)$
and thus
$
  \Lambda(p,t) =  I^{1-\al} \psi(p,t) \sim
  \frac{const.}{(T^*(p)-t)^{2\al-1}}\,\nn
$
as $t \to T^*(p)$.  Assuming this is consistent with the $p$-asymptotics, then (by \eqref{eq:T}) we have
\[
  \Lambda(p)=\Lambda(p,1) \sim \frac{const.}{(T^*(p)-1)^{2\al-1}}
  = \frac{const.}{((\frac{p_+}{p})^{1/\al}-1)^{2\al-1}}
  \sim \frac{const.}{(p_+-p)^{2\al-1}} \quad \quad (p \to p_+)
\]
so $p^*(x)$ in
$I(x)=\sup_p (px - V_0 \Lambda(p))
$
satisfies $p^*(x)= p_+ -const. \cdot x^{-1/2\al}(1+o(1))$, so
$
I(x) = p_+x + const.\cdot x^{1-\frac{1}{2\al}}(1+o(1))$ as $x \to \infty$.

\sk

\subsection{Higher order Laplace asymptotics}
\label{subsection:HigherOrder}

\sk
If we now relax the assumption that $\lm =0$, and work with the original $X^{\e}$ process in \eqref{eq:Xepst} (as opposed to the driftless $\tilde{X}^{\e}$ process in \eqref{eq:Xtilde}), then we know that
\be
\Ex(e^{p X^{\e}_t})=\Ex(e^{p X_{\e t}})=e^{V_0 I^{1-\al}g_{\e}(p,t)+\e^{\al}\lm \theta I^1 g_{\e}(p,t)}\nn
\ee for $t$ in some non-empty interval $[0,T^*_{\e}(p))$, where
\be
g_{\e}(\frac{p}{\e^{\al}},t) = \frac{\psi(p,t)}{\e^{2H}}\label{eq:orbit}\,\ee which satisfies
\begin{align}
D^{\al}g_{\e}(p,t)
&=\half \e (p^2-p) +   (p \rho \nu-\lm)  \e^{\al} g_{\e}(p,t) +\half \e^{2H}\nu^2  g_{\e}(p,t)^2 \label{eq:FDE2}
\end{align}
with initial condition $g_{\e}(p,0)=0$.  Setting
\be
g_{\e}(\frac{p}{\e^{\al}},t) = \frac{\psi_{\e}(p,t)}{\e^{2H}}\label{eq:orbit2}\,\ee
and setting $p\mapsto\frac{p}{\e^{\al}}$, and substituting for $g_{\e}(\frac{p}{\e^{\al}},t)$ in \eqref{eq:FDE2} and multiplying by $\e^{2H}$ as before, we find that
\be
D^{\al}\psi_{\e}(p,t) =
\half  p^2+   p \rho \nu   \psi_{\e} (p,t) +\half \nu^2  \psi_{\e}(p,t)^2  -\e^{\al}(\half p +\lm \psi_{\e} (p,t)) \nn
\nn \ee
\sk
\nind with $\psi_{\e}(p,0)=0$.  If we now formally try a higher order series approximation of the form $\psi_{\e}(p,t):=\psi(p,t)+\e^{\half+H} \psi_1(p,t)$, we find that
$\psi_1(p,t)$ must satisfy
\be
D^{\al}\psi_1(p,t) = -\half p\,-\lm \psi(p,t) + p  \rho \nu \psi_1(p,t) +\nu^2\psi(p,t)\psi_1(p,t) \,\nn
\ee
with $\psi_1(p,0)=0$, which is a linear VIE for $\psi_1(p,t)$.

\begin{rem}
Let $\Delta_{\e}(p,t)=\psi_{\e}(p,t)-\psi(p,t)-\e^{\half+H} \psi_1(p,t)$ denote the error term.  Then
$\Delta_{\e}(p,t)$ satisfies
\begin{align}
D^{\al}\Delta_{\e}(p,t) \,&=\,
p \nu \rho \Delta_{\e}(p,t) \,+\,
\half \nu^2 \Delta_{\e}(p,t)^2 \,+\, \nu^2 \Delta_{\e}(p,t) \psi(p,t)\,+\,\nn \\
&+ \, \e^{\half +H} \Delta_{\e}(p,t) (-\lm \,+\, \nu^2  \psi_1(p, t)) \nn \\
&+\, \e^{2 H +
     1} (-\lm \psi_1(p, t) \,+\,
    \half \nu^2 \psi_1(p, t)^2)\nn
\nn \end{align}
and the re-scaled error $\bar{\Delta}_{\e}(p,t):=\Delta_{\e}(p,t)/\e^{\half+H}$ satisfies
\begin{align}
D^{\al}\bar{\Delta}_{\e}(p,t) \,&=\,
\bar{\Delta}_{\e}(p,t)  (p \nu \rho  \,+\, \nu^2 \psi(p,t))\,+\,\nn \\
&+ \, \e^{\half +H}  (-\lm \psi_1(p, t) \,+\, \half \nu^2 \psi_1(p, t)^2+(-\lm \,+\, \nu^2  \psi_1(p, t))\bar{\Delta}_{\e}(p,t) + \half \nu^2 \bar{\Delta}_{\e}(p,t)^2) \nn
\nn \end{align}
We know that $\psi(p,t)$ is continuous on $[0,T^*(p))$.  In order to make this rigorous, one would need to apply \cite{GLS90} to this, noting that the leading order solution is zero, then replace $p$ with $i k$ for $k$ real, then show this convergence is uniform on compact sets, and then argue away the tails as in \cite{FJL12}.

\end{rem}

\begin{rem}\label{rem:Gop}
Setting $\psi_1(p,t)=\sum_{n=1}^{\infty} b_n(p) t^{\al n}$ we see that
\bq
\sum_{n=1}^{\infty}  \frac{n \al \Gamma(n \al)}{\Gamma(
 1 + (n-1) \al)}b_n(p) t^{(n-1)\al} &=&-\half p\,-\lm \sum_{n=1}^{\infty} \bar{a}_n(p) t^{\al n}+
 p\rho \nu \sum_{n=1}^{\infty} b_n(p) t^{\al n} +\nu^2\sum_{n=1}^{\infty} \bar{a}_n(p) t^{\al n}
 \sum_{m=1}^{\infty} b_m(p) t^{\al m} \nn \,
\eq
where $\bar{a}_n(p)=\frac{2}{\nu^2}a_n(p)$, and we have set $\lm=0$ and $c_1=\half p^2$ in computing the $a_n(p)$ coefficients, so
\be
\al \Gamma( \al)
 b_1(p)  = -\half p\quad ,\quad
\frac{(n+1) \al \Gamma((n+1) \al)}{\Gamma(
 1 + n \al)} b_{n+1}(p) = -\lm \bar{a}_n(p) + \rho p \nu b_n(p)+\nu^2 \sum_{k=1}^{n-1} a_k(p) b_{n-k}(p) \nn\,
\ee
so we have fractional power series for $\psi_1(p,t)$ on some finite radius of convergence.
\end{rem}

Returning now to the main calculation, we see that if $p_{\e}(x)$ denotes the density of $\frac{X_1^{\e}}{\e^{\al}}$, then
\be
p_{\e}(\frac{x}{\e^{2H}}) = \frac{1}{2\pi}\int_{-\infty}^{\infty} e^{-\frac{ikx}{\e^{2H}}} e^{\frac{1}{\e^{2H}}(F(k)+\e^{\half+H} G(k))+\frac{\e^{\al}}{\e^{2H}}\lm \theta (F_1(k)+\e^{\half+H}  G_1(k))} dk\nn \,
\ee
where $F(k):=V_0 I^{1-\al}\psi(ik,1)$, $G(k):=V_0 I^{1-\al}\psi_1(ik,1)$, $F_1:=I^1 \psi(ik,1)$ and $G_1:=I^1 \psi_1(ik,1)$.  The saddlepoint $k^*=k^*(x)=ip^*(x)$ of
$\bar{F}(k)=-ikx+ F(k)$ satisfies $\bar{F}'(k^*)=0$ which always falls on the imaginary axis (and in our case $p^*(x) \in (0,p_+)$ when $x>0$ and $p^*(x)<0 \in (p_-,0)$ when $x<0$), and
\begin{align*}
\bar{F}(k) &= \bar{F}(k^*)+\half F''(k^*)(k-k^*)^2 + O((k-k^*)^3) \\
&= \bar{F}(k^*)-\half \bar{\Lm}''(p^*)(k-k^*)^2+O((k-k^*)^3) \nn
\end{align*}
(recall that $\bar{\Lm}(p)=F(-ip)$) and $p^*=ik^* \in (p_-,p_+)$.  Then proceeding along similar lines to \cite{FJL12} and using Laplace's method we have for all $x\in \mathbb{R}$
\begin{align}
p_{\e}(\frac{x}{\e^{2H}}) &= \frac{1}{2\pi}\int_{-\infty}^{\infty} e^{\frac{1}{\e^{2H}} (\bar{F}(k)+\e^{\half+H}  G(k))+\e^{\half-H}\lm \theta (F_1(k)+\e^{\half+H}  G_1(k))} dk\label{eq:lap dens1} \\
&\approx \frac{1}{2\pi} e^{\e^{\half-H}(G(k^*)+\lm \theta  F_{1}(k^*))}\int_{-\infty}^{\infty}  e^{\frac{1}{\e^{2H}}(\bar{F}(k^*)-\half \bar{\Lm}''(p^*)(k-k^*)^2)} dk\label{eq:lap dens2} \\
&\approx \frac{1}{2\pi}e^{\e^{\half-H}(G(k^*)+\lm \theta  F_{1}(k^*))} e^{-\frac{I(x)}{\e^{2H}}}\int_{-\infty}^{\infty}  e^{-\frac{1}{\e^{2H}}\half \bar{\Lm}''(p^*)(k-k^*)^2} dk \nn \\
&=  \frac{\e^{H}e^{-\frac{I(x)}{\e^{2H}}}}{ \sqrt{2\pi \bar{\Lm}''(p^*)}}[1+\e^{\half-H}(G(k^*)+\lm \theta  F_{1}(k^*))+O(\e^{(1-2H) \wedge 2H})]\label{eq:lap dens3}
\end{align}
where the $O(\e^{2H})$ part of the error terms comes from the next order term in Theorem 7.1 in chapter 4 in \cite{Olv74}, and the $\e^{(1-2H)}$ term comes from the 2nd order term in expanding the exponential.
The meaning of $\approx$ in the above estimates is as follows: we expect to
have asymptotic equality with a relative error term that does not interfere
with the error term in \eqref{eq:lap dens3}, but since we did not carry
out the tail estimate of the saddle point approximation, we do not know its size.
 Then letting $z=\frac{k}{\e^{\al}}$, we see that
\begin{align}
C_{\e}(x)=\Ex((e^{X_1^{\e}}-e^{x \e^{\half-H}})^+)&=\frac{1}{2\pi}e^{x\e^{\half-H}}\int_{-ip^*-\infty}^{-ip^*+\infty} \mathrm{Re}(\frac{e^{-izx\e^{\half-H}}}{-iz-z^2} \Ex(e^{iz X_1^{\e}})) dz\nn \\
&=\frac{1}{2\pi}e^{x\e^{\half-H}}\int_{-ip^*-\infty}^{-ip^*+\infty} \mathrm{Re}(\frac{e^{-i\frac{k}{\e^{2H}}x }}{-i\frac{k}{\e^{\al}}-(\frac{k}{\e^{\al}})^2}\, \Ex(e^{i\frac{k}{\e^{\al}} X_1^{\e}}))  d\frac{k}{\e^{\al}}\label{eq:numm} \\
&=\frac{\e^{-\al}}{2\pi}e^{x\e^{\half-H}}\int_{-ip^*-\infty}^{-ip^*+\infty} \mathrm{Re}(e^{i\frac{k}{\e^{2H}}x}
 (- \frac{\e^{2\al}}{k^2}-i\frac{\e^{3\al}}{k^3} +O(\e^{4\al}))
  \,   \Ex(e^{i\frac{k}{\e^{\al}} X_1^{\e}}))dk \label{eq:Lap call} \\
&=  \frac{\e^{\half+2H}e^{-\frac{I(x)}{\e^{2H}}}}{(p^*)^2 \sqrt{2\pi \bar{\Lm}''(p^*)}}\,[1+\e^{\half-H}(x+G(k^*)+\lm \theta  F_{1}(k^*))+O(\e^{(1-2H) \wedge 2H})]\nn \\
&=  \frac{A(x)\e^{\half+2H}e^{-\frac{I(x)}{\e^{2H}}}}{\sqrt{2\pi}}[1+\e^{\half-H}(x+G(k^*)+\lm \theta  F_{1}(k^*))+O(\e^{(1-2H) \wedge 2H})]
\label{eq:LaplaceCalls}
\end{align}
where
\bq
A(x)&=& \frac{1}{(p^*)^2 \sqrt{ \bar{\Lm}''(p^*)}}\label{eq:A((x))}
\eq

\sk

The $\e$-dependence of the leading order term here is exactly the same as in Corollary 7.1 in the recent article of Friz et al.\ \cite{FGP18a} (in \cite{FGP18a} $\e^2=t$ whereas here $\e=t$) which deals with a general class of rough stochastic volatility models (which excludes Rough Heston).
 The difficulty in making the expansions \eqref{eq:lap dens3}
 and \eqref{eq:LaplaceCalls} rigorous is the step from \eqref{eq:Lap call}
 to \eqref{eq:LaplaceCalls}, or, more explicitly, from \eqref{eq:lap dens1} to
 \eqref{eq:lap dens2}. The expansion of $\bar F$ used in \eqref{eq:lap dens2} is valid
 locally, close to the saddle point. An estimate for the integrand in \eqref{eq:lap dens1}
 is needed to argue that this is good enough, i.e., that the asymptotic
 behavior of \eqref{eq:lap dens1} is captured by integrating over an appropriate
 neighbourhood of the saddle point. This is usually done by establishing
 monotonicity of the integrand, but seems non-trivial here; cf.\ Lemma 6.4 in  \cite{FJL12},
 which uses the explicit characteristic function of the classical Heston model.

 \sk
 \sk
 More generally, we can formally substitute a fractional power series of the form $\psi_{\e}(p,t)=\sum_{n=0}^{\infty}\psi_n(p,t)\e^{(n+1)\al}$ (where $\psi_0(p,t):=\psi(p,t)$), and we find that $(\psi_n)_{n\ge 1}$ satisfies a nested sequence of linear fractional differential equations:
\begin{align}
D^{\al}\psi_1(p,t) &= -\half p\,-\lm \psi_0(p,t) + p  \rho \nu \psi_1(p,t) +\nu^2\psi_0(p,t)\psi_1(p,t) \,\nn \\
D^{2\al}\psi_2(p,t) &= -\lm \psi_1(p,t) + p  \rho \nu \psi_2(p,t) +\nu^2\psi_0(p,t)\psi_2(p,t)+\half \nu^2 \psi_1(p,t)^2 \,\nn \\
&... \nn \\
 D^{n\al}\psi_n(p,t) &= \,-\lm \psi_{n-1}(p,t) + p  \rho \nu \psi_n(p,t) +\half \nu^2 [\sum_{k=0}^n \psi_k(p,t)\psi_{n-k}(p,t)
 +  1_{\half n \in \mathbb{N}}\cdot\psi_{\half n}(p,t)^2] \,
\end{align}
with $\psi_n(p,0)=0$, and in principle we can then compute fractional power series expansions for each $\psi_n(p,t)$ of the form
$\psi_n(p,t)=\sum_{m=1}^{\infty} a_{m,n}(p) t^{\al m}$, as in Remark \ref{rem:Gop} above.

\subsubsection{Higher order expansion for implied volatility}

\sk

\textbf{Formal corollary of \eqref{eq:LaplaceCalls}}:
Let $\hat{\sigma}_t(x)$ denote the implied volatility of a European put/call option with log-moneyness $x$ under the Rough Heston model in \eqref{eq:Vt} for $\lm\ge 0$.  Then for $x \ne 0$ fixed, the implied volatility satisfies
\begin{equation}\label{eq:imp vol}
\hat{\sigma}_t(t^{\half-H}x)^2 = \frac{|x|}{\sqrt{2I(x)}}+t^{2H} \Sigma_1(x)+o(t^{2H})
\end{equation}
where
\be
\Sigma_1(x) = \frac{x^2 \log A_1(x)}{2 I(x)^2}\nn\,
\ee
and
where $A_1(x)=2A(x)I(x)^{\frac{3}{2}}/x$ \footnote{We thank Peter Friz and Paolo Pigato for clarifying the main steps in this result}.
\nind 

\nind \begin{proof}
Let $L_t=-\log C_t(x)$, where $C_t(x)$ is defined as in \eqref{eq:LaplaceCalls}.  Then using Corollary 7.1 and Eq 7.2 in Gao-Lee\cite{GL14} we see that
\bq
|\frac{1}{t}G_{-}^2(k_t,L_t-\frac{3}{2}\log L_t+\log\frac{k_t}{4\sqrt{\pi}},u)-\hat{\sigma}_t^2(k_t)| &=&
o(\frac{k_t^2}{L_t^2 t})\nn
\eq
where $G_{-}(k, u):=\sqrt{2} (\sqrt{u + k} - \sqrt{u})$.  Then
\begin{align}
L_t-\frac{3}{2}\log L_t+\log\frac{k_t}{4\sqrt{\pi}}
&=\frac{I(x)}{t^{2H}}-(\half+2H)\log t -\log \frac{A(x)}{\sqrt{2\pi}} -\frac{3}{2}\log(\frac{I(x)}{t^{2H} }(1-\log A(x) \frac{t^{2H}}{I(x)})) \nn \\
&+ \log\frac{x}{4\sqrt{\pi}}\,+\, (\half-H)\log t \nn
\end{align}
where $A(x)$ is defined as in \eqref{eq:A((x))}.  Collecting $\log t$ terms we find that their sum vanishes, so
\begin{align}
L_t-\frac{3}{2}\log L_t+\log\frac{k_t}{4\sqrt{\pi}}
&=\frac{I(x)}{t^{2H}} -\log \frac{A(x)}{\sqrt{2\pi}} -\frac{3}{2}\log I(x) +\log\frac{x}{4\sqrt{\pi}}+o(1) \nn \\
&=\frac{I(x)}{t^{2H}} -\log A_1(x)+o(1)\,. \nn
\end{align}
Then using that
\be
G_{-}^2(k,u) = \frac{k^2}{2u} - \frac{k^3}{4u^2}+O(\frac{k^4}{u^3})
\nn
\ee
as $k/u \to 0$, we obtain the result.
\nind \end{proof}

 \subsubsection{Using these approximations in practice}
 \sk

 \eqref{eq:LaplaceCalls} is of little use in practice, since the leading order Laplace approximation ignores the variation of the function $\frac{1}{k^2}$ in the integrand, and even if we partially take account of this effect by going to next order with Laplace's method using the formula in Theorem 7.1 in chapter 4 in \cite{Olv74} (which we have checked and tried), it still frequently gives a worse estimate that the leading order estimate $\hat{\sigma}(x)$ because the higher order error terms being ignored are too large, and since $H$ is usually very small in practice, $t^H$ converges very slowly to zero.  If we instead compute an approximate call price using the Fourier integral along the horizontal contour going through the saddlepoint in \eqref{eq:numm} (using e.g. the \verb"NIntegrate" command in Mathematica) and use our higher order asymptotic estimate $\psi(ik,t)+\e^{\half+H} \psi_1(ik,t)$ for $\log \Ex(e^{i\frac{k}{\e^{\al}} X^{\e}}))$, and then compute the \textit{exact} implied volatility associated with this price (which avoids the problems with the Laplace approximation), then (for the parameters we considered) we found this approximation to be an order of magnitude closer to the Monte Carlo value than the leading order approximation $\hat{\sigma}(x)$ (see graph and tables below). See \cite{LK07} for more on computing the optimal contour of integration for such problems.

 \sk
 \sk


\subsection{Small-time moderate deviations}

\sk
\sk

Inspired by \cite{BFGHS18}, if we replace \eqref{eq:orbit2} with
\be g_{\e}(\frac{p}{\e^{q}},t) = \frac{\psi_{\e}(p,t)}{\e^{2H-2\beta}}\nn\ee
where $q=\half-H+\beta$, then we find that
 \be
D^{\al}\psi_{\e}(p,t) = \half p^2 - \half p \e^{\half-H+\beta} +
 p \e^{-2 H + 3 \beta} \rho \nu \psi_{\e}(p,t) -\e^{\half - 3 H + 4 \beta}\lm  \psi_{\e}(p,t) +
 \half \e^{-4 H + 6 \beta} \nu^2 \psi_{\e}(p,t)^2 \,\nn
\ee
and we see that all non constant terms on the right hand side are $o(1)$ as $\e\to 0$ if $\beta \in (\frac{2}{3}H,H)$ and $H \in (0,\half)$.  Following similar calculations as above, we formally obtain that
$\lim_{t \to 0} t^{2H-2\beta}\log \Ex(e^{\frac{p}{t^{2H-2\beta}} \frac{X_t}{t^{q}}})=V_0 I^{1-\al}I^{\al} (\half p^2) =\half V_0 p^2$ for \textit{all} $p \in \mathbb{R}$, which (modulo some rigour) implies that $X_t/t^q$ satisfies the LDP with speed $\frac{1}{t^{2H-2\beta}}$ and Gaussian rate function $I(x)=\half x^2/V_0$.  Note that $\beta=H$ corresponds to the central limit or Edgeworth regime, see \cite{FSV19} for details.

\begin{figure}
\begin{center}
\includegraphics[width=100pt, height=110pt]{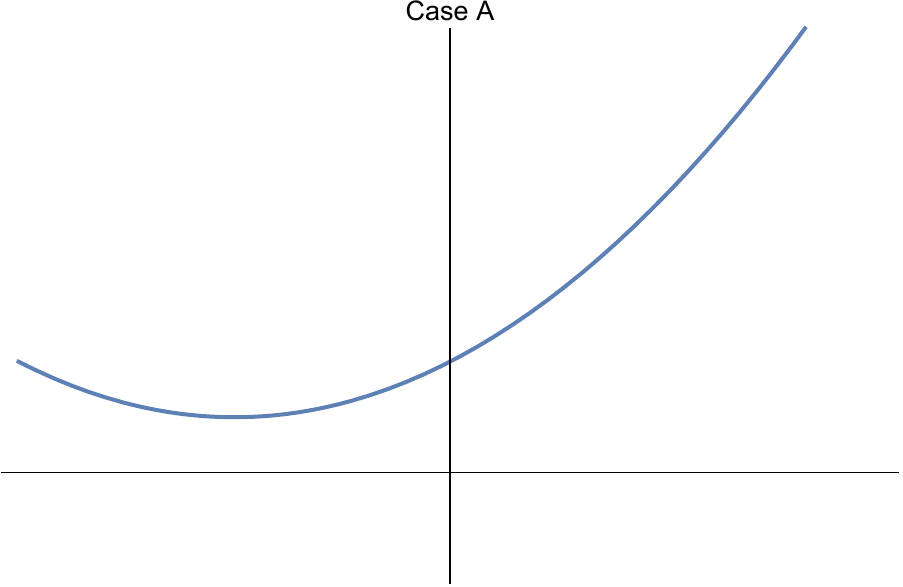}\quad
\includegraphics[width=110pt, height=110pt]{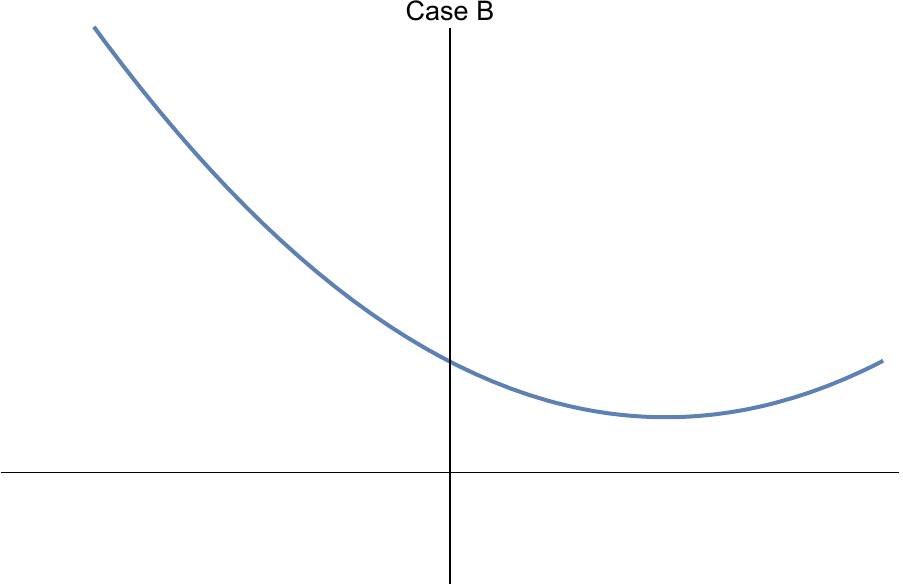}\quad
\includegraphics[width=110pt, height=110pt]{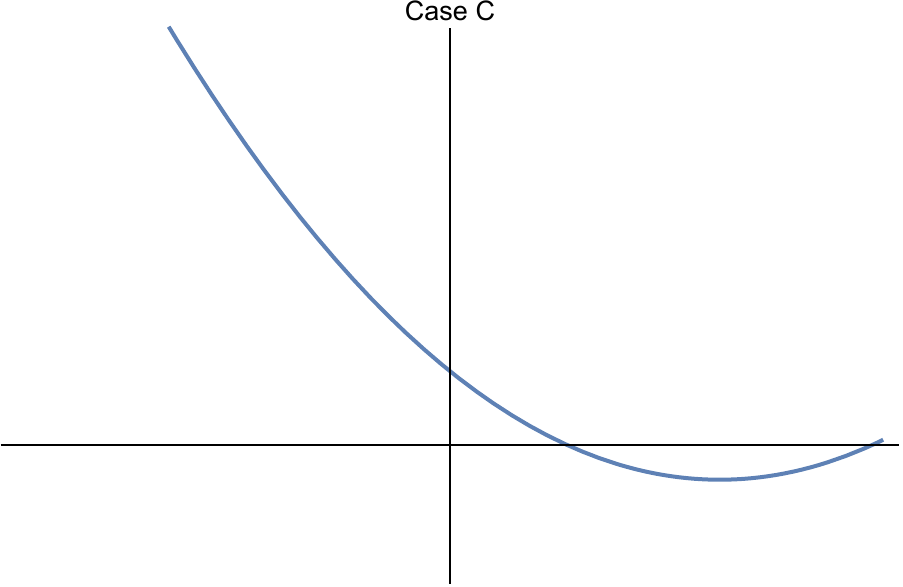}\quad
\includegraphics[width=110pt, height=110pt]{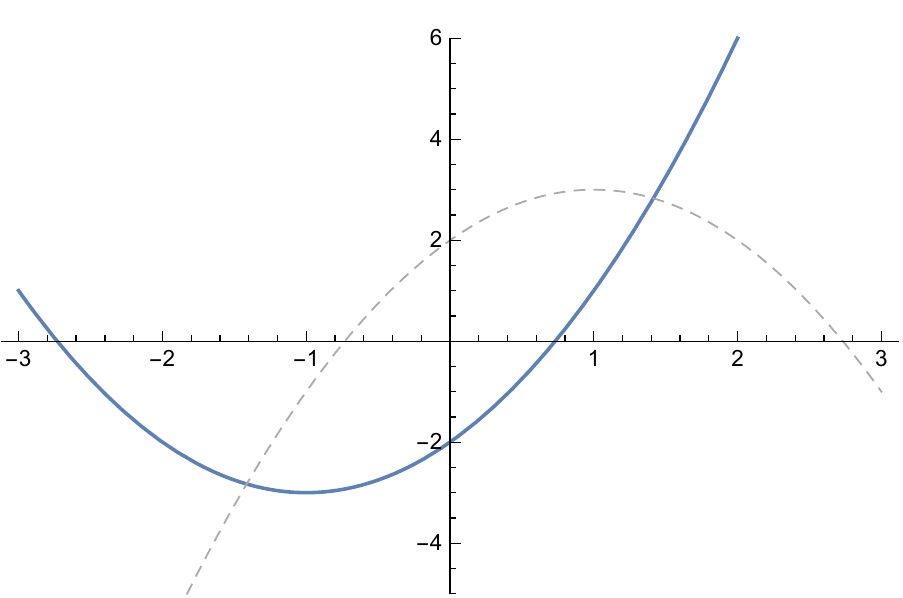}\quad
\caption{Here we have plotted the quadratic function $G(p,w)$ as a function of $w$ for the four cases described in \cite{GGP19}.  In cases A and B there are no roots and the solution $\psi(p,t)$ to \eqref{eq:RiccEq} increases without bound whereas in cases C and D we have a stable fixed point (the lesser of the two roots) and an unstable root, so a solution starting at the origin increases (decreases) until it reaches the stable fixed fixed point. For Case D we have also drawn the curve arising from the reflection transformation used in the proof in Appendix C.  }
\bs
\includegraphics[width=120pt, height=130pt]{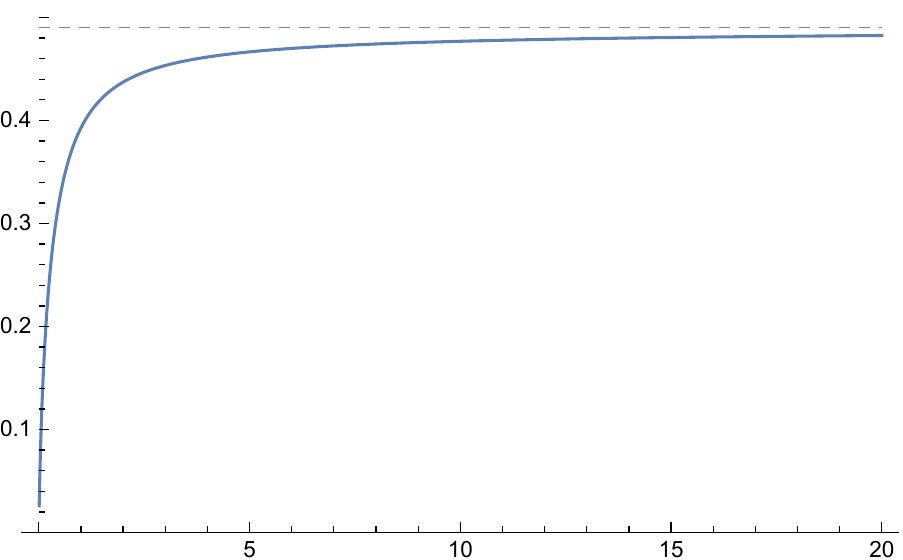} \quad
\includegraphics[width=120pt, height=130pt]{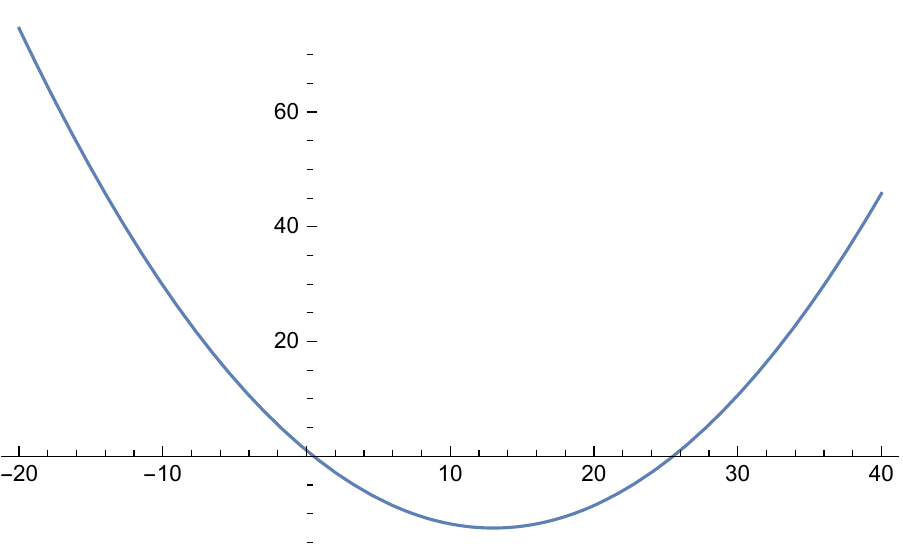} \quad
\includegraphics[width=120pt, height=130pt]{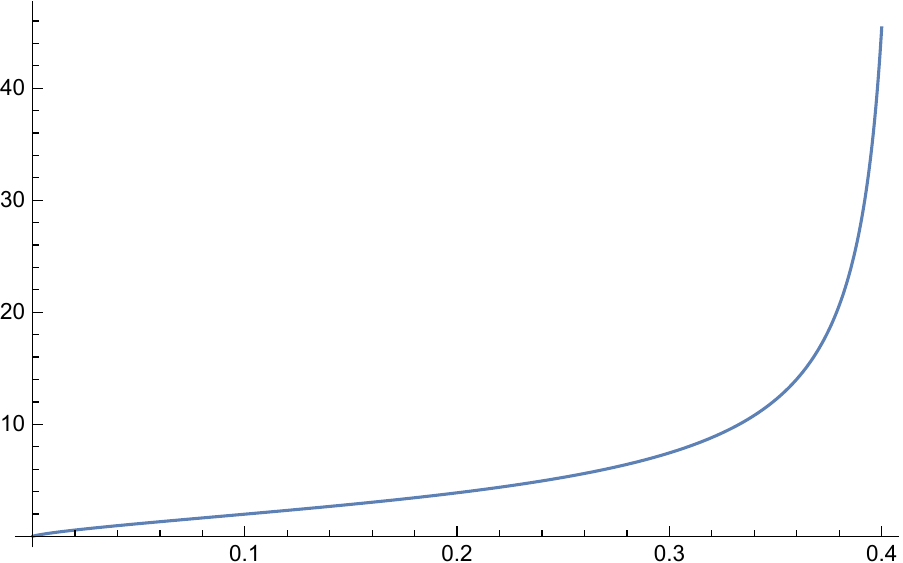}\quad
\includegraphics[width=120pt, height=130pt]{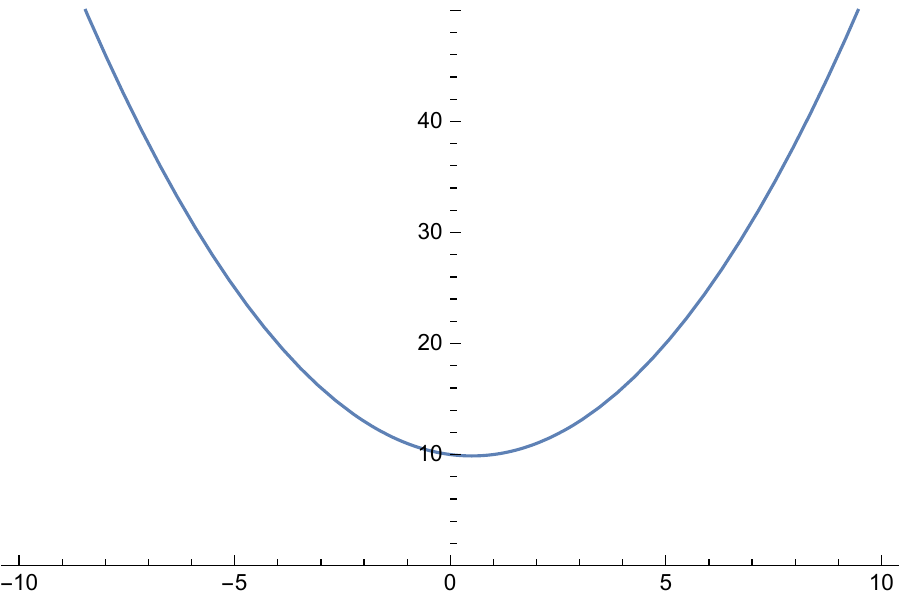}
\caption{Here we have solved for the solution $f(p,t)$ to \eqref{eq:FDE0} numerically by discretizing the VIE with 2000 time steps, and plotted $f(p,t)$ a function of $t$ and the corresponding quadratic function $G(p,w)$ as a function of $w$ with $p$ fixed. In the first case $\al=.75$, $\lm=2$, $\rho=-0.1$, $\nu=.4$ and $p=2$ and $f(p,t)$ tends to a finite constant, and in the second case $\al=.75$, $\lm=1$, $\rho=0.1$, $\nu=1$ and $p=5$ and we see that $f(p,t)$ has an explosion time at some $T^*(p)\approx 0.4$.}
\end{center}
\end{figure}

\begin{figure}
\centering
\begin{center}
\includegraphics[width=180pt, height=180pt]{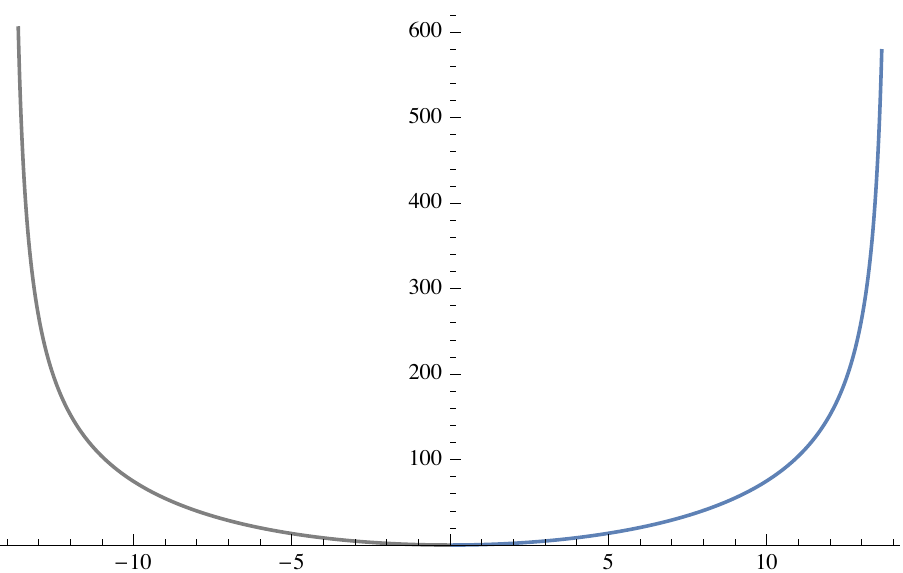}\quad\quad
\includegraphics[width=180pt, height=180pt]{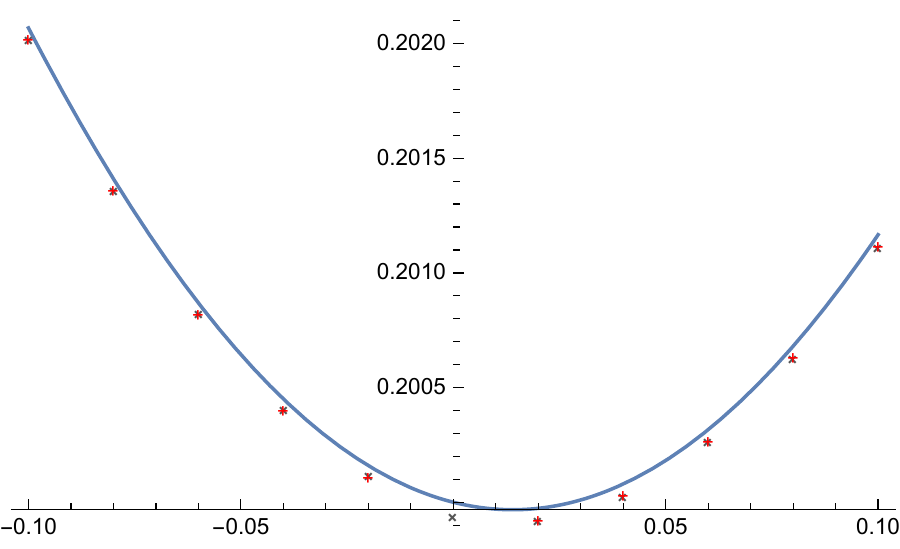}
\caption{On the left we have plotted $\Lm(p)$ using an Adams scheme to numerically solve the VIE in \eqref{eq:RiccEq} with 2000 time steps combined with Corollary \ref{cor:SR}, for $\al=0.75$, $V_0=.04$, $\nu=.15$, $\rho=-0.02$, and we find that $p_+=T^*(1)\approx34.5$ and $p_-=T^*(-1)\approx 33.25$.  On the right we have plotted the corresponding asymptotic small-maturity smile $\hat{\sigma}(x)$ (in blue) verses the higher order approximation using Eq \eqref{eq:numm} (red ``+" signs), and the smile points obtained from a simple Euler-type Monte Carlo scheme with maturity $T=.00005$, $10^5$ simulations and 1000 time steps in Matlab (grey crosses), Matlab and Mathematica code available on request.  We did not use the Adams scheme to compute $\hat{\sigma}(x)$; rather have used the first 15 terms in the series expansion for $\bar{\Lm}(p)$ in subsection \ref{ss:ss} and then numerically computed its Fenchel-Legendre transform and used this to compute $I(x)$ and hence $\hat{\sigma}(x)$.  We see that the Monte Carlo and higher order smile points can barely be distinguished by the naked eye.  For $|x|$ small, we have found this method of computing $\hat{\sigma}(x)$ to be far superior to using an Adams scheme, since the numerical computation of the fractional integral $I^{1-\al}f(p,t)$ for $|t|\ll 1$ can lead to numerical artefacts when computing the FL transform of $\bar{\Lm}(p,1)$ close to $x=0$.
}
\bs
\includegraphics[width=180pt, height=180pt]{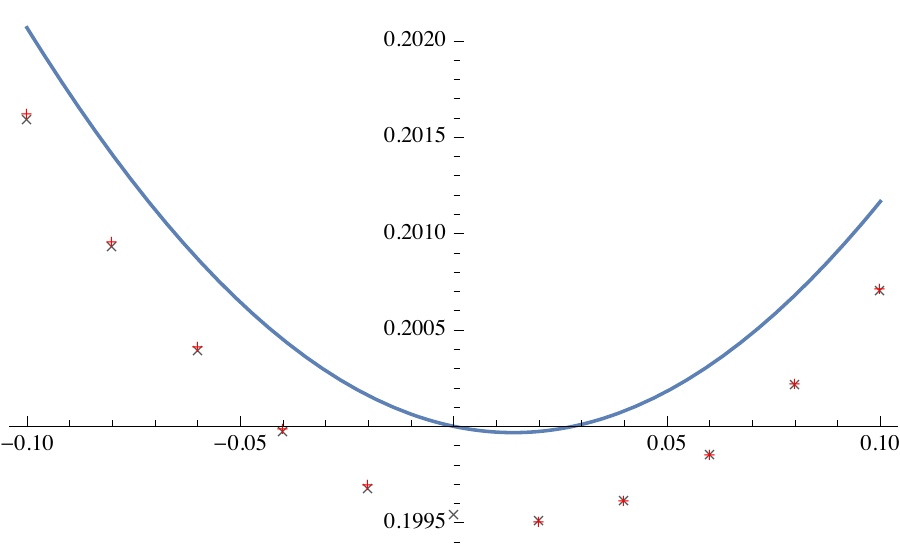} \quad  \quad
\includegraphics[width=180pt, height=180pt]{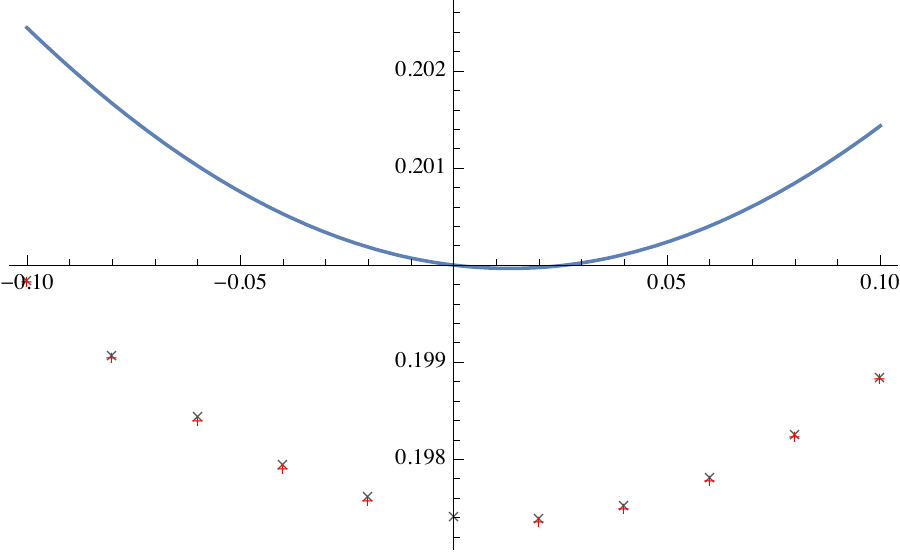}
\caption{On the left here we have the same plot as above but with $T=.005$ and for the right plot $T=.005$
and $\al=.6$ (i.e. $H=0.1$), and again we see that the higher order approximation makes a significant improvement over the leading order smile.  Of course we would not expect such close agreement for smaller values of $\al$, or larger values of $T$, $|x|$ or $|\rho|$, e.g. $\rho= -0.65$ reported in e.g. \cite{EGR18}, but the point here is really just to verify the correctness of the asymptotic formula in \eqref{eq:imp vol}, and give a starting point for other authors/practitioners who wish to test refinements/variants of our formula.  We have not repeated numerical results for the large-time case at the current time, since it is intuitively fairly clear that our large maturity formula is correct (since it just boils down to computing the stable fixed point of the VIE) and for maturities $\approx$ 30 years with a small step-size, the code would take a prohibitively long time to give good results given that each simulation takes $O(N^2)$ for a rough model (where $N$ is the number of time steps), and it is difficult to verify the formula numerically even for the standard Heston model.}
\end{center}
\end{figure}

 \newpage

\begin{center}
\begin{tabular}{ |c|c|c|c|c|c| }
 \hline
$x$ \quad\quad\quad& $\hat{\sigma}(x)$ & \small{Higher order $T=.00005$}   & \small{Monte Carlo $T=.00005$} &\small{Higher order $T=.005$} & \small{Monte Carlo $T=.005$}  \\ \hline
-0.10   & 20.2068\% & 20.2023\% & 20.2020\%  &20.1615\%&  20.1589
\%\\ \hline
-0.08   & 20.141\% & 20.1364\% & 20.1363\% &20.0953\%& 20.0931\% \\ \hline
-0.06   & 20.0869\% & 20.0822\%  & 20.0824\%&20.0407\%  & 20.0388\% \\ \hline
-0.04   & 20.045\% & 20.0404\% & 20.0407\%  & 19.9986\%& 19.9968\% \\ \hline
-0.02   & 20.016\% & 20.0113\% & 20.0119\% & 19.9693\% & 19.9676\%  \\ \hline
0.00   & 20.0000\% & - & 19.9942\% &-& 19.9513\% \\ \hline
0.02   & 19.9973\% & 19.9926\% & 19.9921\%  &19.9503\%& 19.9509\% \\ \hline
0.04   & 20.0079\% & 20.0033\% & 20.0029\% & 19.9610\% &  19.9613 \%  \\ \hline
0.06   & 20.0316\% &20.0270\% & 20.0266\% & 19.9850\%& 19.9850\%  \\ \hline
0.08   & 20.068\% & 20.0634\% &  20.0629\%  & 20.0218\% & 20.0213\% \\ \hline
0.10   & 20.1166\% &20.1120\% & 20.1114\%  &20.0709\%& 20.0699\%\\ \hline
\end{tabular}
\end{center}

\sk
\sk

Table of numerical results corresponding to the right plot in Figure 3 and the left plot in Figure 4.
\sk
\sk

\section{Large-time asymptotics}

\sk

In this section, we derive large-time large deviation asymptotics for the Rough Heston model, and we begin making the following assumption throughout this section:
\begin{Assumption}
$\lm>0$, $\rho \le 0$.
\end{Assumption}

\sk

\sk
Recall that $f(p,t)$ in \eqref{eq:SP} satisfies
\be
D^{\al}f(p,t) = H(p,f(p,t)) 
\label{eq:Reel}
\ee
subject to $f(p,0)=0$, where
$
  H(p,w):=\half p^2 -\half p +  (p \rho \nu-\lm) w +\half \nu^2  w^2.
$
We write
\be
  U_1(p) := \frac{1}{\nu^2} [\lm - p\rho \nu -\sqrt{\lm^2-2\lm \rho \nu p +\nu^2 p (1-p\bar{\rho}^2 })] \nn
\ee
for the smallest root of $H(p,.)$, and note that $U_1(p)$ is real if and only if $p \in [\ul{p},\bar{p}]$, where
\begin{align}
\ul{p}:=\frac{\nu-2\lm \rho-\sqrt{4\lm^2+\nu^2-4\lm \rho \nu}}{2\nu (1-\rho^2)} \quad , \quad \bar{p}:=\frac{\nu-2\lm \rho+\sqrt{4\lm^2+\nu^2-4\lm \rho \nu}}{2\nu (1-\rho^2)} \nn \,.
\end{align}

\begin{prop}
\label{prop:props}
\be
V(p):=\lim_{t\to \infty}\frac{1}{t}\log \Ex(e^{p X_t}) =   \nn
\begin{cases}
  \lm \theta U_1(p) \quad \quad  \,\,p \in [\ul{p},\bar{p}], \\
  +\infty  \quad  \quad \quad \quad p\notin [\ul{p},\bar{p}].
\end{cases}
\ee
\end{prop}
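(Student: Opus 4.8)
The plan is to read the statement off the Laplace-transform representation \eqref{eq:SP}, reducing everything to the large-time behaviour of the fractional Riccati solution $f(p,t)$ from \eqref{eq:Reel}: I will show that $f(p,t)$ converges to the lower root $U_1(p)$ of $H(p,\cdot)$ when $p\in[\ul p,\bar p]$, and blows up in finite time when $p\notin[\ul p,\bar p]$.

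First I would record the elementary algebra. The discriminant of the upward parabola $w\mapsto H(p,w)=\half\nu^2 w^2+(p\rho\nu-\lm)w+\half(p^2-p)$ equals $(p\rho\nu-\lm)^2-\nu^2 p(p-1)$, a downward parabola in $p$ whose roots are exactly $\ul p$ and $\bar p$; hence $H(p,\cdot)$ has two (possibly coinciding) real roots $U_1(p)\le U_2(p)$ precisely when $p\in[\ul p,\bar p]$. Using $\lm>0$ and $\rho\le 0$ one checks that $\lm-p\rho\nu>0$ throughout $[\ul p,\bar p]$ (the point $p_0=\lm/(\rho\nu)$ at which it vanishes has negative discriminant, so it lies outside the interval), so the roots have strictly positive sum; together with $U_1(p)U_2(p)=(p^2-p)/\nu^2$ this gives $0\le U_1(p)<U_2(p)$ for $p\notin(0,1)$ and $U_1(p)<0<U_2(p)$ for $p\in(0,1)$. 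In every case $0\in(-\infty,U_2(p))$, the basin of the stable root.

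The core step is to show that for $p\in[\ul p,\bar p]$ the solution $f(p,t)$ is global and $f(p,t)\to U_1(p)$ as $t\to\infty$. I would use the comparison principle for fractional Riccati equations (cf.\ \cite{GGP19} and ch.\ 12 of \cite{GLS90}): since $D^{\al}c=ct^{-\al}/\Gamma(1-\al)\to 0$ while $I^{1-\al}c\sim ct^{1-\al}/\Gamma(2-\al)$, if $f(p,t)\to L$ with $H(p,L)\ne 0$ then $I^{1-\al}f(p,t)$ would grow linearly (from $D^{\al}f=H(p,f)$ bounded away from $0$) yet be $o(t)$ — a contradiction — so any finite limit must be a root of $H(p,\cdot)$; comparing $f$ with the constant sub/supersolutions $U_1(p),U_2(p)$ (using that $0$ starts between them, resp.\ below $U_1(p)$ in Case D, and the sign of $H(p,f)$) forces the limit to be $U_1(p)$, the delicate Case D being handled via the reflection transformation of Figure 1 carried out in Appendix C. Granting $f(p,t)\to U_1(p)$: a Ces\`aro argument gives $\frac1t I^1 f(p,t)=\frac1t\int_0^t f(p,s)\,ds\to U_1(p)$, while boundedness of $f$ gives $|I^{1-\al}f(p,t)|\le\|f\|_\infty\,t^{1-\al}/\Gamma(2-\al)=o(t)$, hence $\frac1t I^{1-\al}f(p,t)\to 0$; substituting into \eqref{eq:SP} yields $\frac1t\log\Ex(e^{pX_t})=\frac{V_0}{t}I^{1-\al}f(p,t)+\frac{\lm\theta}{t}I^1 f(p,t)\to\lm\theta U_1(p)$.

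For $p\notin[\ul p,\bar p]$ the parabola $H(p,\cdot)$ has no real root, so $H(p,w)\ge\delta>0$ for all $w$; by Propositions 2 and 3 of \cite{GGP19} the solution $f(p,t)$ then explodes at a finite time $T^*(p)$ (the bound $D^{\al}f\ge\delta$ drives $f$ past the level beyond which $H(p,w)\ge\frac14\nu^2 w^2$, after which finite-time blow-up follows by the usual fractional-Riccati comparison), and moment explosion gives $\Ex(e^{pX_t})=+\infty$ for $t>T^*(p)$, so $V(p)=+\infty$. I expect the second step to be the main obstacle: for the non-local operator $D^{\al}$ the monotonicity and basin-of-attraction arguments that are immediate for first-order ODEs require care — constants are not exact equilibria of the Riemann--Liouville derivative — so one must argue via the comparison principle and the reflection trick rather than trajectory monotonicity alone, and the hypotheses $\lm>0$, $\rho\le0$ are used precisely to ensure the lower root $U_1(p)$, rather than an unstable root or blow-up, is selected.
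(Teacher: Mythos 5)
Your overall reduction is the same as the paper's: the dichotomy is decided by explosion versus global existence of the fractional Riccati solution $f(p,t)$ of \eqref{eq:Reel}; on $[\ul{p},\bar{p}]$ boundedness of $f$ gives $I^{1-\al}f(p,t)=o(t)$, so by \eqref{eq:SP} everything hinges on the Ces\`aro limit $\frac1t I^1 f(p,t)\to U_1(p)$; and for $p\notin[\ul{p},\bar{p}]$ the absence of real roots of $H(p,\cdot)$ forces finite-time blow-up (Propositions 2 and 3 of \cite{GGP19}), hence $V(p)=+\infty$. Your discriminant bookkeeping and the observation $\lm-p\rho\nu>0$ on $[\ul{p},\bar{p}]$ is a legitimate shortcut for what the paper does by quoting Eqs (3.4)--(3.5) and p.769 of \cite{FJ11} to exclude cases A and D of the \cite{GGP19} classification.

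The gap is in your core step, the claim $f(p,t)\to U_1(p)$. You derive it from (i) boundedness, (ii) ``any finite limit must be a root of $H(p,\cdot)$'', and (iii) comparison with the constant sub/supersolutions $U_1(p),U_2(p)$. But (ii) and (iii) presuppose that a limit exists, and boundedness alone does not give that; moreover, as you yourself concede, constants are not equilibria of the Riemann--Liouville dynamics, so ``comparison with constant sub/supersolutions'' is not an off-the-shelf step --- making it precise is exactly the content of the paper's Lemma \ref{lem:Benlemma}, proved in Appendix C via analyticity of the solution (\cite{MF71}) and the Mann--Wolf kernel-ratio argument \cite{MW51}, with the reflection transformation for case D; you invoke this but do not prove it. The paper's proof of the Proposition deliberately avoids needing it: from $0\le f\le U_1(p)$ and $f=I^{\al}H(p,f)$ it extracts the occupation-time estimate $\frac{H_{\delta}}{\Gamma(\al)}\int_0^t(t-s)^{\al-1}1_{\{f(p,s)\le U_1(p)-\delta\}}\,ds<U_1(p)$, which shows the time spent below $U_1(p)-\delta$ on $[1,t]$ is $O(t^{1-\al})=o(t)$, and hence $\frac1t\int_0^t f(p,s)\,ds\to U_1(p)$ without any pointwise convergence of $f$ (the paper explicitly remarks that $f(p,t)\to U_1(p)$ is not needed for this Proposition and proves it separately only because it matters at higher order). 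So either import and actually prove the monotonicity lemma (including the case-D reflection), or replace your step by the occupation-time estimate; as written, the convergence of the Ces\`aro mean --- the heart of the proof --- is asserted rather than established.
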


\sk

\nind \begin{proof}
\sk
\cite{GGP19} show that the explosion time for the Rough Heston model $T^*(p)<\infty$ if and only if $T^*(p)<\infty$ for the corresponding standard Heston model (i.e. the case $\al=1$).

\sk
\sk
From the usual quadratic solution formula $\frac{-b\pm\sqrt{b^2-4ac}}{2a}$, we know that $H(p,.)$ has two distinct
real roots (or a single root) if and only if
\be
(\lm-\rho p \nu)^2 \ge (p^2-p)\nu^2\label{eq:AP}
\ee
which is the same as the condition $e_1(p)\ge 0$ in condition C) in  \cite{GGP19}.
We note that $\bar{p}$, $\ul{p}$ are the zeros of $e_1(p)$.

\bs
We now have to verify that under our assumptions that $\lm>0$ and $\rho\le 0$, $T^*(p)<\infty$
if and only $e_1(p)< 0$.
We have two cases to consider to verify this claim:
\begin{itemize}
\item Suppose $e_1(p)\ge 0$. Then case B in \cite{GGP19} is impossible by definition, and
$p\in[\ul{p},\bar{p}]$, and Eq (3.5) in \cite{FJ11} is satisfied. Eq (3.4) in \cite{FJ11} is
\[
\lm  > \rho \nu p \nn
\]
in our current notation, and by the assertion on p.769 in \cite{FJ11} that ``(3.4) is implied by (3.5)'',
we see that it holds, which is equivalent to $e_0(p)<0$. Therefore, case A is impossible.
 So we are in the non-explosive
cases C or D of the \cite{GGP19} classification.
We note that case C is by definition equivalent now to $c_1(p)>0$.
\item Suppose $e_1(p)<0$. By definition we are not in case C. And we have $p \notin
[\ul{p}, \bar{p}]$, but from p.769 in \cite{FJ11}, we know the interval $[0,1]$ is strictly contained
in $[\ul{p}, \bar{p}]$.  Hence, case D is also impossible, and we are in the explosive cases
A or B.
\end{itemize}


\nind Hence our claim is verified. We can now re-write \eqref{eq:Reel} in integral form as
\be
f(p,t) = \frac{1}{\Gamma(\al)}\int_0^t (t-s)^{\al-1}H(p,f(p,s)) ds \nn.
\ee
Clearly, we have $H(p,w)\searrow 0$ as $w\nearrow U_1(p)$.  Assume to begin with that $U_1(p)>0$ (by
an easy calculation, this is exactly case C in the \cite{GGP19} classification).  Then from the proof of Proposition 4 in \cite{GGP19}, we know that $0\le f(p,t)\le U_1(p)$.

 Moreover, $w^*=U_1(p)$ is the smallest root of $H(p,w)$, so $H(p,w)\ge H_{\delta}:=H(p,U_1(p)-\delta)$ for $w\leq  U_1(p)-\delta$ and  $\delta \in (0,U_1(p))$; hence we must have
\be
\frac{H_{\delta}}{\Gamma(\al)} \int_0^t (t-s)^{\al-1} 1_{f(p,s)\le U_1(p)-\delta} \,ds < U_1(p)\nn
\ee
for all $t>0$.  This implies that
$
\frac{H_{\delta}}{\Gamma(\al)} \, (t-1)^{\al-1} \int_1^t  1_{f(p,s)\le U_1(p)-\delta} ds < U_1(p) \nn
$, or equivalently
\be
t-1- \int_1^t  1_{f(p,s)> U_1(p)-\delta}\, ds  \le \frac{\Gamma(\al)}{H_{\delta}}U_1(p)(t-1)^{1-\al} \nn \,.
\ee

Then we see that
\begin{align}
\frac{1}{t}\int_0^t f(p,s) ds \ge \frac{1}{t}\int_1^t f(p,s) ds &\ge   \frac{1}{t}\int_1^t f(p,s) 1_{f(p,s)>U_1(p)-\delta}ds \nn \\
&\ge \frac{1}{t} (U_1(p)-\delta) (t-1- \frac{\Gamma(\al)}{H_{\delta}}U_1(p)(t-1)^{1-\al} ) \nn \\
&\ge U_1(p)-2\delta \nn
\end{align}
for $t$ sufficiently large.  Thus
$
U_1(p)-2\delta \le \frac{1}{t}\int_0^t f(p,s) ds \le U_1(p) \nn \,
$, so $\frac{1}{t}\int_0^t f(p,s) ds \to U_1(p)$ as $t\to \infty$.  Then using that
\be
\log \Ex(e^{p X_t}) = V_0 I^{1-\al}f(p,t) +\lm \theta I f(p,t) \nn \,
\ee
and that $f(p,t)$ is bounded, the result follows.   We proceed similarly for the case $U_1(p)<0$ (i.e. case D in the \cite{GGP19} classification, see also Lemma \ref{lem:Benlemma}).\end{proof}

\sk
\sk
\begin{cor}
 $X_t/t$ satisfies the LDP as $t \to \infty$ with speed $t$ and rate function $V^*(x)$ equal to the Fenchel-Legendre transform of $V(p)$, as for the standard Heston model.
\end{cor}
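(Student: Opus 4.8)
The plan is to obtain the LDP directly from Proposition \ref{prop:props} by invoking the G\"{a}rtner--Ellis theorem (Theorem 2.3.6 in \cite{DZ98}) for the family $X_t/t$ with speed $t$. Proposition \ref{prop:props} already supplies the analytic heavy lifting: it shows that the limiting scaled cumulant generating function
\[
  V(p)\df\lim_{t\to\infty}\frac1t\log\Ex(e^{pX_t})
\]
exists in $[-\infty,+\infty]$ for every $p\in\mathbb R$, equal to $\lm\theta U_1(p)$ on $[\ul p,\bar p]$ and to $+\infty$ off this interval. So the only thing I would need to check is that $V$ meets the hypotheses of G\"{a}rtner--Ellis: that $0$ lies in the interior of the effective domain $\mc D_V\df\{p:V(p)<\infty\}$; that $V$ is lower semicontinuous (and convex); and that $V$ is essentially smooth, i.e.\ differentiable on $\mathrm{int}\,\mc D_V$ and steep at $\partial\mc D_V$.

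First I would note that $\mc D_V=[\ul p,\bar p]$, and from the discussion on p.769 of \cite{FJ11} recalled above, $[0,1]$ is strictly contained in $(\ul p,\bar p)$, so in particular $0\in\mathrm{int}\,\mc D_V$. Convexity of $V$ is automatic, being a pointwise limit of the convex maps $p\mapsto\frac1t\log\Ex(e^{pX_t})$ (H\"{o}lder's inequality), and lower semicontinuity holds because $U_1$ is continuous on $[\ul p,\bar p]$ while $V\equiv+\infty$ outside it. For essential smoothness I would observe that the quantity under the square root in the definition of $U_1$ is precisely $e_1(p)=(\lm-\rho\nu p)^2-(p^2-p)\nu^2$, a downward-opening parabola whose roots are exactly $\ul p<\bar p$ (these are strict since, under the standing assumptions $\lm>0$, $\nu>0$, $\rho\le0$, we have $4\lm^2+\nu^2-4\lm\rho\nu\ge4\lm^2+\nu^2>0$). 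Hence $e_1>0$ on $(\ul p,\bar p)$, so $U_1$, and therefore $V$, is real-analytic there; and because $\ul p,\bar p$ are simple zeros of $e_1$,
\[
  U_1'(p)=\frac1{\nu^2}\Bigl(-\rho\nu-\frac{e_1'(p)}{2\sqrt{e_1(p)}}\Bigr)\longrightarrow
  \begin{cases}+\infty,&p\to\bar p^-,\\ -\infty,&p\to\ul p^+,\end{cases}
\]
since $e_1'(\bar p)<0<e_1'(\ul p)$ and $e_1(p)\downarrow0$ at the endpoints. Multiplying by $\lm\theta>0$ gives $|V'(p)|\to\infty$ as $p\to\partial\mc D_V$, which is the steepness condition.

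With these facts in hand, G\"{a}rtner--Ellis delivers the full LDP for $X_t/t$ as $t\to\infty$ with speed $t$ and good rate function the Fenchel--Legendre transform $V^*(x)=\sup_{p\in[\ul p,\bar p]}\bigl(px-\lm\theta U_1(p)\bigr)$; since $U_1$, $\ul p$ and $\bar p$ agree with the corresponding objects for the classical Heston model, $V^*$ is exactly the large-time Heston rate function of \cite{FJ11}, as claimed. The only point requiring a little care — though I do not expect it to be a genuine obstacle — is the steepness at the two endpoints $\ul p$ and $\bar p$; this rests solely on their being simple zeros of $e_1$, which could fail only if $\ul p=\bar p$, and that is excluded both by $4\lm^2+\nu^2-4\lm\rho\nu>0$ and by $[0,1]\subset(\ul p,\bar p)$. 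Hence the boundary behaviour is robust and the G\"{a}rtner--Ellis machinery applies verbatim.
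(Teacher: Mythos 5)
Your proposal is correct and follows essentially the same route as the paper: the paper's proof likewise observes that $U_1'(p)\to+\infty$ as $p\to\bar{p}$ and $U_1'(p)\to-\infty$ as $p\to\ul{p}$, so that $\lm\theta U_1$ is essentially smooth, and then applies the G\"{a}rtner--Ellis theorem to conclude the LDP with rate function $V^*$. You merely spell out the routine verifications (convexity, $0\in\mathrm{int}\,\mc{D}_V$, and the explicit derivative computation showing steepness via the simple zeros of $e_1$) that the paper leaves implicit.
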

\begin{proof}
Since $U_1'(p)\to +\infty$ as $p\to \bar{p}$ and $U_1'(p)\to -\infty$ as $p\to \ul{p}$, the function
 $\lm \theta U_1(p)$ is essentially smooth; so the stated LDP follows from the G\"{a}rtner-Ellis theorem in large deviations theory.
\end{proof}
\begin{rem}
We can easily add stochastic interest rates into this model by modelling the short rate $r_t$ by
an independent Rough Heston process, and proceeding as in \cite{FK16} (we omit the details),
see also \cite{F11}.
\end{rem}

\sk


\sk

Note that we have not proved that $f(p,t)\to U_1(p)$, but to establish the leading order behaviour in
Proposition \ref{prop:props}, this is not necessary, rather we only needed to show that $I^1 f(p,t)\sim t U_1(p)$.  Nevertheless, this convergence would be required to go to higher order, so for completeness we prove this property as well, as a special case of the following general result:

\begin{lemma}
\label{lem:Benlemma}
Consider functions $G(y)$ and $K(z)$ which satisfy the following:
\begin{itemize}
    \item $G(y)$ is analytic and increasing on $[0,y_0]$ and decreasing on $[y_0,\infty)$ where $y_0\geq 0$;
    \item $G(0)\geq 0$;
    \item $K(z)$ is positive, continuous and strictly decreasing for $z>0$;
    \item $\int_0^tK(z)dz$ is finite for each $t>0$ and diverges as $t\rightarrow \infty$;
    \item $K(z+\al)/K(z)$ is strictly increasing in z for each fixed $\al$ greater than zero.
\end{itemize}
Then the solution to $y(t)=\int_0^t K(t-s)G(y(s))ds$ is monotonically increasing, and if $G$ has at least one positive root then $y(t)$ converges to the smallest positive root of $G$ as $t\rightarrow \infty$.
\end{lemma}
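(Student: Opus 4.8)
The plan is to prove the two assertions — monotonicity of $y$, and convergence to the smallest positive root $y^*$ of $G$ — separately, with monotonicity first since it feeds into the convergence argument. For monotonicity, I would argue directly from the Volterra equation $y(t)=\int_0^t K(t-s)G(y(s))ds$. Since $K>0$ and $G(0)\ge 0$, one first checks $y(t)\ge 0$ for small $t$ (the integrand near $s=0$ is $K(t-s)G(0)\ge 0$), and then that $y$ stays nonnegative; a comparison/continuity argument (using that the solution cannot cross $0$ from above because at a putative first zero after a positive excursion the integrand would still have to integrate to $0$, impossible unless $G\equiv 0$ on the range) gives $y(t)\ge 0$ for all $t$. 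To get $y'\ge 0$, write $y(t+h)-y(t)$ by splitting the integral:
\be
y(t+h)-y(t)=\int_0^{t}\big(K(t+h-s)-K(t-s)\big)G(y(s))ds+\int_t^{t+h}K(t+h-s)G(y(s))ds. \nn
\ee
The second term is nonnegative once we know $G(y)\ge 0$ on the relevant range (which holds if $y$ stays in $[0,y^*]$, handled below). The first term is where the hypothesis that $K(z+\al)/K(z)$ is increasing — equivalently, roughly, log-convexity-type control of $K$ — is used: it forces the "weight shift" $K(t+h-s)-K(t-s)$ to be sign-controlled in a way compatible with $G(y(s))\ge 0$, so the whole expression is $\ge 0$. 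I expect this first-term estimate to be the main obstacle: one must extract monotonicity of $y$ from the kernel's ratio-monotonicity without assuming $K$ itself is monotone in a naive way, presumably via an induction on intervals or a careful application of the resolvent/comparison machinery (cf. Proposition 4 in \cite{GGP19}).

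For the a priori bound $0\le y(t)\le y^*$: suppose $G$ has a smallest positive root $y^*$, so $G>0$ on $(0,y^*)$ and $G(0)\ge0$. If $y$ ever exceeded $y^*$, consider the first time $t_1$ it does; just before $t_1$ we would need the integrand, which involves $G(y(s))$ with $y(s)\le y^*$, hence $G(y(s))\ge 0$, to push $y$ above $y^*$, but bounding $\int_0^{t_1}K(t_1-s)G(y(s))ds\le \big(\sup_{[0,y^*]}G\big)\int_0^{t_1}K(z)dz$ is not immediately $\le y^*$, so instead I would use the monotonicity just proved together with the fixed-point structure: if $y$ is increasing and bounded it has a limit $\ell$, and if $y$ is increasing and unbounded we derive a contradiction. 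Concretely, once $y$ is known increasing, either it is bounded or $y(t)\to\infty$; in the latter case, for large $t$, $y(s)>y_0$ on a long interval so $G(y(s))$ is bounded (indeed eventually decreasing, possibly negative), and then the Volterra equation together with $\int_0^t K\to\infty$ would be inconsistent with $y(t)\to\infty$ unless $\liminf G(y(s))>0$, contradicting that $G$ decreases past $y_0$ — this rules out blow-up, giving boundedness and hence a finite limit $\ell\in[0,\infty)$.

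Finally, to identify $\ell=y^*$: given $y$ increasing to $\ell$, for any $\delta>0$ we have $y(s)\in(\ell-\delta,\ell]$ for $s\ge t_\delta$, so $G(y(s))$ is pinned near $G(\ell)$ (by continuity of $G$), and then
\be
y(t)=\int_0^{t_\delta}K(t-s)G(y(s))ds+\int_{t_\delta}^{t}K(t-s)G(y(s))ds \nn
\ee
with the first term $\to 0$ (since $K(t-s)\to 0$ uniformly for $s\le t_\delta$ as $t\to\infty$, using $K$ strictly decreasing and $\int_0^\infty$ behaviour) — wait, more carefully one uses $\int_0^{t_\delta}K(t-s)ds=\int_{t-t_\delta}^{t}K(u)du$, a tail of the integral, which need not vanish; so instead I would divide: if $G(\ell)\ne 0$ then $y(t)\ge \big(\inf_{[\ell-\delta,\ell]}G\big)\int_0^{t-t_\delta}K(z)dz\to\infty$, contradicting boundedness, hence $G(\ell)=0$; and since $y$ stays in $[0,\ell]$ with $y$ increasing from $0$, $\ell$ must be the \emph{smallest} positive root (if a smaller root $y^{**}<\ell$ existed, the same argument applied at the first time $y$ crosses $y^{**}$ would force $G(y(s))$ near a root, stalling the growth). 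This pins $\ell=y^*$. Applying the lemma with $G=H(p,\cdot)$, $K(z)=z^{\al-1}/\Gamma(\al)$ (which is positive, strictly decreasing, integrable on $[0,t]$, divergent integral, and has $K(z+\al_0)/K(z)=\big(1+\al_0/z\big)^{\al-1}$ strictly increasing in $z$ since $\al-1<0$) recovers $f(p,t)\to U_1(p)$ in case C, completing the promised strengthening.
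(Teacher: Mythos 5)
Your second and third paragraphs (monotone $+$ bounded $\Rightarrow$ a limit $\ell$ exists; if $G(\ell)\neq 0$ then $y(t)\ge \bigl(\inf_{[\ell-\delta,\ell]}G\bigr)\int_0^{t-t_\delta}K(z)\,dz\to\infty$ since $\int_0^t K$ diverges, forcing $G(\ell)=0$ and, by unimodality and analyticity of $G$, $\ell$ equal to the smallest positive root) are essentially the right way to finish, and your check that $K(z)=z^{\alpha-1}/\Gamma(\alpha)$ satisfies the ratio hypothesis is correct. The genuine gap is in the monotonicity step, which is the heart of the lemma and which you yourself flag as ``the main obstacle'' without resolving. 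Your decomposition of $y(t+h)-y(t)$ cannot be closed by sign considerations: since $K$ is strictly decreasing, $K(t+h-s)-K(t-s)<0$, so precisely on the region where $G(y(s))\ge 0$ the first integral is \emph{negative}, and the increment is a competition between this negative redistribution term and the positive boundary term $\int_t^{t+h}K(t+h-s)G(y(s))\,ds$; the ratio-monotonicity of $K$ does not ``sign-control'' the first term, and no quantitative comparison is supplied to decide the competition. As written, the first (and main) assertion of the lemma is not proved.

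The paper closes exactly this step with a different, sharper device (following Mann--Wolf): one first invokes analyticity of the solution (for the power kernel this is Miller--Feldstein, Theorem 6 in [MF71]), so the equation may be differentiated to give $y'(t)=K(t)G(0)+\int_0^t K(t-s)G'(y(s))y'(s)\,ds$, and so that zeros of $y'$ are isolated. This formula shows $y'>0$ while $y\le y_0$ (there $G'\ge 0$). If $y'$ first vanished at some $t_0$, pick $t_1>t_0$ with $y'<0$ on $(t_0,t_1]$; writing $K(t_1)=\frac{K(t_1)}{K(t_0)}K(t_0)$ and $K(t_1-s)=\frac{K(t_1-s)}{K(t_0-s)}K(t_0-s)$, applying the mean value theorem for integrals to extract a ratio $\frac{K(t_1-\tau)}{K(t_0-\tau)}$ with $\tau\in(0,t_0)$, and substituting the identity $\int_0^{t_0}K(t_0-s)G'(y(s))y'(s)\,ds=-K(t_0)G(0)$ (which is just $y'(t_0)=0$), one obtains $y'(t_1)=K(t_0)G(0)\bigl(\frac{K(t_1)}{K(t_0)}-\frac{K(t_1-\tau)}{K(t_0-\tau)}\bigr)+\int_{t_0}^{t_1}K(t_1-s)G'(y(s))y'(s)\,ds$, where the bracket is strictly positive precisely because $K(z+\alpha)/K(z)$ is strictly increasing, and the last integrand is positive since $G'<0$ and $y'<0$ there; hence $y'(t_1)>0$, a contradiction. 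Some argument of this kind (or an equally concrete substitute) is what your proposal is missing; the ``induction on intervals or resolvent/comparison machinery'' you allude to is not spelled out and is not obviously available, since the statement genuinely uses the ratio hypothesis on $K$, not just positivity and monotonicity.
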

\begin{proof}
See Appendix C.
\end{proof}

\bs

This lemma can be applied to both cases C and D.  As shown in \cite{GGP19}, the solution in case C is bounded between zero and the smallest positive root of $G$ (denoted $a$ in that paper) so $G$ need only satisfy the conditions of the above lemma on the interval $[0,a]$ which it does with $y_0=0$.  For case D, multiplying the defining integral equation by $-1$ and applying the transformations $-y(t)\rightarrow y(t)$ and $-G(-y(t))\rightarrow G(y(t))$ (see final plot in Figure 3) we recover an integral equation of the desired form (again $G$ need only satisfy the conditions of the lemma over the corresponding interval $[0,a]$).

\sk


\sk

\subsection{Asymptotics for call options and implied volatility}

\sk
\sk

\begin{cor}
We have the following large-time asymptotic behaviour for European put/call options in the large-time, large log-moneyness regime:
\begin{align}
\label{eq:caseI}
-\lim_{t \rightarrow \infty}\frac{1}{t}\log \mathbb{E}(S_{t}-S_0 e^{x t})^{+}&=V^*(x)-x \,\,\,\,\,\,\,\,\,\,\,\,(x \ge \half\bar{\theta})\,, \nn\\
-\lim_{t \rightarrow \infty}\frac{1}{t}\log (S_0-\mathbb{E}(S_{t}-S_0 e^{xt})^{+})&=V^*(x)-x \,\,\,\,\,\,\,\,\,\,\,\,\,(-\half\theta \le x\le\half\bar{\theta})\,, \nn\\
-\lim_{t \rightarrow \infty}\frac{1}{t}\log (\mathbb{E}(S_0 e^{xt}-S_t)^{+})&=V^*(x)-x \,\,\,\,\,\,\,\,\,\,\,\,\,(x\le -\half\theta)\,,\nn
\end{align}
where $\bar{\theta}=\frac{\lm \theta}{\lm-\rho\nu}$.
\end{cor}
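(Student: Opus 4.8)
The plan is to combine the large deviation principle for $X_t/t$ (the corollary just above, with good rate function $V^*$) with a standard Gärtner–Ellis/Varadhan-type argument to transfer the LDP from the law of $X_t/t$ to the exponential asymptotics of call and put payoffs. This is the classical Forde–Jacquier type computation (cf.\ \cite{FJ11}), so I would follow that route closely, pointing out only the places where the Rough Heston structure enters.

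\textbf{Step 1: identify the critical log-moneyness levels.} First I would determine for which $x$ the expectation $\Ex(S_t - S_0 e^{xt})^+$ is dominated by the large-deviation decay versus by the mean. Since $\Ex(S_t) = S_0$ and the moment generating function satisfies $\tfrac1t\log\Ex(e^{pX_t}) \to V(p) = \lm\theta U_1(p)$ with $V(1)=0$ (because $p=1$ is in $[\ul p,\bar p]$ and $U_1(1)=0$, reflecting the martingale property $\Ex(S_t)=S_0$), the Legendre transform $V^*$ is minimized appropriately and the relevant break-points are where $x$ crosses the ``mean growth rate'' of the relevant tail. Concretely $\tfrac12\bar\theta = \tfrac12\,\tfrac{\lm\theta}{\lm-\rho\nu}$ arises as $\bar\theta = V'(\bar p)$-type quantity, i.e.\ as the asymptotic slope that separates the regime where the OTM call is governed by $V^*(x)-x$ from where it is governed purely by the mean; similarly $-\tfrac12\theta$ on the put side, since $V'(\ul p)$ (or the behaviour of $U_1$ near $p=0$) gives $\theta$. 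I would make these identifications precise via $U_1(0)=0$, $U_1'(0) = \theta$ up to the $\lm$ normalization, and $\lm\theta U_1(\bar p)$ finite with infinite derivative.

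\textbf{Step 2: apply Varadhan / the LDP to the payoff functionals.} With the LDP for $X_t/t$ in hand (speed $t$, good rate function $V^*$), I would write $\Ex(S_t - S_0 e^{xt})^+ = S_0\,\Ex\big(e^{t \cdot (X_t/t)} - e^{xt}\big)^+$ and apply the standard argument: for the \emph{lower bound} restrict the expectation to the event $\{X_t/t \in (x, x+\delta)\}$ and use the LDP lower bound, optimizing over $\delta$; for the \emph{upper bound} split according to whether $X_t/t$ exceeds some level and use the LDP upper bound together with a uniform integrability / exponential-moment control coming from the fact that $V(p)<\infty$ on a neighbourhood of $[0,1]$ — more precisely that $V(p_0)<\infty$ for some $p_0>1$ (which holds since $\bar p>1$), so $\Ex(e^{p_0 X_t})$ grows at most exponentially and Hölder/truncation closes the gap. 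This yields $-\lim\tfrac1t\log\Ex(S_t-S_0e^{xt})^+ = \inf_{y\ge x}(V^*(y)-y)$, and for $x\ge \tfrac12\bar\theta$ the infimum is attained at $y=x$ because $V^*(y)-y$ is increasing there (its derivative $(V^*)'(y)-1$ is $\ge 0$ once $y\ge V'(\bar p)$-related threshold); hence the answer $V^*(x)-x$. The in-the-money call case and the OTM put case are handled the same way, using put–call parity $\Ex(S_0 e^{xt}-S_t)^+ = S_0e^{xt}-S_0 + \Ex(S_t - S_0e^{xt})^+$ so that on $\{x<0\}$ the $S_0 e^{xt}$ and $S_0$ terms are themselves exponentially negligible or dominate appropriately, giving the three-line case split exactly as in \cite{FJ11}.

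\textbf{Main obstacle.} The delicate point is the \emph{upper bound} for the OTM call when $x$ is only slightly above the critical value: one must rule out that an atypically large value of $X_t/t$ with small but non-negligible probability contributes more to $\Ex(e^{X_t} \wedge \ldots)$ than the ``typical'' large deviation at level $x$. This requires an exponential tail estimate on $X_t$ that is uniform in $t$, which is exactly why we need $\bar p>1$ (i.e.\ a moment of order strictly bigger than $1$ with controlled exponential growth rate); I would invoke $V(p)=\lm\theta U_1(p)<\infty$ for $p\in[\ul p,\bar p]\supsetneq[0,1]$ (established in Proposition~\ref{prop:props} and the remark that $[0,1]$ is strictly contained in $[\ul p,\bar p]$) together with Chebyshev at exponent $p_0\in(1,\bar p)$ to get $\Pb(X_t/t > a) \le e^{-t(p_0 a - V(p_0) + o(1))}$, then choose $a$ large and let $p_0\to\bar p$. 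The second, more bookkeeping-type obstacle is verifying the monotonicity of $y\mapsto V^*(y)-y$ past the threshold and checking that the threshold value equals precisely $\tfrac12\bar\theta$ and $-\tfrac12\theta$; this is a direct computation with $U_1$ and its derivatives, and since $V^*$ is the Legendre transform of the \emph{same} function $\lm\theta U_1$ that appears for the classical Heston model, it reduces verbatim to the corresponding statement in \cite{FJ11}, which I would cite rather than reprove.
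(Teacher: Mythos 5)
Your proposal is correct and takes essentially the same route as the paper, which simply invokes Corollary 2.4 of \cite{FJ11}: since the limiting cumulant generating function $V(p)=\lm\theta U_1(p)$ (and hence $V^*$) coincides with the classical Heston quantity and the LDP for $X_t/t$ holds with good rate function $V^*$, the Forde--Jacquier call/put argument applies verbatim. One small correction to your Step 1: the critical levels are $V'(1)=\tfrac12\bar{\theta}$ and $V'(0)=-\tfrac12\theta$, i.e.\ the slopes of $V$ at $p=1$ and $p=0$, not quantities attached to $\bar{p}$ or $\ul{p}$, where $V'$ blows up by essential smoothness.
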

\sk
\begin{proof}
See Corollary 2.4 in \cite{FJ11}.
\end{proof}

\sk

\label{cor:ImpliedVol}
\begin{cor}
We have the following asymptotic behaviour in the large-time, large log-moneyness regime, where $\hat{\sigma}_t(kt)$ is the implied volatility of a European put/call option with strike $S_0 e^{xt}$:
\be
\hat{\sigma}_{\infty}(x)^2 = \lim_{t \rightarrow \infty}\hat{\sigma}_t^2(x t)=
\frac{\omega_1}{2}(1+\omega_2 \rho x+\sqrt{(\omega_2 x+\rho)^2+\bar{\rho^2}})\nn
\ee
where
\be
\omega_1= \frac{4\lm \theta}{\nu^2 \bar{\rho}^2 }[\sqrt{(2\lm-\rho \nu)^2+\nu^2 \bar{\rho}^2}-(2\lm-\rho \nu)]\quad ,\quad
\omega_2 =\frac{\nu}{\lm \theta} \,\nn \,.
\ee
\end{cor}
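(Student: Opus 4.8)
The plan is to reduce the claim to the standard Heston large-maturity smile formula of \cite{FJ11}, exploiting the fact --- established in Proposition \ref{prop:props} --- that the limiting cumulant generating function $V(p)=\lm\theta U_1(p)$ of the Rough Heston log-price is \emph{identical} to that of the standard Heston model with the same parameters $(\lm,\theta,\nu,\rho)$. Since the map sending a large-time cumulant generating function to the large-time implied volatility depends only on that function and on the Black--Scholes benchmark, the limiting smile $\hat{\sigma}_{\infty}(x)$ is necessarily the same explicit function of $x$ as in the classical case, and the real content of the proof is (i) recalling the Black--Scholes matching argument and (ii) carrying out the Legendre transform of $\lm\theta U_1$ in closed form.

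First I would recall from the preceding corollary that $-\lim_{t\to\infty}\frac1t\log(\text{OTM option, log-strike }xt)=V^*(x)-x$ on the relevant range of $x$, where $V^*$ is the Fenchel--Legendre transform of $V$. For a Black--Scholes model with variance parameter $\Sigma$ the same quantity equals $(x-\half\Sigma)^2/(2\Sigma)$, as follows from the elementary identity $\Lambda_{BS,\Sigma}^*(x)-x=\frac{x^2}{2\Sigma}-\frac x2+\frac\Sigma8=\frac{(x-\half\Sigma)^2}{2\Sigma}$. Defining $\hat{\sigma}_{\infty}^2(x)$ implicitly by $(x-\half\hat{\sigma}_{\infty}^2(x))^2/(2\hat{\sigma}_{\infty}^2(x))=V^*(x)-x$ gives a quadratic in $\sqrt{\hat{\sigma}_{\infty}^2(x)}$ whose positive root, selected continuously across the three log-moneyness regimes of the previous corollary, yields $\hat{\sigma}_{\infty}^2(x)$ as an explicit algebraic function of $V^*(x)$ and $x$; this is exactly the statement proved as Corollary~2.6 in \cite{FJ11}, which I would invoke directly (alternatively it can be reproved from the Black--Scholes large-time asymptotics together with the exponential-equivalence bookkeeping already used in the proof of Theorem~\ref{thm:small time}).

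It then remains to evaluate $V^*(x)=\sup_p\bigl(px-\lm\theta U_1(p)\bigr)$ explicitly. Since
\[
U_1(p)=\frac{1}{\nu^2}\Bigl[\lm-p\rho\nu-\sqrt{\lm^2-2\lm\rho\nu p+\nu^2 p(1-p\bar{\rho}^2)}\,\Bigr]
\]
is an explicit algebraic function, the first-order condition $\lm\theta U_1'(p^*(x))=x$ can be solved for $p^*(x)$ in closed form and substituted back to obtain $V^*(x)$. Feeding this into the formula from the previous paragraph and simplifying the nested radicals (using $\bar{\rho}^2=1-\rho^2$ and the abbreviation $2\lm-\rho\nu$ throughout) produces the stated expression, with $\omega_1=\frac{4\lm\theta}{\nu^2\bar{\rho}^2}\bigl[\sqrt{(2\lm-\rho\nu)^2+\nu^2\bar{\rho}^2}-(2\lm-\rho\nu)\bigr]$ and $\omega_2=\nu/(\lm\theta)$; as a consistency check, at $x=0$ the bracket reduces to $1+\sqrt{\rho^2+\bar{\rho}^2}=2$, so $\hat{\sigma}_{\infty}^2(0)=\omega_1$, the at-the-money limiting variance.

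The main obstacle is bookkeeping rather than conceptual: one must track carefully which branch of the quadratic is selected in each of the three log-moneyness regimes so that $\hat{\sigma}_{\infty}^2(x)$ assembles into the single smooth function stated, and push the radical simplification through without sign errors. For a fully self-contained argument one would additionally redo the Black--Scholes large-time option asymptotics and the uniform-integrability/tail estimates upgrading the LDP for $X_t/t$ to an asymptotic for the option price, exactly as in \cite{FJ11}; here these can simply be cited, since the only model-specific input is the coincidence of $V$ with the Heston limiting cumulant generating function, already recorded in Proposition~\ref{prop:props}.
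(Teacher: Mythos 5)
Your proposal is correct and follows essentially the same route as the paper: since Proposition \ref{prop:props} shows the limiting cumulant generating function $V(p)=\lm\theta U_1(p)$ coincides with that of the standard Heston model, the limiting smile is the known Heston/SVI formula, which the paper simply imports by citing Proposition 1 of \cite{GJ11} (the Legendre-transform and Black--Scholes matching computation you sketch is exactly what underlies that cited result, cf.\ \cite{FJ11}).
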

\begin{proof}
See Proposition 1 in \cite{GJ11} (note that for the Rough Heston model $\lm$ has to be replaced with
$\frac{\lm}{\Gm(\al)}$ and $\nu$ replaced with
$\frac{\nu }{\Gm(\al)}$, but the effect of the $\al$ here cancels out in the final formula for $\hat{\sigma}_{\infty}(k)$.
\end{proof}

\subsection{Higher order large-time behaviour}

\sk
We can formally try going to higher order; indeed, using the ansatz $f(p,t)=U_1(p)t +U_2(p) t^{-\al}(1+o(1))$ for $p\in [\ul{p},\bar{p}]$, and we find that
\be
U_2(p) = -\frac{U_1(p)}{(\lm - U_1(p) \nu^2 - p \rho \nu) \Gamma(1 - \al)} \nn
\ee
but if we try and go higher order again, the fractional derivative on the left hand side of \eqref{eq:FDE0} does not exist.  Using the same approach as in \cite{FJM11}, one should be able to use this to compute a higher order large-time saddlepoint approximation for call options.  For the sake of brevity, we defer the details of this for future work.


\section{Asymptotics in the $H\to0$ limit}
\sk

In this section, we will show that for fixed $t$, the log stock price $X_t^{(\alpha)}:=X_t$
converges as $\alpha\to\half$ i.e. as $H\to0$ in an appropriate sense.
To match the assumptions of Theorem 13.1.1 on p.384 of \cite{GLS90} (on the continuity of the solutions to a parametrized
family of VIEs),
we define $h(\al,w) := G(p,w)$ for $\al \ge \half$ (which is independent of $\al$).
The kernel
$a(t,s,\al):=(t-s)^{\alpha-1}/\Gamma(\alpha)$
is of continuous type; see Definition 9.5.2 in \cite{GLS90}, and the remark to Theorem 12.1.1 in [GLS90], which states local integrability of $k$ as a sufficient condition for this
property, and we can easily verify that
\bq
\sup_{t \in [0,T]} |\int_0^t (a(t,s,\al)-a(t,s,\half))ds| \,\,\to\,\, 0 \nn \,
\eq
as $\al\to \half$, so the uniform
continuity assumption in Theorem 13.1.1 of \cite{GLS90} is satisfied. Moreover the solution to the VIE is unique
for $\al \in (0,1)$, see Theorem 3.1.4 in \cite{Brun17}, or Satz 1 in \cite{Di58}. Note that the Lipschitz
condition (3.1) in \cite{Di58} has a fixed Lipschitz constant $\Gamma(\alpha+1)$, but since the function $H$
defining our VIE (see \eqref{eq:Reel}) does not depend on time, the factor $t^{\alpha}$
on the left hand side of condition (3.1) in \cite{Di58} (using our notation) allows
for an arbitrary Lipschitz constant, on a sufficiently small time interval.
Moreover, once uniqueness on a small time
interval is established, there is a unique continuation (if any) by a standard extension
procedure described on p.107 of \cite{Brun17}.

\sk

Then from Theorem 13.1.1 ii) in \cite{GLS90}, $f(p,t;\al)$ is continuous in $\al$ and $t$ on
$\{(\al,t) : \al\in[\half,1), 0\leq t < \hat{T}_{\al}(p)\},$
where $[0,\hat{T}_{\al}(p))$ denotes the maximal interval on which a continuous solution
of the VIE exists.  Moreover, since Theorem 13.1.1 of \cite{GLS90} is multi-dimensional, we can apply it to
$(\mathrm{Re}(f),\mathrm{Im}(f))$ to conclude that
$f(i\theta,t;\al)\to f(i\theta,t;\half)$ for $\theta \in \mathbb{R}$.
Using the analyticity of $f(.,t,0)$, e.g.\ from Lemma 7 in \cite{GGP19},  we have that $f(i\theta,t;\half)$ is continuous at $\theta=0,$ so we can apply L\'{e}vy's convergence theorem
and verify that $X^{(\al)}_t$ tends weakly to some random variable $X^{(\half)}_t$ as $\al\to\half$, for which
\be
\Ex(e^{p X^{(\half)}_t})=e^{V_0 I^{\half}f(p,t)+\lm \theta I^1 f(p,t)}\nn
\ee for $p$ in some open interval $I=(p_-(t),p_+(t))\supset [0,1]$, where $f(p,t)$ satisfies
\be
D^{\half}f(p,t) = \half (p^2-p) +   ( p \rho \nu-\lm) f(p,t) +\half \nu^2  f(p,t)^2 \nn
\ee
with initial condition $f(p,0)=0$.

\sk
Thus we have a $H=0$ ``model'', or more precisely a family of marginals for $X^{(\half)}_t$ for all $t \in [0,T]$), with non-zero skewness.  This is in contrast to the Rough Bergomi model, which for the vol-of-vol $\gm \in (0,1)$ tends to a model with zero skew in the limit as $H\to 0$, see \cite{FFGS20} for details.

\sk
Then using similar scaling arguments to section 3, we know that
\be
\Ex(e^{p X^{(\half)}_{\e t}})=e^{V_0 I^{\half}f_{\e}(p,t)+\e^{\half} \lm \theta I^1 f_{\e}(p,t)}\nn
\ee for $p\in (p_-(\e t),p_+(\e t))\supset [0,1]$, where $f_{\e}(p,t)$ satisfies
\be
D^{\half}f_{\e}(p,t) = \half \e(p^2-p) +   \e^{\half}( p \rho \nu-\lm) f_{\e}(p,t) +\half \nu^2  f_{\e}(p,t)^2 \nn
\ee
with initial condition $f_{\e}(p,0)=0$.  Then setting
 $f_{\e}(\frac{p}{\sqrt{\e}},t)=\phi_{\e}(p,t)$ as in Eq 49 in \cite{FSV19}, we find that $\phi_{\e}(p,t)$ satisfies
\be
D^{\half}\phi_{\e}(p,t) = \half p^2  \,-\, \half p \sqrt{\e} \,+\,
   p\rho \nu \phi_{\e}(p,t) +
 \half  \nu^2 \phi_{\e}(p,t)^2 \,-\, \lm \e^{\half}\phi_{\e}(p,t)
\label{eq:VIEReal}\,
\ee
with $\phi_{\e}(p,0)=0$, for $p\in (\frac{p_-(\e t)}{\sqrt{\e}},\frac{p_+(\e t)}{\sqrt{\e}})$.  We can then apply Theorem 13.1.1 in \cite{GLS90} as above to
show that $\phi_{\e}(p,t)$ tends to the solution $\phi$ of
\bq
D^{\half}\phi(p,t) =
\half  p^2+   p \rho \nu   \phi(p,t) +\half \nu^2  \phi(p,t)^2 \label{eq:VIEEEE}
\eq
as $\e\to 0$ for $p \in (p^0_-,p^0_+)$ where $p^0_{\pm}:=\lim_{\e\to 0}\frac{p_{\pm}(\e t)}{\sqrt{\e}} $.  Thus setting $t=1$, we see (again using L\'{e}vy's convergence theorem) that $X^{(\half)}_{\e }/\sqrt{\e}$ tends weakly to a (non-Gaussian) random variable $Z$ as $t\to 0$ for which $\Ex(e^{p Z})=e^{V_0 I^{\half} \phi(p,.)(1)}$.  Two interesting and difficult open questions now arise: is this property \textit{time-consistent}, i.e. does it remain true at a future time $t$ when we condition on the history of $V$ up to $t$, and ii) is $V$ itself a well defined process in the $\al \to \half$ limit, or does it e.g. tend to a non-Gaussian field which is not pointwise defined. We answer the second question in subsections 5.2 and 5.3 below. 

\begin{rem}
Note that the scaling property in this case simplifies to
\be
\Lm(p,t) =  \Lm(p t^{\half},1)   \label{eq:SRHzero}\,
\ee
where $\Lambda(p,t):=I^{1-\al}\phi(p,t)$ with $\al=\half$.
\end{rem}

\subsection{Implied vol asymptotics in the $H=0$, $t\to 0$ limit - full smile effect for the Edgeworth FX options regime}

\sk
\sk

Following a similar argument to Lemma 5 in [MT16] one can establish the following small-time behaviour for European put options in the Edgeworth regime:
\bq
\frac{1}{\sqrt{t}}\Ex((e^{x\sqrt{t}}-e^{X_t } )^+)
\,\,\,\,\sim \,\,\,\, e^{x \sqrt{t}}\, \Ex((x-\frac{X_t}{\sqrt{t}})^+)
\,\,\,\,\sim \,\,\,\,  \Ex((x-\frac{X_t}{\sqrt{t}})^+)  &\sim& \, P(x):=\Ex((x-Z)^+) \nn \,
\eq
as $t\to 0$, where $Z$ is the non-Gaussian random variable defined in the previous subsection, and $f\sim g$ here means that $f/g\to 1$.  From e.g. [Fuk17] or Lemma 3.3 in \cite{FSV19}, we know that for the Black-Scholes model
with volatility $\sigma$
\bq
\frac{1}{\sqrt{t}}\Ex((e^{x\sqrt{t}}-e^{X_t } )^+)
&\sim& P_B(x,\sigma) \,\,\,\,:= \,\,\,\,\Ex((x- \sigma W_1)^+) \,
\eq
where $W$ is a standard Brownian motion.  From this we can easily deduce that
\bq
\hat{\sigma}_0(x) :=\lim_{t \to 0}\hat{\sigma}_t(x \sqrt{t},t) = P_B(x,.)^{-1}(P(x))\label{eq:QF} \,
\eq
for $x>0$, where $\hat{\sigma}_t(x,t)$ denotes the implied volatility of a European put option with strike $e^{x}$, maturity $t$ and $S_0=1$, and $P_B(x,\sigma)$ is the Bachelier model put price formula.  Hence we see the full smile effect in the small-time FX options Edgeworth regime unlike the $H>0$ case discussed in e.g. \cite{Fuk17}, \cite{EFGR19}, \cite{FSV19}, where the leading order term is just Black-Scholes, followed by a next order skew term, followed by an even higher order term.

\subsection{A closed-form expression for the skewness, the $H\to 0$ limit and calibrating a time-dependent correlation function}

\sk
We now consider a driftless version of the model where
$dX_t = \sqrt{V_t} dB_t$ and
$V_t = V_0 +\frac{1}{\Gamma(\al)} \int_0^t (t-s)^{\al-1}  \nu \sqrt{V_s}dW_s$.  Then
\bq
\Ex(X_T^3) ~ 3\Ex(X_T \la X\ra_T) &=& 3 \Ex(\int_0^T \sqrt{V}_s (\rho dW_s+\bar{\rho}dB_s) \int_0^T V_t dt)  ~ 3\rho\,\Ex(\int_0^T \sqrt{V}_s dW_s \int_0^T V_t dt) \nn
\eq
so formally we need to compute
\begin{align}
\Ex(\sqrt{V}_s V_t dW_s) &= \Ex(\sqrt{V}_s (V_0+
\frac{1}{\Gamma(\al)} \int_0^t (t-u)^{\al-1}  \nu \sqrt{V_u}dW_u) dW_s)\nn \\
&= \Ex(\sqrt{V}_s \frac{\nu}{\Gamma(\al)} (t-s)^{\al-1}   \sqrt{V_s}ds \,1_{s<t})\nn \\
&= \frac{\nu}{\Gamma(\al)} (t-s)^{\al-1}   \,1_{s<t} \,\Ex(V_s) ds~  \frac{\nu}{\Gamma(\al)} (t-s)^{\al-1}   \,1_{s<t} \,V_0 ds\nn \,.
\end{align}
Thus
\bq
\Ex(X_T^3) &=&  3\rho \int_0^T \int_0^t \Ex(\sqrt{V}_s V_t dW_s) ~ \frac{3V_0 \rho \nu T^{1+\al} }{\Gamma(\al)\al(1+\al)}
\label{eq:Hskew}\,.
\eq

If we now relax the assumption that $V$ is driftless and assume a given inital variance curve $\xi_0(t)$ and a general $L^2$ kernel $\kappa$ then
\bq
V_t &=& \xi_0(t) \,+\,\int_0^t \kappa(t-s)\sqrt{V}_s dW_s\nn\,
\eq
(where $\kappa$ is computed in Proposition \ref{prop:Mittag}).
Then
\bq
\Ex(\sqrt{V}_s V_t dW_s) &=& \Ex(\sqrt{V}_s (\xi_0(t)+
 \int_0^t \kappa(t-u) \sqrt{V_u}dW_u) dW_s)~ \kappa(t-s) \,1_{s<t}\Ex(V_s)\nn \,
\eq
and
\bq
\Ex(X_T^3) &=&  3\rho \int_0^T \int_0^t \Ex(\sqrt{V}_s V_t dW_s) ~
3\rho \int_0^T \int_0^t\kappa(t-s) \,\xi_0(s) ds dt\nn\,.
\eq

\begin{rem}
If we allow $\rho$ to be time-dependent, then $\Ex(X_t^3) =3\rho(t) \int_0^T \int_0^t\kappa(t-s) \,\xi_0(s) ds dt$ and we can use this equation to calibrate $\rho(t)$ to the observed \textit{skewness term structure}, i.e.
the value of $\Ex(X_t^3)$ at each $t$ in some interval $[0,T]$ implied by European option prices via the Breeden-Litzenberger formula.  Note we have ignored the drift terms of $X$ to simplify the computations here but in the small -time limit these drift terms will be higher order.
\end{rem}

\subsection{Weak convergence of the $V$ process on pathspace to a tempered distribution, and the hyper-rough Heston model}

\sk
\sk
From Theorem 4.3 in \cite{JLP19} with $\al\in(\half,1)$,
$a(v)=\nu^2 v $, $\sigma(v)=\nu \sqrt{v}$, $b(v)=\lm(\theta-v)$, $A(v)=\nu^2 v$ and $f \in L^1([0,T])$, we know that
\bq
\Ex(e^{\int_0^T f(T-t) V_t dt }) =e^{V_0\int_0^T f(t) dt \,+\,\half \nu^2 V_0 \int_0^T \psi_{\al}(t)^2 dt} \nn\,
\eq
where $\psi_{\al}$ satisfies the Riccati-Volterra equation:
\bq
\psi_{\al}(t) =
\int_0^t c_{\al}(t-s)^{\al-1} (f(s)+ \half \nu^2 \psi_{\al}(s)^2 )  ds\label{eq:SameVIE}
\eq
and $c_{\al}=\frac{1}{\Gamma(\al)}$.


\begin{prop}
\label{prop:Vhalf}
$V$ tends to a random tempered distribution $V^{(\half)}$ in distribution as $\al \to \half$ with respect to the strong and weak topologies (see page 2 in \cite{BDW17} for definitions), where $V^{(\half)}$ is a random tempered distribution\footnote{see e.g. \cite{DRSV17} for more details on tempered distributions} and for all $f$ in the Schwartz space $\mc{S}$ we have
\bq
\Ex(e^{\int_0^T f(T-t) V^{(\half)}_t dt }) =e^{V_0\int_0^T f(t) dt \,+\,\half \nu^2 V_0 \int_0^T \psi(t)^2 dt} \nn\,
\eq
where $\psi$ satisfies the following VIE:
\bq
\psi(t) =
\int_0^t c_{\half}(t-s)^{-\half} (f(s)+ \half \nu^2 \psi(s)^2 )  ds\,.\nn
\eq
\end{prop}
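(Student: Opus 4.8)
The plan is to realise the variance curve $V^{(\al)}:=(V_t)_{t\in[0,T]}$ (with $\al=H+\half$, extended by zero off $[0,T]$) as a random element of the space $\mc{S}'$ of tempered distributions through the pairing $\la V^{(\al)},f\ra:=\int_0^T f(T-t)V_t\,dt$, and to prove convergence of the induced laws on $\mc{S}'$ by a L\'{e}vy-type continuity theorem (the Bochner--Minlos--L\'{e}vy theorem for nuclear spaces; see \cite{BDW17}): it is enough to show that the characteristic functionals $\Phi_{\al}(f):=\Ex(e^{i\la V^{(\al)},f\ra})$ converge pointwise on $\mc{S}$ as $\al\to\half$ to a functional $\Phi$ which is continuous at $f=0$. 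The limit $\Phi$ is then automatically the characteristic functional of a random tempered distribution $V^{(\half)}$, and by the standard nuclear-space results quoted in \cite{BDW17} the convergence then holds with respect to both the strong and the weak topology on $\mc{S}'$; note also that, since $V^{(\al)}\geq 0$, the formula in the statement is a genuine Laplace functional for $f\le 0$, which can be used in place of $\Phi_{\al}$ throughout.

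First I would start from Theorem 4.3 in \cite{JLP19}, which for real $f\in L^1([0,T])$ gives $\Ex(e^{\int_0^T f(T-t)V_t^{(\al)}dt})=\exp\big(V_0\int_0^T f(t)\,dt+\half\nu^2 V_0\int_0^T\psi_{\al}(t)^2\,dt\big)$ with $\psi_{\al}$ solving $\psi_{\al}(t)=\int_0^t c_{\al}(t-s)^{\al-1}\big(f(s)+\half\nu^2\psi_{\al}(s)^2\big)\,ds$ and $c_{\al}=1/\Gamma(\al)$, and (by analytic continuation as in \cite{ER19}) extend this to purely imaginary arguments, obtaining $\Phi_{\al}(f)=\exp\big(i V_0\int_0^T f+\half\nu^2 V_0\int_0^T\psi_{\al}^2\big)$ with $\psi_{\al}$ now solving the same Riccati--Volterra equation with input $if$. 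One must check global solvability on all of $[0,T]$ uniformly for $\al$ near $\half$ (so that no explosion interferes with the fixed horizon); for the affine/driftless Rough Heston setting this follows from the a priori bounds already available for the Rough Heston Riccati--Volterra equation, cf.\ \cite{GGP19} and \cite{JLP19}.

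Next I would invoke Theorem 13.1.1 of \cite{GLS90} on continuous dependence of VIE solutions on the parameter $\al$, whose hypotheses for the kernel $a(t,s,\al)=(t-s)^{\al-1}/\Gamma(\al)$ — continuous-type kernel, $\sup_{t\in[0,T]}|\int_0^t(a(t,s,\al)-a(t,s,\half))\,ds|\to 0$ as $\al\to\half$, and uniqueness/continuation of solutions via \cite{Brun17} and \cite{Di58} — were already verified earlier in this section. This yields $\psi_{\al}\to\psi$ uniformly on $[0,T]$, where $\psi$ solves $\psi(t)=\int_0^t c_{\half}(t-s)^{-\half}\big(if(s)+\half\nu^2\psi(s)^2\big)\,ds$, hence $\int_0^T\psi_{\al}(t)^2\,dt\to\int_0^T\psi(t)^2\,dt$ by dominated convergence, so that $\Phi_{\al}(f)\to\Phi(f):=\exp\big(i V_0\int_0^T f+\half\nu^2 V_0\int_0^T\psi^2\big)$ for every $f\in\mc{S}$. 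Continuity of $\Phi$ at $f=0$ holds because $\psi\equiv 0$ for $f\equiv 0$ and, by a Gronwall/fixed-point estimate for the Riccati--Volterra equation, $\|\psi\|_{\infty}\to 0$ as $f\to 0$ in $\mc{S}$, whence $\int_0^T\psi^2\to 0$. This gives $V^{(\al)}\Rightarrow V^{(\half)}$ in distribution on $\mc{S}'$; specialising back to real $f\le 0$ (for which the left-hand side is finite) yields the stated Laplace-functional identity, and the identification of $V^{(\half)}$ with the law of the $H=0$ hyper-rough Heston variance is then immediate from Theorem 2.5 in \cite{Jab19}, since both are characterised by the same VIE with $\al=\half$.

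The hard part is not the pointwise convergence of the functionals — given the verification of the hypotheses of \cite{GLS90} already carried out, $\psi_{\al}\to\psi$ and hence $\Phi_{\al}\to\Phi$ are essentially corollaries — but rather the equicontinuity/tightness needed to upgrade this to genuine weak convergence of $\mc{S}'$-valued random variables: one must control $\psi$, and in particular $\int_0^T\psi(t)^2\,dt$, uniformly as $f$ ranges over a neighbourhood of $0$ in the nuclear topology of $\mc{S}$, which is precisely what guarantees that $\Phi$ is the characteristic functional of a bona fide random tempered distribution (and not an object in a larger space) and that no mass escapes. This genuinely requires the $\mc{S}'$ framework, since for $\al=\half$ the limit $V^{(\half)}$ is not pointwise defined, so one cannot simply argue marginal convergence $V_t^{(\al)}\to V_t^{(\half)}$ for each $t$.
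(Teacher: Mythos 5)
Your proposal is correct and follows essentially the same route as the paper's Appendix D: continuity of the Riccati--Volterra solutions in $\al$ via Theorem 13.1.1 of \cite{GLS90} to get pointwise convergence of the characteristic functionals, continuity of the limiting functional at $f=0$ in the Schwartz topology, and then the L\'{e}vy continuity theorem for tempered distributions from \cite{BDW17}. The only cosmetic difference is that the paper obtains the continuity-at-zero step by a second application of \cite{GLS90} to a sequence $f_{\e}\to 0$ in $\mc{S}$ (so the VIE solution tends to the zero solution), whereas you propose a Gronwall/fixed-point estimate; both serve the same purpose.
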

\nind \begin{proof}
See Appendix D.
\end{proof}

\bs

Let $A_t$ satisfy
$
A_t =V_0 t+ \frac{\nu}{\Gamma(\half)} \int_0^t (t-s)^{-\half}W_{A_s}ds \nn \,
$.  Then $A_t$ is of the same form as $X_t$ in \cite{Jab19}, with their $dG_0(t)=V_0 dt$.  Then from Theorem 2.5 in \cite{Jab19} (with $a=b=0$ and $c=\nu^2$) we know that
\begin{align}
\Ex(e^{\int_0^T f(T-t) dA_t }) = e^{\int_0^T F(T-s,\psi(T-s))dG_0(s) }
&= e^{V_0\int_0^T (f(T-s)+\half \nu^2\psi(T-s)^2) ds}\nn \\
&=e^{V_0\int_0^T (f(s)+\half \nu^2\psi(s)^2) ds}
\end{align}
\nind  
where $F(s,u)=f(u)+\half c u^2$,
and $\psi$ satisfies
\bq
\psi(t) =\int_0^t K(t-s) F(s,\psi(s))ds =
\int_0^t c_{\half}(t-s)^{-\half} (f(s)+ \half \nu^2 \psi(s)^2 )  ds\,\nn
\eq


The process $A_t$ here is the driftless \textit{hyper-rough} Heston model for $H=0$ discussed in the next subsection, and e note that $\psi$ satisfies the same VIE as \eqref{eq:SameVIE} (and by e.g. Theorem 3.1.4 in \cite{Brun17} we know the solution is unique), so the limiting field $V^{(\half)}$ has the same law as the random measure $dA_t$.
Moreover, from Proposition 4.6 in \cite{JR18} (which uses the law of the iterated logarithm for $B$) $A$ is a.s. not continuously differentiable but is only known to be $2\al-\e$ H\"{o}lder continuous for all $\e>0$.  Hence $A$ exhibits (non-Gaussian) ``field''-type behaviour.

\sk
\subsection{The hyper-rough Heston model for $H=0$ - driftless and general cases}

\sk
\sk
If $\lm=0$ and $\al \in (\half,1)$ and we set $A_t:=\int_0^t V_s ds$, then using the stochastic Fubini theorem, we see that
\begin{align}
A_t-V_0 t = \frac{1}{ \Gamma(\al)}\int_0^t \int_0^s (s-u)^{\al-1}  \nu \sqrt{V_u}dW_u ds
&=  \frac{1}{ \Gamma(\al)}\int_0^t \nu \sqrt{V_u}dW_u \int_u^t (s-u)^{\al-1} ds \nn \\
&= \frac{\nu}{\al \Gamma(\al)}\int_0^t (t-u)^{\al} \sqrt{V_u}dW_u  \nn \\
&\text{(using Dambis-Dubins-Schwarz time change} \nn \\
&= \frac{\nu}{\al \Gamma(\al)}\int_0^t (t-u)^{\al} dB_{A_u}  \nn \\
& \text{(where $B_t:=X_{T_t}$, $T_t=\inf \lb s: A_s >t \rb$) so $B$ is a Brownian motion}) \nn \\
&= \frac{\nu}{\al \Gamma(\al)} B_{A_u} (t-u)^{\al}|_{u=0}^t \,+\, \frac{\nu}{\Gamma(\al)} \int_0^t (t-u)^{\al-1} B_{A_u} du  \nn \\
&= \nu  I^{\al} B_{A_t} \nn\,.
\end{align}
We can now take
\bq
A_t=V_0 t+\nu  I^{\al} B_{A_t}\label{eq:Hyper}
\eq
 as the \textit{definition} of the Rough Heston model for $\al \in [\half,1)$ (i.e. allowing for the possibility that $\al=\half$), where $B$ is now a \textit{given} Brownian motion (this is the so-called \textit{hyper-rough Heston} model introduced in \cite{JR18} for the case of zero drift.  Note that for a given sample path $B_t(\omega)$, we can regard \eqref{eq:Hyper} as a (random) fractional ODE of the form:
\begin{align}
A(t) &= V_0 t \,+\,I^{\al} f(A(t))  \label{eq:MR}
\end{align}
where $f(t)=B_t(\omega)$.

\subsubsection{The case $\lm>0$}

\sk
For the case when $\lm > 0$, using \eqref{eq:Vtt} we see that
\bq
A_t-\int_0^t \xi_0(s)ds ~ \int_0^t \int_0^s \kappa(s-u) \sqrt{V_u}dW_u ds
&=&  \int_0^t  \sqrt{V_u}\int_u^t \kappa(s-u) ds\,dW_u  \nn \\
&=& \int_0^t F(t-u) \sqrt{V_u}dW_u  \text{  (where $F(t-u)=\int_u^t \kappa(s-u) ds$)} \nn \\
&=& \int_0^t F(t-u) dM_u  \nn \\
&& \text{(where $dM_t=\sqrt{V_t}dW_t$)} \nn \\
&=& \int_0^t F(t-u) d B_{A_u}  \nn \\
&& \text{(where $B_t:=M_{T_t}$, $T_t=\inf \lb s: A_s >t \rb$) so $B$ is a Brownian motion}) \nn \\
&=& B_{A_u} F(t-u)|_{u=0}^t \,+\,  \int_0^t \kappa(t-u) B_{A_u} du  \nn \\
&=& \int_0^t \kappa(t-u) B_{A_u} du  \nn
\eq
where we have used \eqref{eq:Mitt} to verify that $F(t-u)\to 0$ as $u\to t$.
\begin{figure}
\begin{center}
\includegraphics[width=155pt, height=155pt]{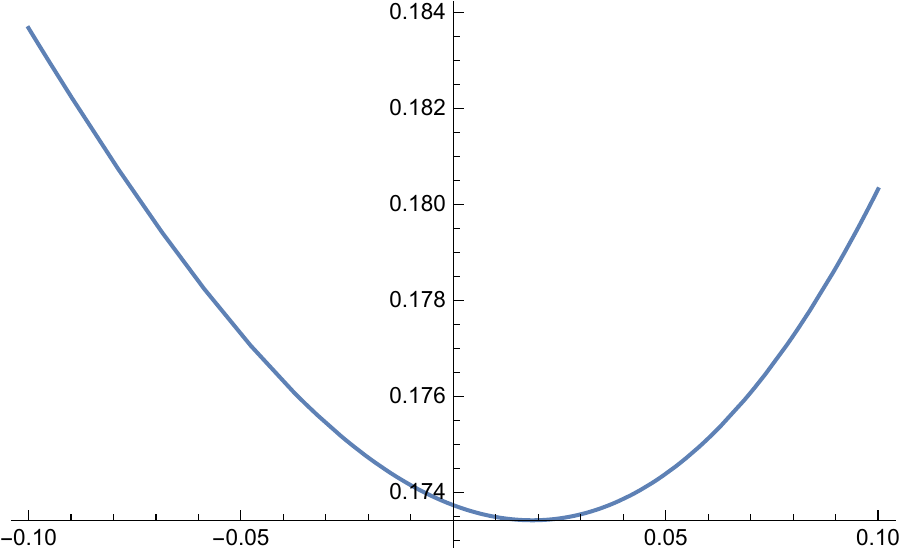}
\caption{
Here we have plotted the $H=0$ asymptotic short-maturity smile (i.e. $\hat{\sigma}_0(x)$ in \eqref{eq:QF}),
for $\nu=.2$, $\rho=-.1$ and $V_0=.04$.  We have used a 10-term small-$t$ series approximation to the solution to \eqref{eq:VIEEEE} combined with the scaling property in \eqref{eq:SRHzero}, and the Alan Lewis Fourier inversion formula for call options given in e.g. Eq 1.4 in \cite{EGR18} using Gauss-Legendre quadrature for the inverse Fourier transform with $1600$ points over a range of $[0,40]$.}
\end{center}
\end{figure}

\sk

\appendix

\renewcommand{\theequation}{A-\arabic{equation}}
\setcounter{equation}{0}
\section{Computing the kernel for the Rough Heston variance curve }
\label{section:AppA}

\sk
Let $Z_t=\int_0^t \sqrt{V}_s dW_s$, and we recall that
\begin{align}
V_t &= V_0+ \frac{1}{\Gamma(\al)} \int_0^t (t-s)^{\al-1}  \lm(\theta-V_s) ds  +\frac{1 }{\Gamma(\al)} \int_0^t (t-s)^{\al-1}  \nu \sqrt{V_s}dW_s \nn \\
&= \tilde{\xi}_0(t) -\frac{\lm}{\nu}(\ph*V) +\ph*dZ   \nn
\end{align}
where $*$ denotes the convolution of two functions,
 $\ph*dZ=\int_0^t \ph(t-s) dZ_s$ and $\tilde{\xi}_0(t)=V_0+ \frac{1}{\Gamma(\al)} \int_0^t (t-s)^{\al-1}  \lm \theta  ds=V_0 +\frac{\lm \theta}{\al \Gamma(\al)}t^{\al}$, and $\ph(t)=\frac{\nu}{\Gamma(\al)}t^{\al}$.
Now define $\kappa$ to be the unique function which satisfies
\begin{align}
\kappa  &= \ph -\frac{\lm}{\nu} (\ph * \kappa) \,.\label{eq:resolvent}
\end{align}
Such a $\kappa$ exists and is known as the \emph{resolvent} of $\ph$.  Then we see that
\begin{align}
V_t -\frac{\lm}{\nu}\,\kappa *V_t &=  \tilde{\xi}_0(t) -\frac{\lm}{\nu}\,\ph*V +\ph*dZ -
\frac{\lm}{\nu} \kappa *[\,\tilde{\xi}_0(t) -\frac{\lm}{\nu}\,\ph*V +\ph*dZ ]\nn \\
&= \xi_0(t)-\frac{\lm}{\nu}(\ph-\frac{\lm}{\nu}\,\kappa* \ph)*V+(\ph-\frac{\lm}{\nu}\kappa*\ph) *dZ \nn \\
&= \xi_0(t)-\frac{\lm}{\nu}\,\kappa*V+\kappa *dZ \nn
 \nn
\end{align}
where $\xi_0(t)=\tilde{\xi}_0(t)-\frac{\lm}{\nu}\,\kappa *\tilde{\xi}_0(t)$, and we have used \eqref{eq:resolvent} in the final line.  Cancelling the $-\frac{\lm}{\nu}\,\kappa*V$ terms, we see that
\begin{align}
V_t &= \xi_0(t)+\kappa *dZ =  \xi_0(t)+\int_0^t \kappa(t-s)\sqrt{V}_s dW_s \nn \\
\Rightarrow \quad \quad \xi_t(u) &= \Ex(V_u|\mc{F}_t) =\xi_0(u)+\int_0^t \kappa(u-s)\sqrt{V}_s dW_s \nn
\end{align}
and thus
\be
d\xi_t(u) = \kappa(u-t)\sqrt{V}_t dW_t \nn
\ee
i.e. the correct $\kappa$ function is the solution to \eqref{eq:resolvent}.  If we take the Laplace transform
of \eqref{eq:resolvent}, we get
\begin{align}
\hat{\kappa}(z)  &= \hat{\ph}(z) -\frac{\lm}{\nu} \hat{\ph}(z) \hat{\kappa}(z) \,.\label{eq:Al}
\end{align}
and \eqref{eq:Al} is just an algebraic equation now, which we can solve explicitly to get
$
\hat{\kappa}(z)=\frac{\hat{\ph}(z)}{1+\frac{\lm}{\nu} \hat{\ph}(z)}\nn\,.
$
But we know that $\ph(t)=\frac{\nu}{\Gamma(\al)}t^{\al}$ whose Laplace transform is
$\hat{\ph}(z)=\nu z^{-\al}$, so $\hat{\kappa}(z)$ evaluates to
\begin{align}
\hat{\kappa}(z) &= \frac{\nu z^{-\al}}{1+\lm z^{-\al}}\nn \,.
\end{align}
Then the inverse Laplace transform of $\hat{\kappa}(z)$ is given by
\begin{align}
\kappa(x) &= \nu x^{\al-1} E_{\al,\al}(-\lm x^{\al})\nn \,.
\end{align}

\renewcommand{\theequation}{B-\arabic{equation}}
\setcounter{equation}{0}
\section{The re-scaled model}

We first let
\label{section:AppB}
\begin{align}
dX^{\e}_t &= -\half \e V^{\e}_t dt+ \sqrt{\e} \sqrt{V^{\e}_t} dW_t  \nn \\
V^{\e}_t-V_0&=  \frac{\e^{\gm}}{\Gamma(\al)} \int_0^t (t-s)^{H-\half}  \lm(\theta-V^{\e}_s) ds  +\frac{\e^H}{\Gamma(\al)}\int_0^t (t-s)^{H-\half}  \nu \sqrt{V^{\e}_s}  dW_s \nn \\
&\eqd    \frac{\e^{\gm}}{\Gamma(\al)} \int_0^t (t-s)^{H-\half}  \lm(\theta-V^{\e}_s) ds  +\frac{\e^{H-\half}}{\Gamma(\al)}\int_0^t (t-s)^{H-\half}  \nu \sqrt{V^{\e}_s}  dW_{\e s} \nn \\
&=    \frac{\e^{\gm}}{\Gamma(\al)} \int_0^{\e t} (t-\frac{u}{\e})^{H-\half}  \lm(\theta-V^{\e}_{u/\e}) \frac{1}{\e}du  +\frac{\e^{H-\half}}{\Gamma(\al)}\int_0^{\e t} (t-\frac{u}{\e})^{H-\half}  \nu \sqrt{V^{\e}_{u/\e}}  dW_u \,.\nn
\end{align}
where we have set $u=\e s$.  Now set $V'_{\e t}=V^{\e}_t$.  Then
\begin{align}
V'_{\e t}-V_0&=\frac{\e^{\gm-1}}{\Gamma(\al)} \int_0^{\e t} (t-\frac{u}{\e})^{H-\half}  \lm(\theta-V'_u) du +\frac{\e^{H-\half}}{\Gamma(\al)}\int_0^{\e t} (t-\frac{u}{\e})^{H-\half}   \nu \sqrt{V'_{u}} \, dW_u \nn \\
 &=\frac{\e^{\gm-1}}{\e^{H-\half}\Gamma(\al)} \int_0^{\e t} (\e t-u)^{H-\half}  \lm(\theta-V'_u) du + \frac{\e^{H-\half}}{\e^{H-\half}\Gamma(\al)}\int_0^{\e t} (\e t-u)^{H-\half}  \nu \sqrt{V'_{u}} \, dW_u \nn \\
 &= \frac{1}{\Gamma(\al)}\int_0^{\e t} (\e t-u)^{H-\half} \lm(\theta-V'_u) \, du +
  \frac{1}{\Gamma(\al)}\int_0^{\e t} (\e t-u)^{H-\half}  \nu \sqrt{V'_{u}} \, dW_u\nn
\end{align}
where the last line follows on setting $\gm-1=H-\half$, i.e. $\gm=\al$.  Thus for this choice of $\gm$, $V_{\e (.)}\eqd V^{\e}_{(.)}$.


\renewcommand{\theequation}{C-\arabic{equation}}
\setcounter{equation}{0}
\section{Proof of monotonicity of the solution for a general class of Volterra integral equations}
\label{section:AppC}

\sk

\sk
Recall that $y(t)$ satisfies
\begin{equation}
    y(t)=\int_0^tK(t-s)G(y(s))ds \nn
\end{equation}
One can easily verify that the kernel used for the Rough Heston model satisfies the stated properties in Lemma \ref{lem:Benlemma}.

\sk

In the classical case $K(t)\equiv 1$ the integral eq clearly reduces to an ODE, and it is well known that the solution of this is at least continuously differentiable on the domain of existence.  In the following it will be assumed that the solution $y(t)$ is analytic for $t>0$. This is proved for the kernel relevant to the Rough Heston model in \cite{MF71} (Theorem 6), see also the end of page 14 in \cite{GGP19}.

\sk

What follows is a natural extension of the technique used in \cite{MW51} (Theorem 8).  Using the properties of convolution and differentiating under the integral sign, we have:
\begin{align}
y(t)&=\int_0^tK(t-s)G(y(s))ds=\int_0^tK(s)G(y(t-s))ds \\
y'(t)&=K(t)G(0)+\int_0^tK(s)G'(y(t-s))y'(t-s)ds \\
&= K(t)G(0)+\int_0^tK(t-s)G'(y(s))y'(s)ds
\end{align}
$G(0)>0$ so $y'(t)\rightarrow +\infty$  as $t\rightarrow 0^+$ and since $G(y)$ is increasing for $y\le y_0$ we have that $y'(t)>0$ until $y(t)$ reaches $y_0$ i.e. the solution increases.  For $y\ge y_0$, $G(y)$ is decreasing and suppose that $y(t)$ ceases to be increasing at some point. This implies (assuming a continuous derivative) the existence of a $t_0$ and an interval $I=[t_0,t_1]$ such that $y'(t_0)=0$ and $y'(t_1)<0$ for all $t_1\in I$ (if $y(t)$ and hence $y'(t)$ is analytic then the zeros of the derivative are isolated and a sufficiently small interval $I$ exists).  Using the integral equation for $y'(t)$:
\begin{align}
y'(t_0)&= K(t_0)G(0)+\int_0^{t_0}K(t_0-s)G'(y(s))y'(s)ds=0 \label{eq:annoying}\\
y'(t_1)&= K(t_1)G(0)+\int_0^{t_0}K(t_1-s)G'(y(s))y'(s)ds+\int_{t_0}^{t_1}K(t_1-s)G'(y(s))y'(s)ds \nn
\end{align}
We can re-write the kernels in the first and second terms of the expression for $y'(t_1)$ as:
\begin{align}
    K(t_1)=\frac{K(t_1)}{K(t_0)}K(t_0)\quad ,\quad    K(t_1-s)=\frac{K(t_1-s)}{K(t_0-s)}K(t_0-s)\nn
\end{align}
and we can easily check that the quotient in the second expression here decreases monotonically from $K(t_1)/K(t_0)$ to zero.\\
\\
By the mean value theorem for definite integrals there exists a $\tau\in (0,t_0)$ such that:
\begin{align}
\int_0^{t_0}\frac{K(t_1-s)}{K(t_0-s)}K(t_0-s)G'(y(s))y'(s)ds&=\frac{K(t_1-\tau)}{K(t_0-\tau)}\int_0^{t_0}K(t_0-s)G'(y(s))y'(s)ds \nn \\
&=-\frac{K(t_1-\tau)}{K(t_0-\tau)}K(t_0)G(0)
\end{align}
\nind where the second equality follows from \eqref{eq:annoying}. Substituting this into our expression for $y'(t_1)$:
\begin{align}
    y'(t_1)&= \frac{K(t_1)}{K(t_0)}K(t_0)G(0)+\frac{K(t_1-\tau)}{K(t_0-\tau)}\int_0^{t_0}K(t_0-s)G'(y(s))y'(s)ds+\int_{t_0}^{t_1}K(t_1-s)G'(y(s))y'(s)ds \nonumber\\
    &= K(t_0)G(0)\underbrace{(\frac{K(t_1)}{K(t_0)}-\frac{K(t_1-\tau)}{K(t_0-\tau)})}_{>0}+\int_{t_0}^{t_1}K(t_1-s)\underbrace{G'(y(s))y'(s)}_{>0}ds
    > 0
\end{align}
and we have used \eqref{eq:annoying} in the second line.  But this is a contradiction so the solution remains increasing.\\ \\
As discussed elsewhere in this paper, when studying the Rough Heston model, the non-linearity in the integral equation has the generic form $G(y)=(y-\theta_1)^2+\theta_2$ i.e. a quadratic with positive leading coefficient (for simplicity set to 1 here) and minimum of $\theta_2$ obtained at $y=\theta_1$. Depending on the values of $\{\theta_1,\theta_2\}$ the following cases due to \cite{GGP19} are distinguished:
\begin{itemize}
    \item (C) $G(0)>0$, $\theta_1>0$ and $\theta_2<0$
    \item (D) $G(0)\leq 0$
\end{itemize}

Case C is already in the form considered here with $y_0=0$. In case D,
applying the transformation $y(t)\rightarrow -y(t)$ and $-G(-y(t))\rightarrow G(y(t))$ (reflecting in the $x$ and then $y$ axis) yields a function $G(y)$ which is a quadratic with negative leading coefficient and thus increases until it reaches it's maximum after which it decreases which is of the type considered here.


\renewcommand{\theequation}{D-\arabic{equation}}
\setcounter{equation}{0}
\section{Appendix D}
\label{section:AppD}

From Theorem 13.1.1 ii) in \cite{GLS90}, the unique solution $\psi^{(\al)}$ to
\bq
\psi^{(\al)}(t) =
\int_0^t c_{\al}(t-s)^{\al-1} (f(s)+ \half \nu^2 \psi^{(\al)}(s)^2 )  ds\,\nn
\eq
tends pointwise to the solution of
\bq
\psi_{\half}(t) =
\int_0^t c_{\half}(t-s)^{-\half} (f(s)+ \half \nu^2 \psi_{\half}(s)^2 )  ds\,.\nn
\eq
which is also unique by e.g. Theorem 3.1.4 in \cite{Brun17}.
Now consider any sequence $f_{\e} \in \mc{S}$ with $\|f_{\e} \|_{m,j}\to 0$ as $\e\to 0$ for all $m,j \in \mathbb{N}_0^n$ for any $n \in \mathbb{N}$ (i.e. under the Schwartz space semi-norm defined in Eq 1 in \cite{BDW17}).  Then the convergence here implies in particular that $f_{\e}$ tends to $f$ pointwise.   Then from Theorem 13.1.1. in \cite{GLS90}, the unique solution $\psi_{\e}$ to
\bq
\psi_{\e}(t) =
\int_0^t c_{\half}(t-s)^{-\half} (f_{\e}(s)+ \half \nu^2 \psi_{\e}(s)^2 )  ds\,\nn
\eq
tends pointwise to the solution to
\bq
\psi_{0}(t) =
\int_0^t c_{\half}(t-s)^{-\half} \half \nu^2 \psi_{0}(s)^2 ds\,\nn
\eq
which is zero.
Then from L\'{e}vy's continuity theorem for generalized
random fields in the space of tempered distributions (see Theorem 2.3 and Corollary 2.4 in \cite{BDW17}),
we obtain the stated result.

\end{document}